\title{Computation of Hadwiger Number and Related Contraction Problems: Tight Lower Bounds} 
\titlerunning{Tight Lower Bounds on the Computation of Hadwiger Number} 
\author{Fedor V. Fomin}{University of Bergen, Bergen, Norway, fomin@ii.uib.no}{}{}{Research Council of Norway via the project MULTIVAL.}
\author{Daniel Lokshtanov}{University of California, Santa Barbara, USA, daniello@ucsb.edu}{}{}{ European Research Council (ERC) under the European Union’s Horizon
2020 research and innovation programme (grant no. 715744), and United States - Israel Binational Science Foundation grant no. 2018302.\begin{minipage}{0.1\textwidth}
    \begin{center}
        \includegraphics[scale=0.5]{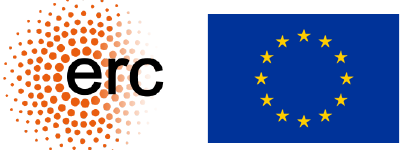}
    \end{center}
\end{minipage}}
\author{Ivan Mihajlin}{University of California, San Diego, California, USA}{imikhail@cs.ucsd.edu}{}{}
\author{Saket Saurabh}{Department of Informatics, University of Bergen, Norway, and The Institute of Mathematical Sciences, HBNI and IRL 2000 ReLaX, Chennai, India, saket@imsc.res.in}{}{}{European Research Council (ERC) under the European Union’s Horizon 2020 research and innovation programme (grant no. 819416), and Swarnajayanti Fellowship
grant DST/SJF/MSA-01/2017-18.\begin{minipage}{0.1\textwidth}
    \begin{center}
        \includegraphics[scale=0.5]{ERCandEU.pdf}
    \end{center}
\end{minipage}}
\author{Meirav Zehavi}{Ben-Gurion University of the Negev, {Beer-Sheva, Israel}, meiravze@bgu.ac.il}{}{}{Israel Science Foundation grant no. 1176/18, and United States – Israel Binational Science Foundation grant no. 2018302.}
\authorrunning{F.V. Fomin, D. Lokshtanov, I. Mihajlin, S. Saurabh and M. Zehavi} 
\keywords{Hadwiger Number, Exponential-Time Hypothesis, Exact Algorithms, Edge Contraction Problems} 
\begin{document}

\maketitle

\begin{abstract}
We prove that  the Hadwiger number of an $n$-vertex graph $G$ (the maximum size of a clique minor in $G$) cannot be computed in time $n^{o(n)}$, unless the Exponential Time Hypothesis (ETH) fails. This resolves a well-known open question in the area of exact exponential algorithms. 
The technique developed for resolving  the Hadwiger number problem has a wider applicability. We use it  to rule out the existence of $n^{o(n)}$-time algorithms (up to ETH) for  a large class of computational problems  concerning edge contractions in graphs. 
\end{abstract}

\newpage

%!TEX root = mainClique.tex
  \newcommand{\GM}{\textsc{Graph Minor}\xspace}    
    \newcommand{\TGM}{\textsc{Topological Graph Minor}\xspace}  
    \newcommand{\cOs}{\mathcal{O}^*}
\section{Introduction}\label{sec:intro}

The \emph{Hadwiger number} $h(G)$ of a  graph $G$ is  the largest number $h$ for which the complete graph $K_h$ is a minor of $G$. Equivalently, $h(G)$ is 
the maximum  size of the largest complete graph that can be obtained from $G$  by contracting edges. It is named after Hugo Hadwiger, who conjectured    in 1943 
that the Hadwiger number of $G$ is always at least as large as its  chromatic number.
According to Bollob\'{a}s,   Catlin, and Erd\H{o}s, this conjecture remains  ``one of the deepest unsolved problems in graph theory'' \cite{MR593989}.

%In this paper we prove the following conditional lower bound on the running time required to compute the Hadwiger number of a graph.  
The Hadwiger number of an $n$-vertex graph $G$ can be easily computed in time $n^{\cO(n)}$ by  brute-forcing through all possible partitions  of the vertex set of $G$ into connected sets, contracting each set into one vertex and checking whether the resulting graph is a complete graph.   The question whether 
the Hadwiger  number of a graph can be computed in   single-exponential  $2^{\cO(n)}$ time 
was previously asked in  \cite{agrawal2017split,DBLP:journals/jacm/CyganFGKMPS17,LingasW09}. 
Our main result  provides a negative answer to this  open question. 

\begin{theorem}\label{thm:informalmain} Unless the Exponential Time Hypothesis (ETH) is false, there does not exist an algorithm computing the Hadwiger number of an $n$-vertex graph in time $n^{o(n)}$. 
\end{theorem}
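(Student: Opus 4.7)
The natural route is to reduce from \textsc{3-SAT} on $n$ variables (with $\cO(n)$ clauses, by the Sparsification Lemma), which under ETH admits no $2^{o(n)}$-time algorithm, to the \textsc{Hadwiger Number} problem on an $N$-vertex graph with $N = \Theta(n/\log n)$. A Hadwiger algorithm running in $N^{o(N)}$ time would then solve 3-SAT in $(n/\log n)^{o(n/\log n)} = 2^{o(n)}$ time, contradicting ETH. Thus the theorem follows from any polynomial-time reduction achieving this $\Theta(1/\log n)$ compression ratio in the vertex count.

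To achieve the compression I would follow the slightly-super\-exponential template of Lokshtanov, Marx, and Saurabh. Partition the $n$ variables into $r = \Theta(n/\log n)$ blocks of $\log n$ variables, each admitting $n$ possible truth assignments. For each block $B_i$ introduce a compact \emph{selection gadget}: a $\cO(1)$-vertex subgraph of $G$ whose role as a branch set in a target $K_h$ minor can be played in exactly $n$ topologically distinct ways, each encoding one assignment to $B_i$. For every pair of blocks $(B_i,B_j)$ add a consistency structure forcing the branch sets representing $B_i$ and $B_j$ to be adjacent in $G$ if and only if the chosen assignments jointly satisfy every clause whose variables all lie in $B_i \cup B_j$. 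Since a $K_h$ minor requires every pair of branch sets to be adjacent, the presence of the minor then pins down a globally consistent, hence satisfying, assignment, while conversely a satisfying assignment activates the intended branch sets.

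Two aspects require care. First, each selection gadget must use only $\cO(1)$ vertices yet expose $n$ distinguishable internal states; this is delicate but possible because a branch set is an arbitrary connected subset of the gadget, and a constant-size structure can harbour polynomially many such subsets. Second, and more fundamentally, we must prevent \emph{fraudulent} branch-set assignments that do not correspond to any legitimate choice of block-assignments. Because the Hadwiger number is the maximum over all partitions into connected branch sets, the adversary controls the entire minor model, so the construction must be rigid enough that the only way to realize $K_h$ is through the intended encoding. This is the main obstacle, and it likely requires the combined use of planted cliques that cannot be merged, forced ``meeting paths'' each of which must be traversed by a unique branch set, and careful accounting to set $h$ exactly so that no slack remains in which cheating models could hide.

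Once the gadgets are in place, the forward direction (satisfying assignment gives a $K_h$ minor) is a direct construction by reading off the encoded branch sets, while the reverse direction is a case analysis showing that the structural constraints of $K_h$ force the realized branch sets to encode a consistent assignment. The same framework should then transfer, with only the consistency gadgets redesigned, to the other edge-contraction problems promised in the abstract, since the only property of $K_h$ used is that all pairs of branch sets must be adjacent.
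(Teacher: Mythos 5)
There is a genuine gap, and it is fatal to the approach as stated. You want $r=\Theta(n/\log n)$ blocks and a total of $N=\Theta(n/\log n)$ vertices, which forces each selection gadget to have $\cO(1)$ vertices. But a graph on $c$ vertices has at most $2^c$ vertex subsets (and fewer connected ones), so a constant-size gadget can offer only a constant number of candidate branch sets, not the $n=2^{\log n}$ you need to encode all assignments of a block. The sentence ``a constant-size structure can harbour polynomially many such subsets'' is simply false. If you instead enlarge each gadget to $\Omega(\log n)$ vertices so that it can host $n$ distinguishable connected subsets, the total vertex count becomes $\Theta(n)$ and the $\Theta(1/\log n)$ compression---and with it the contradiction to ETH---evaporates. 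So there is no choice of gadget size that simultaneously (a) gives $N=\Theta(n/\log n)$ and (b) lets a single gadget distinguish $n$ assignments. The paper escapes exactly this counting trap by not encoding assignments inside constant-size gadgets at all: it starts from Cygan et al.'s hardness of {\sc List Subgraph Homomorphism}, where the source graph $\widetilde{G}$ is a \emph{quotient} of the original instance with $n/r$ vertices while the target graph $\widetilde{H}$ is allowed to be the much larger side (roughly $\gamma^r$ vertices), and $r\approx\log n$ is tuned so both sides land at $\Theta(n/\log n)$. The ``$n$ states per block'' are therefore represented as $\Theta(n/\log n)$ distinct target vertices $(R,l)$, not as $n$ internal states of a tiny gadget.

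Beyond the counting issue, the part you flag as ``the main obstacle''---ruling out fraudulent minor models---is not addressed, and it is in fact the crux. The paper sidesteps the full generality of minor models in two steps that your plan lacks. First, it proves hardness not for {\sc Hadwiger Number} directly but for {\sc Clique Contraction} (only edge contractions, no deletions), and more specifically for a highly rigid {\sc Structured Clique Contraction} built from an intermediate {\sc Cross Matching} problem; Lemma~\ref{lem:propNoisyStructured} then pins down the structure of any solution to be a perfect matching across two designated sides, precisely the ``rigidity'' you say is needed but do not construct. Second, it passes from {\sc Clique Contraction} to {\sc Hadwiger Number} by the elementary observation (Corollary~\ref{res:HadwigerNumber}) that on a connected graph one can always trade vertex deletions for edge contractions, so a $K_h$ minor of a connected $n$-vertex graph is witnessed by at most $n-h$ contractions alone. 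Additionally, the proper colorings carried through {\sc Prop-Col LSI} are what prevent ``cross'' edges from filling in missing adjacencies in {\sc Cross Matching} (Claim~\ref{claim:CrossMatching}); your pairwise consistency structures have no analogous mechanism, and adding a fresh $\cO(1)$-vertex structure per pair of blocks would contribute $\Theta\bigl((n/\log n)^2\bigr)$ vertices by itself, again breaking the vertex budget. In short, the high-level compression target is right, but both the gadget size calculation and the rigidity argument are missing, and they are exactly what the paper's chain of reductions is built to supply.
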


%
%Besides the fact that the 
%
%The question 
%Theorem~\ref{thm:informalmain} resolves a long-standing open questionprovides answers to a number of open questions from the literature. The question whether computing the Hadwiger  number of a graph can be done in single-exponential  $2^{\cO(n)}$ time 
%was asked explicitly in  \cite{agrawal2017split,DBLP:journals/jacm/CyganFGKMPS17,LingasW09}.
% 
% deciding if an $n$-vertex graph $G$ contains a complete graph a
%Cygan et al.  \cite{DBLP:journals/jacm/CyganFGKMPS17,LingasW09} Agrawal et al.  \cite{agrawal2017split}: Deciding if graph $H$  can be obtained from graph $G$ only by edge-contractions. Deciding if $H$ is a minor of a graph $G$ for the special case when $H$ is a clique.  \cite{DBLP:journals/jacm/CyganFGKMPS17,LingasW09}
% 

 The  interest in the complexity of the Hadwiger number  is naturally explained by  the recent developments in the area of exact exponential algorithms, that is, algorithms 
solving intractable problems significantly faster than the trivial exhaustive search, though still in exponential time \cite{Fomin:2010mo}. Within the last decade, significant progress  on  upper and lower bounds of exponential algorithms has been achieved.  Drastic improvements over  brute-force algorithms were obtained for a number of fundamental problems like \textsc{Graph Coloring}~\cite{BjorklundHK2009-Se} and  \textsc{Hamiltonicity}~\cite{Bjorklund10}.
%,  or \textsc{Clique}~\cite{FominGK09-acm}. 
  On the other hand, by making use of the ETH, lower bounds could be obtained for 
 \textsc{2-CSP} \cite{T2008} or for \textsc{Subgraph Isomorphism} and \textsc{Graph Homomorphism}~\cite{DBLP:journals/jacm/CyganFGKMPS17}. 
 
 \medskip
 \GM  (deciding whether a graph $G$ contains a graph $H$ as a minor) is a fundamental  problem in graph theory and graph algorithms.   \GM could be seen as special case of a general  graph embedding problem where one wants to embed  a graph $H$ into graph $G$. In what follows we will use $n$ to denote the number of vertices in $G$ and $h$ to denote the number of vertices in $H$. 
 By the  theorem of 
 Robertson and Seymour 
\cite{RobertsonS-GMXIII},  there exists a computable function $f$ and an algorithm that, for  given graphs $G$ and $H$,
checks in time  $f(h) \cdot n^3$   whether $H$ is a minor of $G$.  Thus the problem is fixed-parameter tractable (FPT) being parameterized by $H$.   On the other hand, Cygan et al.  \cite{DBLP:journals/jacm/CyganFGKMPS17} proved that   unless the ETH fails,  this problem cannot be solved in time $n^{o(n)}$ even  in the case when  $|V(G)|=|V(H)|$.  
%However, when the size of the graph $H$ is not constant, nothing beyond a brute-force algorithm trying all possible partitions of a vertex set of $G$ was known.  %The existence of a bette
 Other interesting embedding problems that are strongly related to \GM include the following problems.  
 \begin{itemize}
 \item \textsc{Subgraph Isomorphism}: Given two graphs $G$ and $H$, decide whether $G$ contains a subgraph isomorphic to $H$.  This problem cannot be solved in time $n^{o(n)}$ when $|V(G)|=|V(H)|$, unless the ETH fails~\cite{DBLP:journals/jacm/CyganFGKMPS17}.   
 In the special case  called \textsc{Clique},   when $H$ is a clique, a brute-force algorithm checking for every vertex subset of $G$ whether it is a clique of size $h$ solves the problem in time $n^{\cO(h)}$. The same algorithm also runs in single-exponential time $\cO(2^n n^2)$.
 It is also known that  \textsc{Clique}  is W[1]-hard parameterized by $h$ and cannot be solved in time 
  $f(h) \cdot n^{o(h)}$ for any function $f$ unless the ETH fails~\cite{DowneyF99,Chen2005216}. 
 
  \item \textsc{Graph Homomorphism}: Given two graphs $G$ and $H$, decide whether there exists a homomorphism from $G$ to $H$. (A {\em homomorphism} $G\to H$ from an undirected graph $G$ to an undirected graph $H$ is a mapping
from the vertex set of $G$ to that of $H$ such that the image of every edge of $G$ is an edge of $H$.) 
 This problem is trivially solvable in time $h^{\cO(n)}$, and  an algorithm of running time $h^{o(n)}$ for this problem would yield the failure of the ETH \cite{DBLP:journals/jacm/CyganFGKMPS17}. However, for the special case of $H$ being a clique,   \textsc{Graph Homomorphism} is equivalent to  \textsc{$h$-Coloring} (deciding whether the chromatic number of $G$ is at most $h$), and thus is solvable in single-exponential time $2^{n} \cdot n^{\cO(1)}$~\cite{BjorklundHK2009-Se,Lawler76}.  When the graph $G$ is a complete graph,  the problem is equivalent to finding a clique of size $n$ in $H$, and then is solvable in time $2^{h} \cdot h^{\cO(1)}$.
   % \item \textsc{Graph Minor}: 
       \item \textsc{Topological Graph Minor}: Given two graphs $G$ and $H$, decide whether $G$ contains $H$ as a topological minor. (We say that a graph $H$ is a subdivision of a graph $G$ if $G$ can be obtained from $H$ by contracting only edges incident with at least one vertex of degree two. 
A~graph $H$ is called a \emph{topological minor} of a graph $G$  if a subdivision of $H$ is isomorphic to a subgraph of $G$.)
This problem is, perhaps, the closest ``relative'' of \GM.   Grohe et al.~\cite{DBLP:conf/stoc/GroheKMW11} gave an algorithm of  running time  $f(h) \cdot n^3$ for this problem for some 
  computable function $f$.   
Similar to \GM and \textsc{Subgraph Isomorphism}, this problem cannot be solved in time $n^{o(n)}$ when $|V(G)|=|V(H)|$, unless the ETH fails \cite{DBLP:journals/jacm/CyganFGKMPS17}.   However for the special case of the problem with $H$ being a complete graph, Lingas and Wahlen 
\cite{LingasW09} gave a single-exponential algorithm solving the problem in  time $2^{\cO(n)}$.

\end{itemize}
 
 Thus all the above graph embedding ``relatives'' of \GM are solvable in single-exponential time  when graph $H$ is a clique.  However, from the perspective of exact exponential algorithms,  Theorem~\ref{thm:informalmain} implies that 
finding the largest  clique minor is the most 
  difficult problem out of them all. This is why we find the lower bound provided by Theorem~\ref{thm:informalmain}   surprising. 
 Moreover, from the  perspective of parameterized complexity, finding a clique minor of size $h$, 
which is FPT, is actually easier than  finding a clique (as a subgraph) of size $h$, which is W[1]-hard,  as well as from   finding an {$h$-coloring} of a graph, which is para-NP-hard.

Theorem~\ref{thm:informalmain} also answers another  question of 
Cygan et al.~\cite{DBLP:journals/jacm/CyganFGKMPS17}, who asked 
whether 
deciding if a graph $H$  can be obtained from a graph $G$ only by edge contractions,  could be resolved in single-exponential time.  By Theorem~\ref{thm:informalmain}, the existence of such an algorithm is highly unlikely even when the graph $H$ is a complete graph. Moreover, the technique developed to prove Theorem~\ref{thm:informalmain}, appears to be extremely useful to rule out the existence of  $n^{o(n)}$-time algorithms for various contraction problems. We formalize our results with the following {\sc $\cal F$-Contraction} problem. Let  $\cal F$ be a graph class. 
Given a graph $G$ and $t\in\mathbb{N}$, the task is to decide whether there exists a subset $F\subseteq E(G)$ of size at most $t$ such that $G/F\in {\cal F}$ (where $G/F$ is the graph obtained from $G$ by contracting the edges in $F$). We prove that in each of  the cases of {\sc $\cal F$-Contraction} where $\cal F$ is the family of chordal graphs, interval graphs, proper interval graphs, threshold graphs, trivially perfect graphs, split graphs, complete split graphs and perfect graphs, unless the ETH fails,  {\sc $\cal F$-Contraction} is not solvable in time  $n^{o(n)}$. For lack of space, these results are relegated to Appendix~\ref{sec:contractionToClasses}.

\begin{figure}[t]
 \center{\includegraphics[scale=0.6]{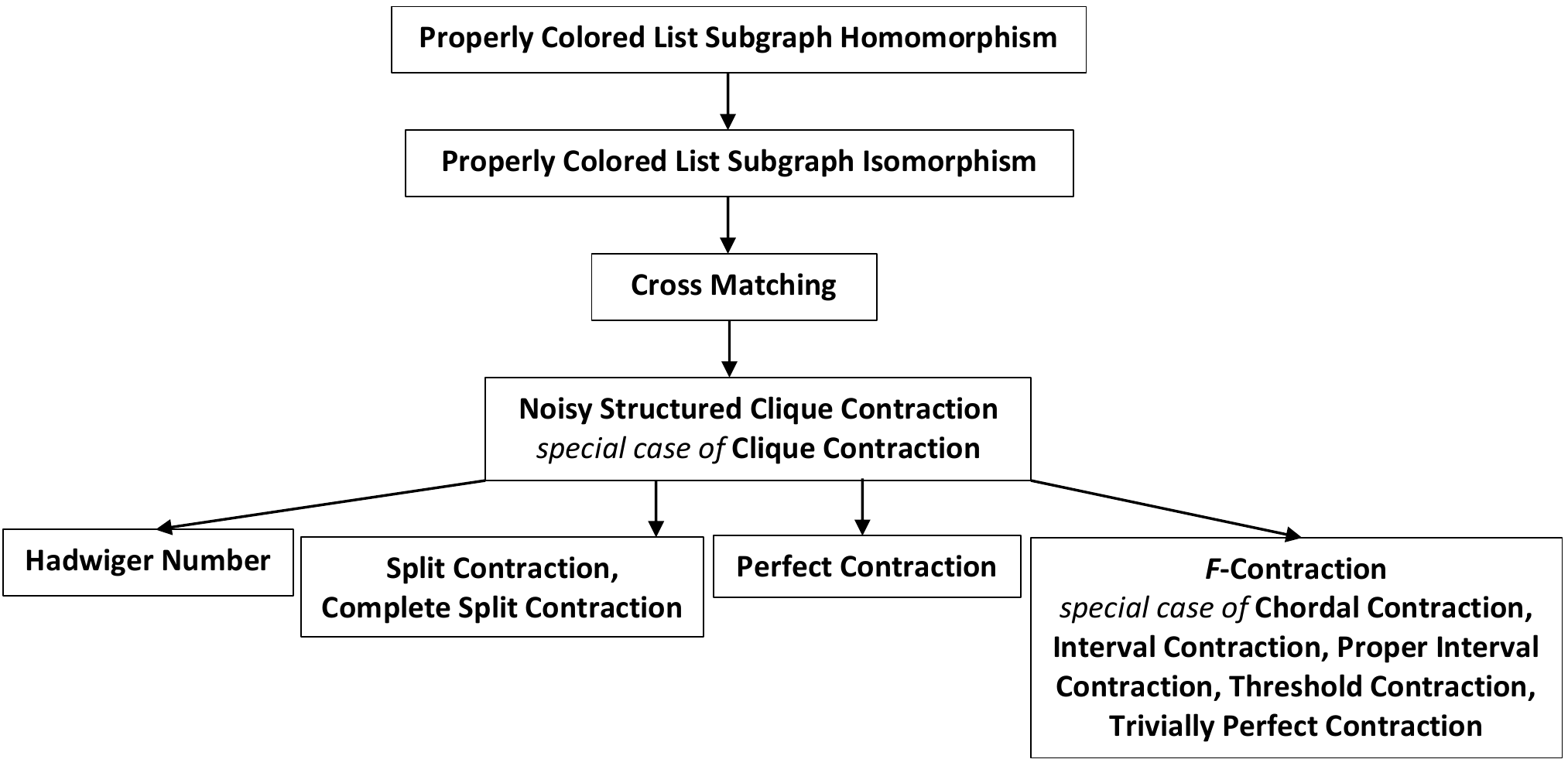}}
\caption{\label{fig:outline} A summary of the problems considered in this paper, and the reductions between them.}
\end{figure}

\subparagraph{Technical Details.} A summary of the reductions presented in this paper is given in Fig.~\ref{fig:outline}.  To prove our lower bounds, we first revisit the proof  of 
Cygan et al.~\cite{DBLP:journals/jacm/CyganFGKMPS17} for the ETH-hardness of a problem called {\sc List Subgraph Isomorphism}. Informally, in this problem we are given two graphs $G$ and $H$ on the same number of vertices, as well as a list of vertices in $H$ for each vertex in $G$, and we need to find a copy of $G$ in $H$ so that each vertex $u$ in $G$ is mapped to a vertex $v$ in $H$ that belongs to its list (i.e.~$v$ belongs to the list of $u$). We prove that the instances produced by the reduction (after some modification) of~\cite{DBLP:journals/jacm/CyganFGKMPS17}  have a very useful property that we crucially exploit later. Specifically, we construct a proper coloring of $G$ as well as a proper coloring of $H$, and show that every vertex $v$ in $H$ that belongs to the list of some vertex $u$ is, in fact, of the same color as $u$.

Having proved the above, we turn to prove the ETH-hardness of a special case of {\sc Clique Contraction} where the input graph is highly structured. To this end, we introduce an intermediate problem called {\sc Cross Matching}. Informally, in this problem we are given a graph $L$ with a partition $(A,B)$ of its vertex set, and need to find a perfect matching between $A$ and $B$ whose contraction gives a clique. To see the connection between this problem and {\sc List Subgraph Isomorphism}, think of the subgraph of $L$ induced by one side of the partition---say, $A$---as a representation of the {\em complement} of $G$, and the subgraph of $L$ induced by the other side of the partition as a representation of $H$. Then, the edges that go across $A$ and $B$ in a perfect matching can be thought of as a mapping of the vertices of $G$ to the vertices of $H$.  The crossing edges of $L$ are easily defined such that necessarily a vertex of $G$ can only be matched to a vertex in its list. In particular, we would like to enforce that every ``non-edge'' of the complement of $G$ (which corresponds to an edge of $G$) would have to be mapped to an edge of $H$ in order to obtain a clique.  However, the troublesome part is that non-edges of the complement of $G$ may also be ``filled'' (to eventually get a clique) using crossing edges rather than only edges of $H$.  To argue that this critical issue does not arise, we crucially rely on the proper colorings of~$G$~and~$H$.

Now, for the connection between {\sc Cross Matching} and {\sc Clique Contraction}, note that a solution to an instance of {\sc Cross Matching} is clearly a solution to the instance of {\sc Clique Contraction} defined by the same graph, but the other direction is not true. By adding certain vertices and edges to the graph of an instance of {\sc Cross Matching}, we enforce all solutions to be perfect matchings between $A$ and $B$. In particular, we construct the instances of {\sc Clique Contraction} in a highly structured manner that allows us to derive not only the ETH-hardness of {\sc Clique Contraction} itself, but to build upon them and further derive ETH-hardness for a wide variety of other contraction problems. In particular, we show that the addition of ``noise'' (that is, extra vertices and edges) to any structured instance of {\sc Clique Contraction} has very limited effect. Roughly speaking, we show that the edges in the ``noise'' and the edges going across the ``noise'' and core of the graph (that is, the original vertices corresponding to the structured instance of {\sc Clique Contraction}) are not ``helpful'' when trying to create a clique on the core (i.e.~it is not helpful to try to use these edges in order to fill non-edges between vertices in the core). Depending on the contraction problem at hand, the noise is slightly different, but the proof technique stays the same---first showing that the core must yield a clique, and then using the argument above (in fact, in all cases but that of perfect graphs, we are able to invoke the argument as a black box) to show that the noise is, in  a sense,~irrelevant.

\medskip
\noindent{\bf Preliminaries.} As we only use standard notations, we relegate them to Appendix~\ref{sec:prelims}.
%!tex root=mainClique.tex

\section{Lower Bound: {\sc Prop-Colored List Subgraph Isomorphism}}\label{sec:listSI}

In this section we  build upon the work of Cygan et al.~\cite{DBLP:journals/jacm/CyganFGKMPS17} and show a lower bound for a problem called {\sc Properly Colored List Subgraph Isomorphism (Prop-Col LSI)}. Intuitively, {\sc Prop-Col LSI} is a variant of {\sc Spanning Subgraph Isomorphism} where given two graphs $G$ and $H$, we ask whether $G$ is isomorphic to some spanning subgraph of $H$. The input to the variant consists also of proper colorings of $G$ and $H$ and an additional labeling of vertices in $G$ by subsets of vertices in $H$ of the same color, so that each vertex in $G$ can be mapped only to vertices in $H$ contained in its list.
Formally, it is defined as~follows.

\defproblem{{\sc Properly Colored List Subgraph Isomorphism (Prop-Col LSI)}}{Graphs $G$ and $H$ %where $|V(G)|=|V(H)|$,
with proper colorings $c_G: V(G)\rightarrow \{1,\ldots,k\}$ and $c_H: V(H)\rightarrow \{1,\ldots,k\}$ for some $k\in\mathbb{N}$,  respectively, and a function $\ell: V(G)\rightarrow 2^{V(H)}$ such that for every $u\in V(G)$ and $v\in \ell(u)$, $c_G(u)=c_H(v)$.}{Does there exist a bijective function $\varphi: V(G)\rightarrow V(H)$ such that {\em (i)} for every $\{u,v\}\in E(G)$, $\{\varphi(u),\varphi(v)\}\in E(H)$, and {\em (ii)} for every $u\in V(G)$, $\varphi(u)\in \ell(u)$?}

Notice that as the function $\varphi$ above is bijective rather than only injective, we seek a spanning subgraph. Our objective is to prove the following statement.

\begin{lemma}\label{res:PropColLSI}
Unless the ETH is false, there does not exist an algorithm that solves {\sc Prop-Col LSI} in time $n^{o(n)}$ where $n=|V(G)|$.
\end{lemma}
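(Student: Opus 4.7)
The plan is to revisit the ETH-hardness reduction of Cygan et al.~\cite{DBLP:journals/jacm/CyganFGKMPS17} for {\sc List Subgraph Isomorphism}, and to argue that, with a mild modification, its output already carries natural proper colorings $c_G$ and $c_H$ consistent with the list function $\ell$. Recall that their reduction starts from $3$-SAT on $N$ variables and $M$ clauses and, via the grouping/partition-number technique, produces an instance $(G,H,\ell)$ of {\sc List Subgraph Isomorphism} with $n:=|V(G)|=|V(H)|$ vertices such that an $n^{o(n)}$-time algorithm would violate ETH. My first step is to open up that construction and identify, for every vertex $u\in V(G)$, a combinatorial ``role'' or ``type'' $\tau(u)$ that it plays inside its gadget (variable gadget, clause gadget, connector, boundary vertex, etc.), and similarly a role $\tau(v)$ for every $v\in V(H)$. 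Because the gadgets are built so that only roles matching up ``make sense,'' we will already have the invariant that $v\in\ell(u)$ implies $\tau(u)=\tau(v)$.

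The second step is to refine these roles into a proper coloring. Define $c_G(u):=\tau(u)$ and $c_H(v):=\tau(v)$, and verify that in both $G$ and $H$ no edge joins two vertices with the same role. In the Cygan et al.\ construction, roles of adjacent vertices within a gadget are distinct by design (edges connect a variable slot to its clause slot, a connector to its endpoint, etc.); if any two vertices of the same role happen to be adjacent, I would split that role by adding a side-index coordinate to $\tau$, breaking symmetry inside the offending gadget. Since each gadget has size $O(1)$, the total number of colors $k$ is bounded by a constant, and the compatibility condition $c_G(u)=c_H(v)$ for $v\in\ell(u)$ is preserved, as the refinement can be made uniformly on both sides of the reduction.

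If the bare reduction does not immediately satisfy properness (for instance, if two same-role vertices are adjacent through some long-range edge used to enforce global consistency), I would perform the standard trick of subdividing such edges once and assigning the new midpoint its own fresh color, both in $G$ and in $H$, and extending $\ell$ by mapping each new midpoint in $G$ to the set of new midpoints in $H$ of that color. This adds only $O(|E(G)|)=O(n^2)$ vertices, so an algorithm running in $n^{o(n)}$ on the resulting {\sc Prop-Col LSI} instance would still solve the original {\sc List Subgraph Isomorphism} instance in $(n^2)^{o(n^2)}=n^{o(n^2)}$ time; but a closer accounting—using the fact that in the Cygan et al.\ instance $|E(G)|=O(n)$ when the reduction is tuned as in their Theorem on spanning subgraph isomorphism—keeps the blow-up linear and preserves $n^{o(n)}$ exactly.

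The main obstacle I expect is exactly this accounting: I must guarantee that introducing the coloring (and, if needed, the subdivision refinement) does not inflate $n$ beyond a linear factor, since otherwise the $n^{o(n)}$ lower bound degrades. Handling this requires inspecting the Cygan et al.\ gadgetry carefully, in particular checking that the number of ``role refinements'' needed for properness is absolute (independent of $N,M$), and that any added auxiliary vertices can be packaged into $O(1)$-size gadgets attached to existing gadgets rather than being proliferated across the instance. Once this linear-size guarantee is in place, the contrapositive is immediate: an $n^{o(n)}$ algorithm for {\sc Prop-Col LSI} yields an $n^{o(n)}$ algorithm for {\sc List Subgraph Isomorphism}, contradicting the Cygan et al.\ lower bound under ETH.
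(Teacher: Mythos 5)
Your high-level plan—revisit the Cygan et al.\ reduction and argue that it secretly carries compatible proper colorings of $G$ and $H$—is the same strategy the paper takes, and your intuition that each vertex has a ``type'' that is automatically matched by the lists is a correct read of the situation. However, the mechanism you propose for realizing that intuition does not match how the Cygan et al.\ construction actually works, and several of the concrete steps you propose would either fail or are unnecessary.

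The key misconception is your picture of the reduction as a gadget-based construction with $O(1)$-size gadgets and a constant number of roles. Cygan et al.\ reduce not from $3$-SAT but from $3$-Coloring on $n$-vertex, degree-$4$ graphs, and the heart of the reduction is a \emph{grouping/compression} step (their Lemma~3.2): vertices of the original graph are partitioned into blocks, and the block graph $\widetilde{G}$ comes with a coloring $\widetilde{c}$ using $L = \lambda(d)r = \Theta(\log n)$ colors that is a proper coloring of $\widetilde{G}^2$ (the square). The target graph $\widetilde{H}$ has vertices $(R,l)$ with $R \in \{0,1,2,3\}^L$ and $l \in [L]$, so it is exponential in $r$, not built from $O(1)$-size gadgets. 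The number of colors you need is therefore $\Theta(\log n)$, not constant, and there is no ``splitting a role into a side-indexed variant'' to be done: the correct coloring of $\widetilde{H}$ is simply $c_{\widetilde H}((R,l)) = l$, i.e.\ the second coordinate, and this choice is forced by the lists (since $(R,l) \in \ell(B)$ implies $\widetilde c(B)=l$). Your ``role'' idea is precisely this second coordinate, but you would not be able to discover it by inspecting gadgets, because there are no gadgets.

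The genuinely nontrivial point—which your proposal does not engage with—is proving that $c_{\widetilde H}$ is a \emph{proper} coloring of $\widetilde H$. It is not proper on all of $\widetilde H$; the paper's ``minor crucial modification'' is to restrict $\widetilde H$ to the induced subgraph $\widetilde H'$ on those $(R,l)$ that actually occur in some list $\ell(B)$, and then to argue properness on $\widetilde H'$ by exploiting the fact that $\widetilde c$ is proper on $\widetilde G^2$: if two list-appearing vertices $(R,l)$ and $(R',l')$ are adjacent, chasing the definitions of $\phi_B$ and $E(\widetilde H)$ shows that the corresponding blocks are adjacent in $\widetilde G$, hence $l \neq l'$. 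This is a genuinely delicate argument that your ``add a side index,'' ``subdivide the offending edge'' repair kit does not address, and the subdivision idea in particular is problematic here: LSI requires a bijection between equally sized vertex sets, so subdividing edges on the $G$ side forces a matching modification on the $H$ side, and a subdivision midpoint has degree~$2$, so it can only be mapped to a degree-$2$ midpoint of a \emph{single specific} edge of $H$—preserving the YES/NO equivalence is not routine, and you would also need a fresh argument that the enlarged $\widetilde H$ can still be made to match $|V(\widetilde G)|$. Finally, you do not discuss the LSH-to-LSI step at all: Cygan et al.'s reduction from homomorphism to isomorphism is a Turing reduction producing exponentially many instances (one per vector of multiplicities), and the paper has to verify that this Turing reduction extends the proper colorings—another step your plan omits. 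In short, the destination is right, but the map you are using describes a different landscape than the one the Cygan et al.\ reduction actually lives in.
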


In \cite{DBLP:journals/jacm/CyganFGKMPS17}, the authors considered the two problems defined below. Intuitively, the second is defined as {\sc Prop-Col LSI} when no proper colorings of $H$ and $G$ are given (and hence the labeling of vertices in $G$ is not restricted accordingly); the first is defined as the second when we seek a homomorphism rather than an isomorphism (i.e., the sought function $\varphi$ may not be injective) and also $|V(G)|$ may not be equal to $|V(H)|$ (thus $\varphi$ may neither be onto).

\defproblem{{\sc List Subgraph Homomorphism (LSH)}}{Graphs $G$ and $H$, and a function $\ell: V(G)\rightarrow 2^{V(H)}$ .}
{Does there exist a function $\varphi: V(G)\rightarrow V(H)$ such that {\em (i)} for every $\{u,v\}\in E(G)$, $\{\varphi(u),\varphi(v)\}\in E(H)$, and {\em (ii)} for every $u\in V(G)$, $\varphi(u)\in \ell(u)$?}

\defproblem{{\sc List Subgraph Isomorphism (LSI)}}{Graphs $G$ and $H$ where $|V(G)|=|V(H)|$, and a function $\ell: V(G)\rightarrow 2^{V(H)}$.}
{Does there exist a bijective function $\varphi: V(G)\rightarrow V(H)$ such that {\em (i)} for every $\{u,v\}\in E(G)$, $\{\varphi(u),\varphi(v)\}\in E(H)$, and {\em (ii)} for every $u\in V(G)$, $\varphi(u)\in \ell(u)$?}

The proof of hardness of {\sc LSI} consists of two parts:

\begin{itemize}
    \item Showing {\sc ETH}-hardness of {\sc LSH}.
    \item Giving a fine-grained reduction from {\sc LSH} to {\sc LSI}.
\end{itemize}

We cannot use the hardness of {\sc LSI} as a black box because {\sc Prop-Col LSI} is a special case of {\sc LSI}. Nevertheless, we will prove that the instances generated by the reduction (with a minor crucial modification) of Cygan et al.~\cite{DBLP:journals/jacm/CyganFGKMPS17} have the additional properties required to make them instances of our special case.

\subparagraph{Lower Bound: {\sc Properly Colored Subgraph Homomorphism}.}
Adapting the scheme of Cygan et al.~\cite{DBLP:journals/jacm/CyganFGKMPS17} to our purpose, we will first show that finding a homomorphism remains hard if it has to preserve a given proper coloring:

\defproblem{{\sc Properly Colored List Subgraph Homomorphism (Prop-Col LSH)}}{Graphs $G$ and $H$ with proper colorings $c_G: V(G)\rightarrow \{1,\ldots,k\}$ and $c_H: V(H)\rightarrow \{1,\ldots,k\}$ for some $k\in\mathbb{N}$,  respectively, and a function $\ell: V(G)\rightarrow 2^{V(H)}$ such that for every $u\in V(G)$ and $v\in \ell(u)$, $c_G(u)=c_H(v)$.}{Does there exist a function $\varphi: V(G)\rightarrow V(H)$ such that {\em (i)} for every $\{u,v\}\in E(G)$, $\{\varphi(u),\varphi(v)\}\in E(H)$, and {\em (ii)} for every $u\in V(G)$, $\varphi(u)\in \ell(u)$?}

In \cite{DBLP:journals/jacm/CyganFGKMPS17}, the authors gave a reduction from the {\sc $3$-Coloring} problem on $n$-vertex graphs of degree $4$ (which is known not to be solvable in time $2^{o(n)}$ unless the ETH fails), which generates equivalent instances $(G',H',\ell)$ of  {\sc LSH}  where both $|V(G')|$ and $|V(H')|$ are bounded by $\OO(\frac{n}{\log n})$. This proves that {\sc LSH} is not solvable in time $n^{o(n)}$ where $n=\max\{|V(G)|,|V(H)|\}$ unless the ETH fails. For their reduction, Cygan et al.~\cite{DBLP:journals/jacm/CyganFGKMPS17} considered the notion of a {\em grouping} (also known as {\em quotient graph}) $\widetilde{G}$ of a graph $G$ is a graph with vertex set $V(\widetilde{G})=\{B_1,B_2,\ldots,B_t\}$ where $(B_1,B_2,\ldots,B_t)$ is a partition of $V(G)$ for some $t\in\mathbb{N}$ and for any distinct $i,j\in\{1,\ldots,t\}$, the vertices $B_i$ and $B_j$ are adjacent in $\widetilde{G}$ if and only if there exist $u\in B_i$ and $v\in B_j$ that are adjacent in $G$. Specifically, they computed a grouping with a coloring having specific properties as stated in the following lemma (see also Fig.~\ref{fig:reduction} in Appendix \ref{app:detHomToISO}.).  %For us, only the first two properties will be explicitly used; the other properties are required for details in the proof of correctness of \cite{DBLP:journals/jacm/CyganFGKMPS17} that we will invoke as a black box.

\begin{lemma}[Lemma 3.2 in \cite{DBLP:journals/jacm/CyganFGKMPS17}]\label{lem:coloring}
For any constant $d \geq 1$, there exist positive integers $\lambda = \lambda(d)$, $n_0 = n_0(d)$ and a polynomial time algorithm that for a given graph $G$ on $n \geq n_0$ vertices of maximum degree $d$ and a positive integer $r\leq \sqrt{\frac{n}{2\lambda}}$, finds a grouping $\widetilde{G}$ of $G$ and a coloring $\widetilde{c} : V(\widetilde{G}) \rightarrow [\lambda r]$ with the following properties: 
\begin{enumerate}
\item\label{lem:coloring1} $|V(\widetilde{G})| \leq |V(G)|/r$;
\item\label{lem:coloring2} The coloring $\widetilde{c}$ is a proper coloring of $\widetilde{G}^2$;\footnote{The square $G^2$ of a graph $G$ is the graph on vertex set $V(G)$ and edge set $\{\{u,v\}: \{u,v\}\in E(G)$ or there exists $w\in V(G)$ with $\{u,w\},\{v,w\}\in E(G)\}$.}
\item\label{lem:coloring3} Each vertex of $\widetilde{G}$ is an independent set in $G$;
\item\label{lem:coloring4} For any edge $\{B_i,B_j\}\in E(\widetilde{G})$, there exists exactly one pair $(u,v)\in B_i\times B_j$ such that $\{u,v\}\in E(G)$.
\end{enumerate}
\end{lemma}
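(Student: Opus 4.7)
The plan is to construct the grouping $\widetilde{G}$ by combining a proper vertex coloring of $G$ with a randomized refinement inside each color class, and then to verify the four properties in turn.

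First, compute a proper vertex coloring $c_G : V(G) \to [d+1]$, which exists by a greedy scan since $\Delta(G) \leq d$. Let $C_1, \ldots, C_{d+1}$ denote its color classes, each an independent set in $G$. Inside each $C_j$, pick a uniformly random ordering and cut it into consecutive blocks of size exactly $r$ (with at most one smaller residual block). Concatenating these over all $j$ yields a partition of $V(G)$ into at most $n/r + (d+1)$ parts, which is absorbed into $n/r$ after enlarging $\lambda$; this establishes property~\ref{lem:coloring1}. Since each block lies within a color class, and color classes are independent in $G$, property~\ref{lem:coloring3} is immediate.

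To secure property~\ref{lem:coloring4} (at most one $G$-edge between any two adjacent blocks), observe that for a random partition and any fixed pair of color classes $C_j, C_{j'}$, the expected number of $G$-edges between two given blocks $B \subseteq C_j$ and $B' \subseteq C_{j'}$ is $O(r^2 d/n)$, which is $O(1/\lambda)$ under the hypothesis $r \leq \sqrt{n/(2\lambda)}$. A second-moment or Chernoff-type argument then shows that, with positive probability, only a vanishing fraction of block pairs carry more than one edge. For each such surplus edge, pull an endpoint out into a singleton block; a charging argument against $|E(G)| \leq nd/2$ bounds the number of newly created singletons by $O(n/r)$, so property~\ref{lem:coloring1} is preserved after a further constant-factor adjustment of $\lambda$.

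For property~\ref{lem:coloring2}, note first that $\Delta(\widetilde{G}) \leq rd$: each block has at most $r$ vertices of $G$-degree at most $d$, and by property~\ref{lem:coloring4} these at most $rd$ edges land in $rd$ distinct external blocks. A naive greedy coloring of $\widetilde{G}^2$ would therefore need up to $(rd)^2 + 1 \gg \lambda r$ colors, so the crux of the proof and its main obstacle is to argue that the random construction actually forces $\Delta(\widetilde{G}^2) \leq \lambda r$ by showing that the second neighborhoods of distinct blocks rarely overlap. The key is a concentration argument: the randomization spreads the vertices in the second neighborhood of any given block $B_i$ across many blocks, so that, with positive probability, the number of blocks at $\widetilde{G}$-distance at most $2$ from $B_i$ is linear in $r$ rather than quadratic. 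The hypothesis $r \leq \sqrt{n/(2\lambda)}$ is decisive here, providing exactly the slack needed both for concentration and for a union bound over all $\leq n/r$ blocks. Once this degree bound is in hand, property~\ref{lem:coloring2} follows from a standard greedy coloring of $\widetilde{G}^2$ in $\lambda r$ colors; finally, the whole randomized construction is derandomized by the method of conditional expectations to yield the claimed polynomial-time algorithm.
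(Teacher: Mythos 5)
The lemma you are proving is imported verbatim from Cygan, Fomin, Golovnev, Kulikov, Mihajlin, Pachocki, and Soca{\l}a (Lemma~3.2 of the JACM~2017 paper) and the present paper does not reprove it, so there is no ``paper's own proof'' to compare against; I will instead assess your proposal on its own terms.

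Your construction --- a proper coloring of $G$, random orderings within each color class cut into consecutive blocks of size $r$, followed by a singleton-fixup for Property~\ref{lem:coloring4} and a claimed concentration bound for Property~\ref{lem:coloring2} --- has two gaps, and the second one is fatal. First, the fixup for Property~\ref{lem:coloring4} does not preserve Property~\ref{lem:coloring1}. Take $G$ to be a perfect matching ($d=1$), with color classes $A,B$ of size $n/2$. For two random blocks $B_1\subset A$ and $B_2\subset B$ the expected number of matching edges between them is roughly $2r^2/n\le 1/\lambda$, but the expected number of block \emph{pairs} with two or more edges between them is of order $(n/(2r))^2\cdot(r^2/n)^2\cdot\Theta(1)=\Theta(n/\lambda)$. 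That is indeed a vanishing \emph{fraction} of all pairs, as you claim, but in absolute terms it dwarfs the block budget $n/r$ once $r\gg\lambda$, so the singleton fixup destroys Property~\ref{lem:coloring1}. Your charging argument against $|E(G)|\le nd/2$ only shows the surplus is $O(n)$, not $O(n/r)$.

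The more serious gap is your treatment of Property~\ref{lem:coloring2}. You correctly identify that a naive bound gives $\Delta(\widetilde{G}^2)\le (rd)^2\gg\lambda r$, and you assert that the random construction forces $\Delta(\widetilde{G}^2)\le\lambda r$ ``by showing that the second neighborhoods of distinct blocks rarely overlap.'' This is not merely unproven; it is false for your construction. Again take $G$ a perfect matching. Then $\widetilde{G}$ is a random bipartite graph on $m\approx n/(2r)$ blocks per side, each block having $\approx r$ random neighbors on the other side. For two blocks $B_k,B_{k'}$ on the $B$-side, the expected number of common neighbors is $m\cdot(r/m)^2=r^2/m=2r^3/n$. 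When $r$ is in the range $\Theta((n\log n)^{1/3})\le r\le\sqrt{n/(2\lambda)}$ (non-empty for large $n$), this quantity is $\omega(\log m)$, so with high probability \emph{almost every} pair of $B$-blocks shares a common neighbor, i.e.\ $\widetilde{G}^2$ restricted to one side is nearly $K_m$. Then $\chi(\widetilde{G}^2)\ge m-o(m)$, and since $m=n/(2r)>\lambda r$ strictly for $r<\sqrt{n/(2\lambda)}$, no proper $\lambda r$-coloring of $\widetilde{G}^2$ exists. No amount of concentration or derandomization rescues this: the random quotient graph simply has the wrong chromatic number. The correct construction must be \emph{structured} rather than random --- for instance, for the matching example, interleaving blocks arithmetically (blocks of consecutive $a$-vertices paired against blocks of $b$-vertices taken modulo $m$) keeps the neighbor sets in $\widetilde{G}$ ``aligned'' so that $\Delta(\widetilde{G}^2)=O(r)$ --- and the coloring $\widetilde{c}$ in the lemma is built \emph{alongside} the partition rather than extracted afterwards by greedy. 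You would need to revisit the approach from scratch; in particular the hypothesis $r\le\sqrt{n/(2\lambda)}$ is not a knob that makes a random partition work, it is the threshold at which the structured construction becomes feasible.
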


Now, we describe the reduction of \cite{DBLP:journals/jacm/CyganFGKMPS17}. Here, without loss of generality, it is assumed that $G$ has no isolated vertices, else they can be removed. An explanation of the intuition behind this somewhat technical definition is given below it.

\begin{definition}\label{def:reduceLSH}
For any instance $G$ of {\sc 3-Coloring} where $G$ has degree $d$ and a positive integer $r=o(\sqrt{|V(G)|})$, the instance $\reduce(G)=(\widetilde{G},\widetilde{H},\ell)$ of {\sc LSH} is defined as follows.
\begin{itemize}
\item {\bf The graph $\widetilde{G}$.} Let $\widetilde{G}$ and $\widetilde{c}: V(\widetilde{G})\rightarrow \{1,2,\ldots,L\}$ be the grouping and coloring given by Lemma \ref{lem:coloring} where $L=\lambda(d)r$. Additionally, for each $B\in V(\widetilde{G})$, define $\phi_B: \{1,2,\ldots,L\} \rightarrow B\cup\{0\}$ as follows: for any $i\in\{1,2,\ldots,L\}$, if there exists $(u,v,B')$ such that $u\in B$ and $v\in B'$, $\{u,v\}\in E(G)$ and $\widetilde{c}(B')=i$, then $\phi_B(i)=u$, and otherwise $\phi_B(i)=0$.\footnote{The uniqueness of $u$ (if it exists), and thus the validity of $\phi_B$, follows from Properties \ref{lem:coloring2} and~\ref{lem:coloring4} in~Lemma~\ref{lem:coloring}.} 
\item {\bf The graph $\widetilde{H}$.} Let $V(\widetilde{H})=\{(R,l): R\in \{0,1,2,3\}^L, l\in L\}$,\footnote{That is, $R$ is a vector with $L$ entries where each entry is $0$, $1$, $2$ or $3$.} and $E(\widetilde{H})=\{\{(R,l),(R',l')\}: R[l'] \neq R'[l]\}$.
\item {\bf The labeling $\ell$.} For any $B\in V(\widetilde{G})$, let $\ell(B)$ contain all vertices $(R,l)\in V(\widetilde{H})$ such that $\widetilde{c}(B)=l$, and there exists $f: B\rightarrow \{1,2,3\}$ such that for all $i\in\{1,2,\ldots,L\}$, either $\phi_B(i)=R[i]=0$ or both $\phi_B(i)\neq 0$ and $f(\phi_B(i))=R[i]$. %and there exists $(v,B')$ such that $v\in B'$, $\{\phi_B(i),v\}\in E(G)$ and $\widetilde{c}(B')=i$.
\end{itemize}
\end{definition}

Intuitively, for every vertex $B\in V(\widetilde{G})$, the function $\phi_B$ can be interpreted as follows. It is the assignment, for every possible color $i\in\{1,\ldots,L\}$, of the unique vertex $u$ within the vertex set identified with $B$ itself that is adjacent to some vertex in the vertex subset identified with some vertex $B'\in V(\widetilde{G})$ colored $i$, if such a vertex $u$ exists (else the assignment is of $0$). In a sense, $B$ thus stores the information on the identity of each vertex within it that is adjacent (in $G$) to some vertex outside of it, where each such internal vertex is uniquely accessed by specifying the color of the vertex in $\widetilde{G}$ whose identified vertex set contains the {\em neighbor}. With respect to the graph $\widetilde{H}$ and labeling $\ell$, we interpret each vertex $(R,l)\in V(\widetilde{H})$ as a ``placeholder'' (i.e.~potential assignment of the sought function $\varphi$) for any vertex $B\in V(\widetilde{G})$ that ``complies with the pattern encoded by the pair $(R,l)$'' as follows.  First and straightforwardly, $B$ must be colored $l$. Here, we remind that the colors of vertices in $\widetilde{G}$ belong to $\{1,\ldots,L\}$, while vertices in $G$ are colored $1$, $2$ or $3$ only. Then, the second requirement is that we can recolor (by $f$) the vertices in $B$ so that the color of each vertex in $B$ that is adjacent (in $G$) to some vertex outside $B$ is as encoded by the vector $R$---that is, for each color $i\in\{1,\ldots,L\}$, if the vertex $\phi_B(i)$ is defined (i.e., $\phi_B(i)\neq 0$), then its color (which is $1,2$ or $3$) must be equal to the $i$-th entry of $R$. (Further intuition is given in Fig.~\ref{fig:reduction} in Appendix \ref{app:detHomToISO}.)

Now, we state the correctness of the reduction.

\begin{lemma}[Lemma 3.3 in \cite{DBLP:journals/jacm/CyganFGKMPS17}]\label{lem:reduceCorrecr}
For any instance $G$ of {\sc 3-Coloring} where $G$ is an $n$-vertex graph of degree $d$, and a positive integer $r=o(\sqrt{|V(G)|})$, the instance $\reduce(G)=(\widetilde{G},\widetilde{H},\ell)$ is computable in  time polynomial in the sizes of $G,\widetilde{G}$ and $\widetilde{H}$, and has the following~properties. 
\begin{itemize}
\item $G$ is a \yes -instance of {\sc 3-Coloring} if and only if $(\widetilde{G},\widetilde{H},\ell)$ is a \yes -instance of {\sc LSH}.
\item $|V(\widetilde{G})|\leq n/r$, and $|V(\widetilde{H})|\leq \gamma(d)^r$ where $\gamma$ is some computable function of $d$.
\end{itemize}
\end{lemma}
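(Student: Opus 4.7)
The plan is to verify the two bullet points separately, with the size bounds being essentially immediate and the correctness claim requiring a careful back-and-forth between $3$-colorings of $G$ and valid list-homomorphisms from $\widetilde{G}$ to $\widetilde{H}$. The size bound $|V(\widetilde{G})| \leq n/r$ is precisely Property~\ref{lem:coloring1} of Lemma~\ref{lem:coloring}. For $\widetilde{H}$, I would count directly: $|V(\widetilde{H})| = 4^{L} \cdot L = 4^{\lambda(d) r} \cdot \lambda(d) r$, which is bounded by $\gamma(d)^r$ for $\gamma(d) := 4^{\lambda(d)+1}$ once $r$ is large enough. The polynomial-time computability of $\reduce(G)$ follows since Lemma~\ref{lem:coloring} is constructive and the remaining objects $\phi_B$, $\widetilde{H}$ and $\ell$ are defined by direct enumeration over sets whose sizes are polynomial in $|V(\widetilde{G})|$ and $|V(\widetilde{H})|$.

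For the forward direction of correctness, assume $f : V(G) \to \{1,2,3\}$ is a proper $3$-coloring. For each $B \in V(\widetilde{G})$ I will define $\varphi(B) := (R_B, \widetilde{c}(B))$ where $R_B[i] := f(\phi_B(i))$ whenever $\phi_B(i) \neq 0$, and $R_B[i] := 0$ otherwise. Membership $\varphi(B) \in \ell(B)$ is witnessed by $f$ restricted to $B$. For an edge $\{B,B'\} \in E(\widetilde{G})$, Property~\ref{lem:coloring4} yields the unique pair $(u,v) \in B \times B'$ with $\{u,v\} \in E(G)$; then by construction $\phi_B(\widetilde{c}(B')) = u$ and $\phi_{B'}(\widetilde{c}(B)) = v$, so $R_B[\widetilde{c}(B')] = f(u) \neq f(v) = R_{B'}[\widetilde{c}(B)]$, giving the required adjacency in $\widetilde{H}$.

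For the reverse direction, assume $\varphi : V(\widetilde{G}) \to V(\widetilde{H})$ is a valid list-homomorphism. For each $B$, write $\varphi(B) = (R_B, \widetilde{c}(B))$ and let $f_B : B \to \{1,2,3\}$ be the function whose existence is guaranteed by $\varphi(B) \in \ell(B)$. Since $(B_1, \ldots, B_t)$ is a partition of $V(G)$, glueing the $f_B$ gives a well-defined map $f : V(G) \to \{1,2,3\}$. To see $f$ is proper, take any $\{u,v\} \in E(G)$. Property~\ref{lem:coloring3} forbids $u,v$ from lying in the same $B$, so $u \in B$ and $v \in B'$ with $B \neq B'$; then $\{B,B'\} \in E(\widetilde{G})$ and, by the uniqueness guaranteed by Property~\ref{lem:coloring4}, $\phi_B(\widetilde{c}(B')) = u$ and $\phi_{B'}(\widetilde{c}(B)) = v$. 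The definition of $\ell$ then forces $f_B(u) = R_B[\widetilde{c}(B')]$ and $f_{B'}(v) = R_{B'}[\widetilde{c}(B)]$, while the edge $\{\varphi(B),\varphi(B')\} \in E(\widetilde{H})$ forces $R_B[\widetilde{c}(B')] \neq R_{B'}[\widetilde{c}(B)]$, hence $f(u) \neq f(v)$.

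The main subtlety, and the place where each property of Lemma~\ref{lem:coloring} is used, is hidden in the reverse direction: one needs the $f_B$'s to glue into a coloring that is correct on \emph{every} edge of $G$, not just edges whose endpoints are accessed via $\phi_B$. Property~\ref{lem:coloring3} prevents intra-$B$ edges and Property~\ref{lem:coloring4} guarantees that each inter-$B$ edge is accessed from both sides by the $\phi$-functions at the matching colors; Property~\ref{lem:coloring2} ensures that the colors $\widetilde{c}(B)$ and $\widetilde{c}(B')$ are distinct so the two coordinate lookups refer to different entries of the vectors and the adjacency condition in $\widetilde{H}$ is meaningful. Once this has been observed, both directions are straightforward bookkeeping.
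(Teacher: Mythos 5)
The paper does not prove this statement: it is quoted verbatim from Cygan et al.\ (Lemma~3.3 of~\cite{DBLP:journals/jacm/CyganFGKMPS17}) and used as a black box. There is therefore no ``paper's own proof'' to compare against, so I am evaluating your reconstruction on its own terms.

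Your reconstruction is correct, and it matches the argument one would expect from Definition~\ref{def:reduceLSH}. The size bound on $\widetilde{H}$ is right: $|V(\widetilde{H})|=4^{L}\cdot L$ with $L=\lambda(d)r$, which is absorbed into $\gamma(d)^r$ for $\gamma(d)=4^{\lambda(d)+1}$ since $\lambda(d)r\le 4^r$ for large $r$. In the forward direction, the definition $R_B[i]=f(\phi_B(i))$ (and $0$ when $\phi_B(i)=0$) is exactly the intended choice, and Property~\ref{lem:coloring4} gives the cross identities $\phi_B(\widetilde c(B'))=u$, $\phi_{B'}(\widetilde c(B))=v$, hence $R_B[\widetilde c(B')]\ne R_{B'}[\widetilde c(B)]$. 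In the reverse direction, the glued coloring is well defined because the $B_i$ partition $V(G)$; Property~\ref{lem:coloring3} kills intra-block edges; and for inter-block edges the $\ell$-membership constraints at coordinates $\widetilde c(B')$ and $\widetilde c(B)$ combine with the edge condition of $\widetilde H$ to force $f(u)\ne f(v)$. One small misattribution: you invoke Property~\ref{lem:coloring2} to say the two coordinate lookups refer to distinct entries, but the adjacency condition $R[l']\ne R'[l]$ is meaningful even when $l=l'$. The real role of Property~\ref{lem:coloring2} (as the paper's footnote to Definition~\ref{def:reduceLSH} records) is to make $\phi_B$ single-valued --- it guarantees that at most one block adjacent to $B$ carries any given color, so $\phi_B(i)$ picks out a unique vertex. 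You use this implicitly whenever you write $\phi_B(\widetilde c(B'))=u$, so the argument goes through; just relocate the credit.
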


We next prove that we can add colorings to the instance $\reduce(G)=(\widetilde{G},\widetilde{H},\ell)$ of {\sc LSH} in order to cast it as an instance of {\sc Prop-Col LSH} while making a minor mandatory modification to the graph $\widetilde{H}$.

\begin{lemma}\label{lem:reduceModification}
Given an instance $\reduce(G)=(\widetilde{G},\widetilde{H},\ell)$ of {\sc LSH}, an equivalent instance $(\widetilde{G},\widetilde{H}',$ $c_{\widetilde{G}},c_{\widetilde{H}'},\ell)$ of {\sc Prop-Col LSH}, where $\widetilde{H}'$ is a subgraph of $\widetilde{H}$, is computable in polynomial time.
\end{lemma}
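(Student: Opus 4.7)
The plan is to define the colorings so that vertices in the labels $\ell(B)$ automatically carry the same color as $B$, and then to prune the edges of $\widetilde{H}$ to make the coloring proper without destroying any realisable homomorphism.

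First I would set $c_{\widetilde{G}}(B) := \widetilde{c}(B)$ for every $B \in V(\widetilde{G})$, where $\widetilde{c}$ is the coloring provided by Lemma~\ref{lem:coloring}. Since $\widetilde{c}$ is in fact a proper coloring of $\widetilde{G}^2$, it is certainly a proper coloring of $\widetilde{G}$. For the right-hand graph I would use the second coordinate already carried by each vertex: $c_{\widetilde{H}'}((R,l)) := l$. By the very definition of $\ell$ in Definition~\ref{def:reduceLSH}, every $(R,l) \in \ell(B)$ satisfies $l = \widetilde{c}(B) = c_{\widetilde{G}}(B)$, so the label-color compatibility required by the input of \textsc{Prop-Col LSH} holds for free.

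The one catch is that $c_{\widetilde{H}}$, as just defined on $\widetilde{H}$, may fail to be proper: the adjacency rule $\{(R,l),(R',l')\} \in E(\widetilde{H}) \iff R[l'] \neq R'[l]$ does not forbid $l=l'$. I would therefore let $\widetilde{H}'$ be the spanning subgraph of $\widetilde{H}$ obtained by deleting every edge $\{(R,l),(R',l')\}$ with $l=l'$. Then $c_{\widetilde{H}'}$ is proper on $\widetilde{H}'$ by construction, and $\widetilde{H}'$ is a subgraph of $\widetilde{H}$ as required.

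It then remains to check that $(\widetilde{G},\widetilde{H},\ell)$ and $(\widetilde{G},\widetilde{H}',c_{\widetilde{G}},c_{\widetilde{H}'},\ell)$ are equivalent. The ``$\Leftarrow$'' direction is immediate, since any $\varphi$ witnessing the \textsc{Prop-Col LSH} instance uses only edges of $\widetilde{H}' \subseteq \widetilde{H}$ and already respects $\ell$. For ``$\Rightarrow$'', given a homomorphism $\varphi$ for $(\widetilde{G},\widetilde{H},\ell)$, every $B \in V(\widetilde{G})$ is mapped to some $(R_B,\widetilde{c}(B)) \in \ell(B)$. For any edge $\{B_1,B_2\} \in E(\widetilde{G})$, the properness of $\widetilde{c}$ on $\widetilde{G}$ gives $\widetilde{c}(B_1) \neq \widetilde{c}(B_2)$, so the second coordinates of $\varphi(B_1)$ and $\varphi(B_2)$ differ and the edge $\{\varphi(B_1),\varphi(B_2)\}$ is not among those deleted; hence it still lies in $\widetilde{H}'$, and $\varphi$ witnesses the \textsc{Prop-Col LSH} instance. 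The whole construction is clearly polynomial-time. There is no serious obstacle here; the only subtlety worth flagging is the edge-deletion step, whose correctness hinges precisely on the fact that $\widetilde{c}$ is a \emph{proper} coloring of $\widetilde{G}$ (indeed of $\widetilde{G}^2$).
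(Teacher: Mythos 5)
Your proof is correct, and it reaches the conclusion by a genuinely different — arguably cleaner — construction of $\widetilde{H}'$ than the paper uses. The paper takes $\widetilde{H}'$ to be the \emph{induced} subgraph of $\widetilde{H}$ on the vertices that appear in some list $\ell(B)$, and then does real work to show that $c_{\widetilde{H}'}((R,l))=l$ is proper there: for any edge $\{(R,l),(R',l')\}$, one of $R[l'],R'[l]$ is nonzero, and by unwinding the definitions of $\ell$ and $\phi_B$ one sees that this forces an edge $\{B,B'\}$ in $\widetilde{G}$ with $\widetilde{c}(B)=l,\widetilde{c}(B')=l'$, whence $l\neq l'$ by properness of $\widetilde{c}$. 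You instead keep all of $V(\widetilde{H})$ and surgically delete exactly the edges $\{(R,l),(R',l')\}$ with $l=l'$, making properness of $c_{\widetilde{H}'}$ true by fiat. The burden then shifts to the forward direction of the equivalence: you must show a witnessing homomorphism for the \textsc{LSH} instance never uses a deleted edge, which follows in one line because $\varphi(B)$ has second coordinate $\widetilde{c}(B)$ (by the definition of $\ell$) and $\widetilde{c}$ is proper on $\widetilde{G}$. Both proofs ultimately hinge on the properness of $\widetilde{c}$; the paper spends it on proving the coloring proper, you spend it on showing the homomorphism survives edge deletion. Your version avoids unfolding $\phi_B$ and Lemma~\ref{lem:coloring} a second time and is more self-contained; the paper's version has the incidental benefit that $V(\widetilde{H}')$ is exactly $\bigcup_B \ell(B)$, though nothing downstream in the paper seems to rely on that. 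Either construction satisfies the lemma's requirement that $\widetilde{H}'$ be a subgraph of $\widetilde{H}$ and is clearly polynomial-time.
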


\begin{proof}
Define $c_{\widetilde{G}} = \widetilde{c}$ where $\widetilde{c}$ is the coloring of $\widetilde{G}$ in Definition \ref{def:reduceLSH}. 
Additionally, let $\widetilde{H}'$ be the subgraph of $\widetilde{H}$ induced by the vertex set $\{(R,l)\in V(\widetilde{H}):$ there exists $B\in V(\widetilde{G})$ such that $(R,l)\in\ell(B)\}$. Then, define $c_{\widetilde{H}'}: V(\widetilde{H}') \rightarrow \{1,2,\ldots,L\}$ as follows: for any $(R,l)\in V(\widetilde{H}')$, define $c_{\widetilde{H}'}((R,l))=l$.  Notice that, by the definition of $V(\widetilde{H}')$, every set assigned by $\ell$ is subset of $V(\widetilde{H}')$.

First, we assert that $(\widetilde{G},\widetilde{H}',$ $c_{\widetilde{G}},c_{\widetilde{H}'},\ell)$ is an instance of {\sc Prop-Col LSH}. To this end, we need to verify that the three following properties hold.
\begin{enumerate}
\item $c_{\widetilde{G}}$ is a proper coloring of $\widetilde{G}$.
\item $c_{\widetilde{H}'}$ is a proper coloring of $\widetilde{H}'$.
\item For every $B\in V(\widetilde{G})$ and $(R,l)\in\ell(B)$, it holds that $c_{\widetilde{G}}(B)=c_{\widetilde{H}'}((R,l))$.
\end{enumerate}

By the definition of $c_{\widetilde{G}}$, it is a proper coloring of $\widetilde{G}^2$, which is a supergraph of $\widetilde{G}$. Thus, $c_{\widetilde{G}}$ is a proper coloring of $\widetilde{G}$.

Now, we argue that $c_{\widetilde{H}'}$ is a proper coloring of $\widetilde{H}'$. To this end, consider some edge $\{(R,l),(R',l')\}\in E(\widetilde{H}')$. We need to show that $c_{\widetilde{H}'}((R,l))\neq c_{\widetilde{H}'}((R',l'))$. By the definition of $c_{\widetilde{H}'}$, we have that $c_{\widetilde{H}'}((R,l))=l$ and $c_{\widetilde{H}'}((R',l'))=l'$, and therefore it suffices to show that $l\neq l'$. By the definition of $E(\widetilde{H})$ (which is a superset of $E(\widetilde{H}')$), we have that $R[l'] \neq R'[l]$. Thus, necessarily at least one among $R[l']$ and $R'[l]$ is not $0$, and so we suppose w.l.o.g.~that $R[l']$ is not $0$. Furthermore, since $(R,l)\in V(\widetilde{H}')$, we have that there exists $B\in E(\widetilde{G})$ such that $(R,l)\in\ell(B)$. Thus, 
\begin{itemize}
\item $\widetilde{c}(B)=l$.
\item There exists $f: B\rightarrow \{1,2,3\}$ such that for all $i\in\{1,2,\ldots,L\}$, either $\phi_B(i)=R[i]=0$ or both $\phi_B(i)\neq 0$ and $f(\phi_B(i))=R[i]$.
\end{itemize}
From the second property, and because $R[l']\neq 0$, we necessarily have that both $\phi_B(l')\neq 0$ and $f(\phi_B(l'))=R[l']$. In particular, by the definition of $\phi_B$, having $\phi_B(l')\neq 0$ means that there exists $(u,v,B')$ such that $u\in B$, $v\in B'$, $\{u,v\}\in E(G)$ and $\widetilde{c}(B')=l'$. By the definition of $\widetilde{G}$ as a grouping of $G$, having $u\in B$, $v\in B'$ and $\{u,v\}\in E(G)$ implies that $\{B,B'\}\in E(\widetilde{G})$. Because $\widetilde{c}$ is a proper coloring of $\widetilde{G}$, this means that $\widetilde{c}(B)\neq\widetilde{c}(B')$.  Since $\widetilde{c}(B)=l$ and $\widetilde{c}(B')=l'$, we derive that $l\neq l'$. Hence, $c_{\widetilde{H}'}$ is indeed a proper coloring of $\widetilde{H}'$.

To conclude that $(\widetilde{G},\widetilde{H}',c_{\widetilde{G}},c_{\widetilde{H}'},\ell)$ is indeed an instance of {\sc Prop-Col LSH}, it remains to assert that for every $B\in V(\widetilde{G})$ and $(R,l)\in\ell(B)$, it holds that $c_{\widetilde{G}}(B)=c_{\widetilde{H}'}((R,l))$. To this end, consider some $B\in V(\widetilde{G})$ and $(R,l)\in\ell(B)$. By the definition of $\ell$ (recall Definition \ref{def:reduceLSH}), $(R,l)\in\ell(B)$ implies that $\widetilde{c}(B)=l$. As $c_{\widetilde{G}}=\widetilde{c}$, we have that $c_{\widetilde{G}}(B)=l$. Moreover, the definition of  $c_{\widetilde{H}'}$ directly implies that $c_{\widetilde{H}'}((R,l))=l$. Thus, $c_{\widetilde{G}}(B)=c_{\widetilde{H}'}((R,l))$.

Finally, we argue that $(\widetilde{G},\widetilde{H},\ell)$ is a \yes -instance of {\sc LSH} if and only if $(\widetilde{G},\widetilde{H}',c_{\widetilde{G}},c_{\widetilde{H}'},\ell)$ is a \yes -instance of {\sc Prop-Col LSH}. In one direction, because $\widetilde{H}'$ is a subgraph of $\widetilde{H}$, it is immediate that if $(\widetilde{G},\widetilde{H}',$ $c_{\widetilde{G}},c_{\widetilde{H}'},\ell)$ is a \yes -instance of {\sc Prop-Col LSH}, then so is $(\widetilde{G},\widetilde{H},\ell)$. For the other direction, suppose that $(\widetilde{G},\widetilde{H},\ell)$ is a \yes -instance of {\sc LSH}. Thus, there exists a function $\varphi: V(\widetilde{G})\rightarrow V(\widetilde{H})$ such that {\em (i)} for every $\{B,B'\}\in E(\widetilde{G})$, $\{\varphi(B),\varphi(B')\}\in E(\widetilde{H})$, and {\em (ii)} for every $B\in V(\widetilde{G})$, $\varphi(B)\in \ell(B)$. In particular, directly by the definition of $V(\widetilde{H}')$, the second condition implies that for every $B\in V(\widetilde{G})$, it holds that $\varphi(B)\in V(\widetilde{H}')$. Thus, because $\widetilde{H}'$ is an induced subgraph of $\widetilde{H}$, it holds that for every $\{B,B'\}\in E(\widetilde{G})$, $\{\varphi(B),\varphi(B')\}\in E(\widetilde{H}')$. Therefore, $\varphi$ witnesses that $(\widetilde{G},\widetilde{H}',c_{\widetilde{G}},c_{\widetilde{H}'},\ell)$ is a \yes -instance of {\sc Prop-Col LSH}.
\end{proof}

We are now ready to assert the hardness of {\sc Prop-Col LSH}. The proof, based on Lemmas~\ref{lem:coloring}, \ref{lem:reduceCorrecr} and \ref{lem:reduceModification}, can be found in Appendix \ref{app:detHomToISO}.

\begin{lemma}\label{res:PropColLSIH}
Unless the ETH is false, there does not exist an algorithm that solves {\sc Prop-Col LSH} in time $n^{o(n)}$ where $n = \max(|V(G)|,|V(H)|)$.
\end{lemma}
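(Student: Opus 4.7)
The plan is to chain the reduction of Definition~\ref{def:reduceLSH} with Lemma~\ref{lem:reduceModification}, starting from \textsc{3-Coloring} on graphs of maximum degree $4$, which is known not to be solvable in time $2^{o(n)}$ under the ETH. Suppose for contradiction that an algorithm $\mathcal{A}$ solves \textsc{Prop-Col LSH} in time $N^{o(N)}$ where $N = \max(|V(\widetilde{G})|, |V(\widetilde{H}')|)$.

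Given an input $G$ on $n$ vertices of degree at most $d = 4$, I would first pick a parameter $r = r(n)$ satisfying $r = o(\sqrt{n})$ (so that Lemma~\ref{lem:reduceCorrecr} applies), compute $\reduce(G) = (\widetilde{G}, \widetilde{H}, \ell)$ in polynomial time, and then apply Lemma~\ref{lem:reduceModification} to convert it into an equivalent \textsc{Prop-Col LSH} instance $(\widetilde{G}, \widetilde{H}', c_{\widetilde{G}}, c_{\widetilde{H}'}, \ell)$. By Lemma~\ref{lem:reduceCorrecr} together with the fact that $\widetilde{H}'$ is a subgraph of $\widetilde{H}$, the resulting instance is a \yes-instance iff $G$ is $3$-colorable, and satisfies
\[
 |V(\widetilde{G})| \le n/r, \qquad |V(\widetilde{H}')| \le |V(\widetilde{H})| \le \gamma(d)^r,
\]
for some computable function $\gamma$ depending only on $d$.

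The key calibration is to choose $r$ so that the two upper bounds are of comparable order. Setting $r := \lfloor \log n /(2 \log \gamma(d))\rfloor$ gives $\gamma(d)^r \le \sqrt{n}$ while $n/r = \Theta(n/\log n)$, and clearly $r = o(\sqrt{n})$ for all sufficiently large $n$. Hence $N = \Theta(n/\log n)$, so $N \log N = \Theta(n)$, and running $\mathcal{A}$ on the constructed instance takes time
\[
 N^{o(N)} \;=\; 2^{o(N \log N)} \;=\; 2^{o(n)}.
\]
Composed with the polynomial-time reduction, this would decide \textsc{3-Coloring} on graphs of degree at most $4$ in time $2^{o(n)}$, contradicting the ETH.

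The heavy lifting (construction of the grouping, correctness of $\reduce$, and the transfer to the properly colored setting) is already carried out in Lemmas~\ref{lem:coloring}, \ref{lem:reduceCorrecr} and \ref{lem:reduceModification}, so no step here is genuinely delicate. The only point that needs to be checked carefully is the running-time arithmetic: one must verify that with $r = \Theta(\log n)$ the bound $N^{o(N)}$ really collapses to $2^{o(n)}$, and that the choice of $r$ simultaneously respects $r = o(\sqrt{n})$ and ensures $\gamma(d)^r$ does not overwhelm $n/r$. Both are immediate from the choice above, which is the main quantitative obstacle and is in fact routine.
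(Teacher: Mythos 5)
Your proposal is correct and follows essentially the same route as the paper's proof: reduce from \textsc{3-Coloring} on degree-$4$ graphs via $\reduce$ (Lemma~\ref{lem:reduceCorrecr}) and Lemma~\ref{lem:reduceModification}, choose $r = \Theta(\log n)$ so that both $|V(\widetilde{G})|$ and $|V(\widetilde{H}')|$ are $\OO(n/\log n)$, and observe that $(n/\log n)^{o(n/\log n)} = 2^{o(n)}$. The paper picks $r = \lceil \log_{\gamma(4)}(n/\log n)\rceil$ instead of your $\lfloor \log n/(2\log\gamma(d))\rfloor$, but the two choices are interchangeable and yield the same asymptotics.
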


\subparagraph{From Graph Homomorphism to Subgraph Isomorphism.}%\label{sec:detHomToISO}
In this part, we observe that the reduction of \cite{DBLP:journals/jacm/CyganFGKMPS17} from {\sc LSH} to {\sc LSI} can be essentially used as is to serve as a reduction  from {\sc Prop-Col LSH} to {\sc Prop-Col LSI}. For the sake of completeness, we give the full details (and the conclusion of the proof of Lemma~\ref{res:PropColLSI}) in Appendix~\ref{app:detHomToISO}.

\section{Lower Bound for the {\sc Cross Matching} Problem}\label{sec:crossMatching}

In this section, towards the proof of a lower bound for {\sc Clique Contraction}, we prove a lower bound for an intermediate problem called {\sc Cross Matching} that somewhat resembles {\sc Clique Contraction}, and which is defined as follows.

\defproblem{{\sc Cross Matching}}{A graph $G$ with a partition $(A,B)$ of $V(G)$ where $|A|=|B|$.}{Does there exist a perfect matching $M$ in $G$ such that every edge in $M$ has one endpoint in $A$ and the other in $B$, and $G/M$ is a clique?}

Our objective is to prove the following statement.

\begin{lemma}\label{res:CrossMatching}
Unless the ETH is false, there does not exist an algorithm that solves {\sc Cross Matching} in time $n^{o(n)}$ where $n=|A|$.
\end{lemma}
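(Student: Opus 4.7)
The plan is to polynomially reduce from {\sc Prop-Col LSI}, whose $n^{o(n)}$-time hardness under the \ETH{} is given by Lemma~\ref{res:PropColLSI}. Given an instance $(G,H,c_G,c_H,\ell)$ with $|V(G)|=|V(H)|=n$, I would build a graph $L$ on the disjoint union $A \cupdot B$, where $A$ is a copy of $V(G)$ and $B$ is a copy of $V(H)$, by setting $L[A]$ to be the complement of $G$, setting $L[B]=H$, and inserting a cross-edge $\{u,v\}$ with $u\in A$ and $v\in B$ precisely when $v\in\ell(u)$. Since $|A|=n$, an $n^{o(n)}$-time algorithm for {\sc Cross Matching} on $(L,A,B)$ would immediately transfer back to an $n^{o(n)}$-time algorithm for {\sc Prop-Col LSI}, contradicting Lemma~\ref{res:PropColLSI}.

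The forward direction is straightforward: a solution $\varphi$ to {\sc Prop-Col LSI} induces the cross matching $M=\{\{u,\varphi(u)\}:u\in A\}$, and for any two super-vertices of $L/M$ the corresponding pair $\{u,u'\}\subseteq A$ is either a non-edge of $G$ (providing an $A$-edge of $L$ between them) or an edge of $G$, in which case $\varphi$ delivers the required $B$-edge $\{\varphi(u),\varphi(u')\}\in E(H)$; hence $L/M$ is a clique. The main obstacle, which is exactly the issue flagged in the technical overview of the paper, lies in the converse: given a cross perfect matching $M$ with $L/M$ a clique, defining $\varphi(u)$ as the $M$-partner of $u$ in $B$ automatically satisfies the list condition, but one must rule out the bad scenario in which $\{u,u'\}\in E(G)$ while $\{\varphi(u),\varphi(u')\}\notin E(H)$, with the required adjacency in $L/M$ being supplied instead by one of the cross-edges $\{u,\varphi(u')\}$ or $\{u',\varphi(u)\}$.

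This is exactly where the proper colorings come into play. If $\{u,\varphi(u')\}\in E(L)$, then $\varphi(u')\in\ell(u)$ and hence $c_H(\varphi(u'))=c_G(u)$; but the matching edge $\{u',\varphi(u')\}$ already forces $c_H(\varphi(u'))=c_G(u')$, so $c_G(u)=c_G(u')$, contradicting the fact that $c_G$ is a proper coloring of $G$ together with $\{u,u'\}\in E(G)$. The cross-edge $\{u',\varphi(u)\}$ is excluded symmetrically, and $\{u,u'\}$ itself is absent from $L[A]$ because $L[A]$ is the complement of $G$. Hence the only adjacency witness available in $L/M$ is the $H$-edge $\{\varphi(u),\varphi(u')\}$, establishing that $\varphi$ is a valid list subgraph isomorphism. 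Combined with the polynomial-time construction of $L$ and the equality $|A|=n$, this closes the reduction and yields the claimed lower bound.
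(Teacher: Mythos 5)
Your proposal is correct and takes essentially the same approach as the paper: the same graph $L$ (complement of $G$ on $A$, copy of $H$ on $B$, cross-edges given by $\ell$), the same forward direction, and the same use of the proper colorings of $G$ and $H$ (matching the paper's Claim~\ref{claim:CrossMatching}) to rule out cross-edges filling in for missing $H$-edges in the reverse direction.
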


\begin{proof}
Towards a contradiction, suppose that there exists an algorithm, denoted by {\sf MatchingAlg}, that solves {\sc Cross Matching} in time $n^{o(n)}$ where $n$ is the number of vertices in the set $A$ in the input. We will show that this implies the existence of an algorithm, denoted by {\sf LSIAlg}, that solves {\sc Prop-Col LSI} in time $n^{o(n)}$ where $n$ is the number of vertices in the input graph $G$, thereby contradicting Lemma~\ref{res:PropColLSI} and hence completing the proof.

\begin{figure}[t]
 \center{\includegraphics[scale=0.75]{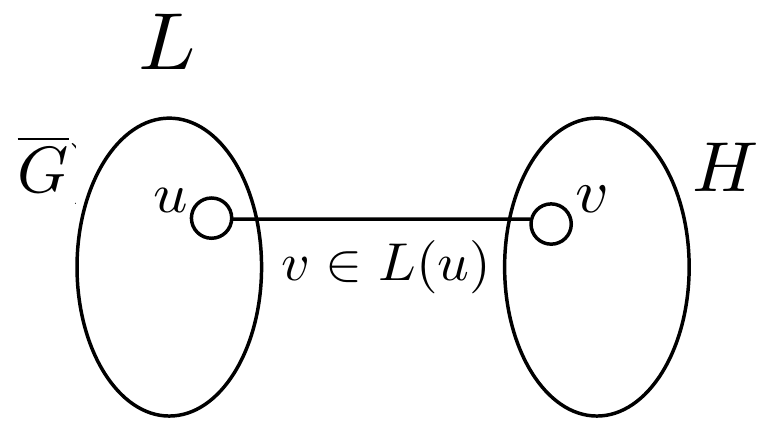}}
\caption{\label{fig:crossMatching} The construction of an instance of {\sc Cross Matching} in the proof of Lemma~\ref{res:CrossMatching}.}
\end{figure}

We define the execution of {\sf LSIAlg} as follows. Given an instance $(G,H,c_G,c_H,\ell)$ of {\sc Prop-Col LSI}, {\sf LSIAlg} constructs an instance $(L,A,B)$ of {\sc Cross Matching} as follows (see Fig.~\ref{fig:crossMatching}):
\begin{itemize}
\item $V(L)=V(\overline{G})\cup V(H)$.
\item $E(L)=E(\overline{G})\cup E(H)\cup \{\{u,v\}: u\in V(G), v\in L(u)\}$.
\item $A=V(\overline{G})$ and $B=V(H)$.
\end{itemize}
Then, {\sf LSIAlg} calls {\sf MatchingAlg} with $(L,A,B)$ as input, and returns the answer of this call.

Denote $n=|V(G)|$, and notice that $|A|=|B|=n$. Thus, because {\sf MatchingAlg} runs in time $|A|^{o(|A|)}=n^{o(n)}$, so does {\sf LSIAlg}.

For the correctness of the algorithm, first suppose that $(G,H,c_G,c_H,\ell)$ is a \yes -instance of {\sc Prop-Col LSI}. This means that there exists a bijective function $\varphi: V(G)\rightarrow V(H)$ such that {\em (i)} for every $\{u,v\}\in E(G)$, $\{\varphi(u),\varphi(v)\}\in E(H)$, and {\em (ii)} for every $u\in V(G)$, $\varphi(u)\in L(u)$.  Having $\varphi$ at hand, we will show that $(L,A,B)$ is a \yes -instance, which will imply that the call to {\sf MatchingAlg} with $(L,A,B)$ as input returns \yes, and hence {\sf LSIAlg} returns \yes.

Based on $\varphi$, we define a subset $M\subseteq E(L)$ as follows: $M=\{\{u,\varphi(u)\}: u\in A\}$. Notice that the containment of $M$ in $E(L)$ follows from the definition of $E(L)$ and Condition {\em (ii)} above. Moreover, by the definition of $A$, $B$ and because $\varphi$ is bijective, it further follows that $M$ is a perfect matching in $L$ such that every edge in $M$ has one endpoint in $A$ and the other in $B$. Thus, to conclude that $(L,A,B)$ is a \yes-instance, it remains to argue that $L/M$ is a clique. To this end, we consider two arbitrary vertices $x$ and $y$ of $L/M$, and prove that they are adjacent in $L/M$. Necessarily $x$ is a vertex that replaced two vertices $u\in A$ and $u'\in B$ such that $\{u,u'\}\in M$, and $y$ is a vertex that replaced two vertices $v\in A\setminus\{u\}$ and $v'\in B\setminus\{u'\}$ such that $\{v,v'\}\in M$. By the definition of contraction, to show that $x$ and $y$ are adjacent in $L/M$, it suffices to show that $u$ and $v$ are adjacent in $L$ or $u'$ and $v'$ are adjacent in $L$ (or both). To this end, suppose that $u$ and $v$ are not adjacent in $L$, else we are done. By the definition of $E(L)$, this means that $\{u,v\}\notin E(\overline{G})$ and hence $\{u,v\}\in E(G)$. By Condition {\em (i)} above, we derive that $\{\varphi(u),\varphi(v)\}\in E(H)$. By the definition of $M$, we know that $u'=\varphi(u)$ and $v'=\varphi(v)$, therefore  $\{u',v'\}\in E(H)$. In turn, by the definition of $E(L)$, we get that $\{u',v'\}\in E(L)$. Thus, the proof of the forward direction is complete.

Now, suppose that {\sf LSIAlg} returns \yes, which means that the call to {\sf MatchingAlg} with $(L,A,B)$ returns \yes.  Thus, $(L,A,B)$ is a \yes -instance, which means that there exists a perfect matching $M$ in $G$ such that every edge in $M$ has one endpoint in $A$ and the other in $B$, and $G/M$ is a clique. We define a function $\varphi: A\rightarrow B$ as follows. For every $u\in V(G)$, let $\varphi(u)=v$ where $v$ is the unique vertex in $B$ such that $\{u,v\}\in M$; the existence and uniqueness of $v$ follows from the supposition that $M$ is a perfect matching such that every edge in $M$ has one endpoint in $A$ and the other in $B$. Furthermore, by the definition of $A,B$ and the edges in $E(L)$ with one endpoint in $A$ and the other in $B$, it directly follows that $\varphi$ is a bijective mapping between $V(G)$ and $V(H)$ such that for every $u\in V(G)$, it holds that $\varphi(u)\in L(u)$. Thus, it remains to argue that for every edge $\{u,v\}\in E(G)$, it holds that $\{\varphi(u),\varphi(v)\}\in E(H)$. To this end, consider some arbitrary edge $\{u,v\}\in E(G)$, and denote $u'=\varphi(u)$ and $v'=\varphi(v)$. Because $L/M$ is a clique and $M$ is a matching that, by the definition of $\varphi$, necessarily contains both $\{u,u'\}$ and $\{v,v'\}$, we derive that at least one of the following four conditions must be satisfied: {\em (i)} $\{u,v\}\in E(L)$; {\em (ii)} $\{u',v'\}\in E(L)$; {\em (iii)} $\{u,v'\}\in E(L)$; {\em (iv)} $\{v,u'\}\in E(L)$.  Because $\{u,v\}\in E(G)$, we have that $\{u,v\}\notin E(\overline{G})$ and therefore $\{u,v\}\notin E(L)$. Thus, we are left with Conditions {\em (ii)}, {\em (iii)} and {\em (iv)}. Now, we will crucially rely on the proper colorings of $G$ and $H$ to rule out the satisfaction of Conditions {\em (iii)} and {\em (iv)}. %For this purpose, we have the following claim.

\begin{claim}\label{claim:CrossMatching}
For any two edges $\{x,x'\},\{y,y'\}\in E(L)$ such that $\{x,y\}\in E(G)$ and $x',y'\in V(H)$, it holds that neither $\{x,y'\}$ nor $\{y,x'\}$ belongs to $E(L)$.
\end{claim}

\noindent{\em Proof of Claim \ref{claim:CrossMatching}.} Because $c_G$ is a proper coloring of $G$ and $\{x,y\}\in E(G)$, it holds that $c_G(x)\neq x_G(y)$. Because $\{x,x'\},\{y,y'\}\in E(L)$, $x,y\in V(G)$ and $x',y'\in V(H)$, and by the definition of $E(L)$, it holds that $x'\in L(x)$ and $y'\in L(y)$, and therefore $c_G(x)=c_H(x')$ and $c_G(y)=c_H(y')$. Thus, $c_G(x)\neq c_H(y')$ and $c_G(y)\neq c_H(x')$, implying that $y'\notin L(x)$ and $x'\notin L(y)$. In turn, by the definition of $E(L)$, this means that neither $\{x,y'\}$ nor $\{y,x'\}$ belongs to $E(L)$. This completes the proof of the claim.
$\diamond$

\medskip
\noindent We now return to the proof of the lemma. By Claim \ref{claim:CrossMatching}, we are only left with Condition {\em (ii)}, that is, $\{u',v'\}\in E(L)$. However, by the definition of $E(L)$, this means that $\{u',v'\}\in E(H)$. As argued earlier, this completes the proof of the reverse direction.
\end{proof}

%!tex root=mainClique.tex

\section{Lower Bounds: {\sc Clique Contraction} and {\sc Hadwiger Number}}\label{sec:hadwiger}

In this section, we prove a lower bound for {\sc Clique Contraction} and consequently for {\sc Hadwiger Number}, defined as follows.

\defproblem{{\sc Clique Contraction}}{A graph $G$ and $t\in\mathbb{N}$.}{Is there a subset $F\subseteq E(G)$ of size at most $t$ such that $G/F$ is a clique?}

\defproblem{{\sc Hadwiger Number}}{A graph $G$ and $h\in\mathbb{N}$.}{Is the Hadwiger number of $G$ at least as large as $h$?}

Our objective is to prove the following statement, where the analogous statement for {\sc Hadwiger Number} (called Theorem \ref{thm:informalmain} in the introduction) will follow as a corollary.

\begin{theorem}\label{res:CliqueContraction}
Unless the ETH is false, there does not exist an algorithm that solves {\sc Clique Contraction} in time $n^{o(n)}$ where $n=|V(G)|$.
\end{theorem}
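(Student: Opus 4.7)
The plan is to reduce from {\sc Cross Matching} (Lemma~\ref{res:CrossMatching}) to {\sc Clique Contraction}. Given a {\sc Cross Matching} instance $(L, A, B)$ with $|A| = |B| = n$, I will construct $G'$ by adding two new vertices $z$ and $z'$ to $L$, making $z$ adjacent to all of $A$, $z'$ adjacent to all of $B$, and adding the edge $\{z, z'\}$; I will then set $t = n$. Since $|V(G')| = 2n + 2 = \OO(n)$, an $n^{o(n)}$-time algorithm for {\sc Clique Contraction} would yield an $n^{o(n)}$-time algorithm for {\sc Cross Matching}, contradicting Lemma~\ref{res:CrossMatching}.

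For the forward direction, given a cross matching $M$ witnessing that the instance is \yes, I will take $F = M$. Then $G'/F$ consists of $n$ contracted pairs (pairwise adjacent because $L/M$ is a clique) together with the singletons $\{z\}$ and $\{z'\}$; the vertex $z$ is adjacent to every pair via its $A$-endpoint, $z'$ is adjacent via the $B$-endpoint, and $\{z, z'\}$ is an edge, so $G'/F$ is a clique on $n + 2$ vertices with $|F| = n = t$.

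The reverse direction is the main technical step. Assuming $F$ is a forest without loss of generality, the quotient has at least $n + 2$ components. The central case is when $z$ and $z'$ are both singletons in $F$: then every other component must contain a vertex of $A$ (to be adjacent to $\{z\}$, whose only $G'$-neighbors outside $\{z'\}$ lie in $A$) and a vertex of $B$ (to be adjacent to $\{z'\}$), hence has size at least $2$. Since the $c - 2$ other components partition $V(L)$ with $|V(L)| = 2n$, this forces $c = n + 2$ exactly, each non-gadget component is a pair $\{a_{\sigma(k)}, b_{\tau(k)}\}$ with $\sigma, \tau$ permutations of $[n]$, and the single $F$-edge inside each pair must be a list edge. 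The clique adjacency between any two such pairs, combined with Claim~\ref{claim:CrossMatching} (which rules out ``cross'' list edges between matched pairs whose $A$-endpoints are adjacent in $G$), reduces exactly to the Cross Matching clique condition, so the induced perfect matching is a valid Cross Matching solution. The remaining cases, where $z$ or $z'$ is not a singleton (most notably when $\{z, z'\} \in F$ and the $z$-component absorbs vertices of $V(L)$), are the main obstacle I anticipate: the plan is to handle these by either a normalization argument, showing any such $F$ can be transformed into one of the central form without losing the clique property, or by strengthening the gadget (for example, duplicating $z$ and $z'$, or attaching pendants that tighten the budget) so that the budget forces $z$ and $z'$ to remain singletons.

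Finally, the Hadwiger Number statement (Theorem~\ref{thm:informalmain}) will follow as a corollary: since $G'$ is connected via $z$ and $z'$, any disjoint family of $k$ connected pairwise adjacent subsets of $V(G')$ can be extended to a partition of $V(G')$ into $k$ connected pairwise adjacent parts by iteratively absorbing each leftover vertex into a part containing one of its neighbors, which preserves both connectedness and pairwise adjacency. Hence $h(G')$ equals the largest $k$ for which $G'$ admits a contraction onto a clique on $k$ vertices, so {\sc Clique Contraction} on $(G', t)$ can be solved by computing $h(G')$ and comparing it to $|V(G')| - t$, transferring the lower bound.
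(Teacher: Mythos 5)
Your overall plan---reduce from {\sc Cross Matching}, which you have already established to be hard, by attaching a small gadget that forces any clique-contraction solution to be a cross matching---is the same high-level strategy as the paper's. However, your gadget is too weak, and you have correctly identified where it fails but not how to fix it. With only two extra vertices $z,z'$ and budget $t=n$, a solution $F$ is free to spend budget on $\{z,z'\}$ or on edges incident to $z$ or $z'$, at which point the structural argument (each quotient vertex must contain both an $A$-vertex and a $B$-vertex) collapses because the $z$-component now ``sees'' through several $A$-vertices, not just one. Nothing in the construction rules out, say, contracting $\{z,z'\}$ plus $n-1$ edges entirely inside $L$; whether this can produce a clique then depends on fine properties of the {\sc Cross Matching} instances that you have not analyzed, so the reverse direction is genuinely open. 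Your three proposed fixes are all vague: the ``normalization'' step is not obviously achievable (there is no apparent exchange argument that removes $\{z,z'\}$ from $F$ while preserving the clique); a single pendant on $z$ only forces one more edge into $F$ without pinning $z$ down; and ``duplicating $z$ and $z'$'' points in the right direction but, stated as such, does not explain how many copies or why the budget would then force the desired structure.

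The paper closes exactly this gap by replacing $\{z,z'\}$ with a clique $K$ on $4n$ vertices split into $C$ and $D$ with $|C|=|D|=2n$, where $A$ is complete to $C$ and non-adjacent to $D$, and $B$ is complete to $D$ and non-adjacent to $C$ (see {\sc Structured Clique Contraction} and Lemma~\ref{lem:propNoisyStructured}). The size $4n$ is chosen so that a simple degree count does all the work: the quotient is a clique on at least $5n$ vertices, but any $u\in A\cup B$ untouched by $F$ has degree at most $4n-1$ in the relevant induced graph (being non-adjacent to half of $C\cup D$), so $u$ must be touched by $F$; together with $|F|\le n$ this forces $F$ to be a perfect matching in $G[A\cup B]$, and the $A$--$D$ (resp.\ $B$--$C$) non-adjacency then forces every matched pair to cross $A$ and $B$. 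Your gadget cannot support any such degree argument because its size is $O(1)$, so the budget $n$ can always swallow it. Note also that the paper's gadget is deliberately over-engineered into a reusable lemma ({\sc Noisy Structured Clique Contraction}) precisely so the same forcing argument can be invoked as a black box for the contraction-to-graph-classes results in the appendix; a minimal two-vertex gadget, even if it could be made to work for the clique case, would not carry over. Your derivation of the {\sc Hadwiger Number} corollary from {\sc Clique Contraction} is essentially the same as the paper's Corollary~\ref{res:HadwigerNumber} and is fine.
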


To make our approach adaptable to extract analogous statements for other contraction problems, we will first define a new problem called {\sc Noisy Structured Clique Contraction} (which will arise in Appendix \ref{sec:contractionToClasses}) along with a special case of it that is also a special case of {\sc Clique Contraction}. Then, we will prove a crucial property of instances of {\sc Noisy Structured Clique Contraction}, and afterwards we will use this property to prove Theorem \ref{res:CliqueContraction} and its corollary. The definition of the new problem is as follows (see Fig.~\ref{fig:noisyStructured}).

\begin{figure}[t]
 \center{\includegraphics[scale=0.6]{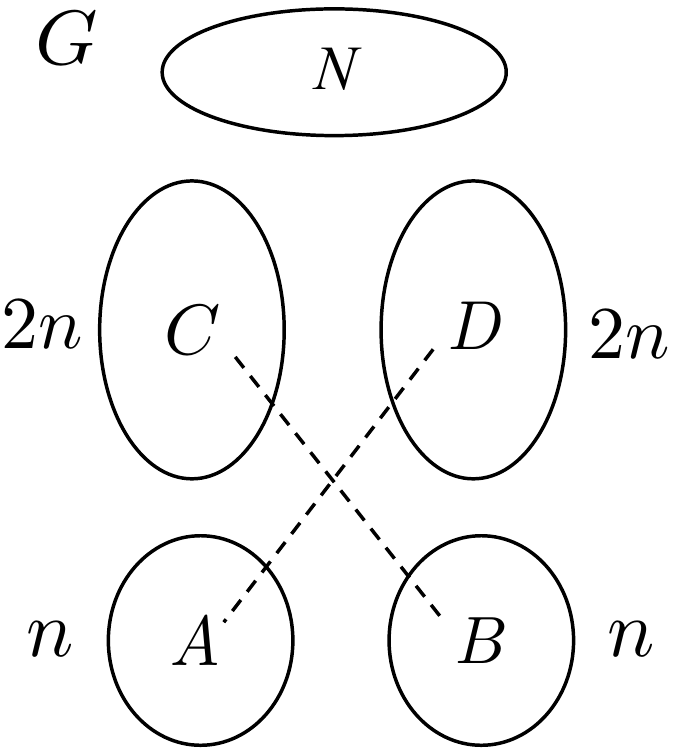}}
\caption{\label{fig:noisyStructured} An instance of {\sc Noisy Structured Clique Contraction} where dashed lines represent non-edges.}
\end{figure}

\defproblem{{\sc Noisy Structured Clique Contraction}}{A graph $G$ on at least $6n$ vertices for some $n\in\mathbb{N}$, and a partition $(A,B,C,D,N)$ of $V(G)$ such that $|A|=|B|=n$, $|C|=|D|=2n$, %$G[C\cup D]$ is a clique,
no vertex in $A$ is adjacent to any vertex in $D$, and no vertex in $B$ is adjacent to any vertex in $C$.}{Does there exist a subset $F\subseteq E(G)$ of size at most $n$ such that $G[A\cup B\cup C\cup D\cup X]/F$ is a clique,\footnote{Note that $F$ might contain edges outside $G[A\cup B\cup C\cup D\cup X]$. Then, we slightly abuse notation so that $G[A\cup B\cup C\cup D\cup X]/F$ refers to $G[A\cup B\cup C\cup D\cup X]/(F\cap E(G[A\cup B\cup C\cup D\cup X]))$.} where $X=\{u\in N:$ there exists a vertex $v\in A\cup B\cup C\cup D$ such that $u$ and $v$ belong to the same connected component of $G[F]\}$?}

Intuitively, the vertex set $X$ consists of the noise (represented by $N$) that ``interacts'' with non-noise (represented by $V(G)\setminus N$) through contracted edges (in $F$), i.e.~the vertices in $N$ that lie together with at least one vertex in $V(G)\setminus N$ in a component that will be contracted and thereby replaced by a single vertex. We refer to the special case of {\sc Noisy Structured Clique Contraction} where $N=\emptyset$ as {\sc Structured Clique Contraction}. Note that {\sc Structured Clique Contraction} is also a special case of {\sc Clique Contraction}.

Solutions to instances of {\sc Noisy Structured Clique Contraction} exhibit the following property, which will be crucial in the proof of Theorem~\ref{res:CliqueContraction} as well as results in Section \ref{sec:contractionToClasses}.

\begin{lemma}\label{lem:propNoisyStructured}
Let $F$ be a solution to an instance $(G,A,B,C,D,N,n)$ of {\sc Noisy Structured Clique Contraction}. Then, $F$ is a matching of size $n$ in $G$ such that each edge in $F$ has one endpoint in $A$ and the other in $B$.
\end{lemma}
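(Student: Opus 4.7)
The plan is to run a counting argument driven by the two non-adjacency constraints $A\not\sim D$ and $B\not\sim C$. First I would set $F':=F\cap E(G[A\cup B\cup C\cup D\cup X])$, so the supervertices (connected components of $(A\cup B\cup C\cup D\cup X, F')$) are the vertices of the resulting clique, and $|F'|\le n$. Each non-singleton supervertex $S$ uses at least $|S|-1$ edges of $F'$. The key structural observations are: (i) since $a\not\sim d$ in $G$ for every $a\in A$, $d\in D$, an $A$-singleton $\{a\}$ and a $D$-singleton $\{d\}$ cannot both occur as supervertices, because they would need to be adjacent in the clique; and (ii) any supervertex $S$ containing both an $A$- and a $D$-vertex must also contain a vertex of $B\cup C\cup X$, since every $F'$-path from $A$ to $D$ must traverse $B\cup C\cup X$. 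Combined, pure-$A$ and pure-$D$ supervertices have size $\ge |S\cap A|+1$ (resp.\ $\ge |S\cap D|+1$) whenever singletons of the opposite side exist, while each mixed (AD) supervertex has size $\ge |S\cap A|+|S\cap D|+1$.

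I would then split into cases on the existence of $A$- and $D$-singleton supervertices. If neither exists, every $A$- and every $D$-vertex is incident to an $F'$-edge, so $2|F'|\ge |A|+|D|=3n$, contradicting $|F'|\le n$. If some $A$-singleton $\{a\}$ exists, then no $D$-singleton exists by~(i), so every $D$-vertex lies in a $D$-containing supervertex, and adjacency to $\{a\}$ forces each such supervertex to have a non-$D$ vertex adjacent to $a$; summing $(|S|-1)$ over $D$-containing supervertices gives $|F'|\ge |D|=2n$, a contradiction. The surviving case is that some $D$-singleton exists and no $A$-singleton exists. Now applying the symmetric bound to $A$-containing supervertices,
\[
|F'|\;\ge\;\sum_{\text{pure-}A}|S\cap A|\;+\;\sum_{\text{AD}}(|S\cap A|+|S\cap D|)\;=\;|A|+n_{AD}^{D}\;=\;n+n_{AD}^{D},
\]
where $n_{AD}^{D}$ is the number of $D$-vertices in AD-supervertices. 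With $|F'|\le n$, this forces equality throughout: $|F'|=n$, $n_{AD}^{D}=0$ (so there are no AD-supervertices), every pure-$D$ supervertex is a singleton, no other non-singleton supervertex exists, and every pure-$A$ supervertex contains exactly one non-$A$ vertex.

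It then remains to identify the single non-$A$ vertex in each pure-$A$ supervertex, which is where the $B\not\sim C$ constraint enters. By the same argument as in (i), a $B$-singleton and a $C$-singleton cannot coexist, so either no $B$-singleton or no $C$-singleton exists. There are only $\alpha\le |A|=n$ non-$A$ slots across the pure-$A$ supervertices, while $|C|=2n$, so ``no $C$-singleton'' is infeasible. Hence no $B$-singleton exists, forcing all $n$ $B$-vertices into the $n$ non-$A$ slots; this yields exactly $n$ pure-$A$ supervertices, each of the form $\{a,b\}$ with $a\in A$, $b\in B$. Finally, $|F\setminus F'|\le n-|F'|=0$, so $F=F'$ is a matching of $n$ edges each with one endpoint in $A$ and one in $B$.

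The main obstacle is the sharp accounting in the surviving case: the two ``$+1$'' terms (one per AD-supervertex from the $A$-$D$ path obstruction, one per pure-$A$ supervertex from the forced non-$A$ neighbor of the $D$-singleton) are exactly what pushes the lower bound on $|F'|$ up to $n$ with zero slack; any slack would allow the argument to fail and admit non-matching solutions. The final $B$/$C$ bookkeeping then leverages the strict inequality $|C|=2n>n\ge \alpha$ to rule out the alternative absorption of $C$-vertices and pin down the matching structure.
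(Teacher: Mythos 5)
Your proof is correct, but it takes a genuinely different route from the paper's. The paper argues in two short steps, both by directly exploiting degree counts: first, any $u\in A\cup B$ not touched by $F$ would have degree at most $|A\cup B|-1+|C\cup D|/2+|X|=4n-1+|X|$ in $G[A\cup B\cup C\cup D\cup X]$, strictly less than the $\ge 5n-1+|X|$ needed in a clique on $\ge 5n+|X|$ vertices, so every $A\cup B$ vertex is $F$-covered; with $|F|\le n$ and $|A\cup B|=2n$ this already forces $F$ to be a perfect matching on $A\cup B$. Second, if an edge of $F$ had both endpoints in $A$ (symmetrically $B$), the contracted vertex would be non-adjacent to every vertex of $D$ (resp.\ $C$), which survive as singletons, contradicting the clique property. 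Your argument instead runs an extremal bookkeeping on supervertices: ruling out configurations of $A$- and $D$-singletons, and then, in the surviving case, squeezing $|F'| \ge n + n_{AD}^D$ against $|F'|\le n$ to force tightness, which pins down the component structure; the $B\not\sim C$ constraint then enters at the end to identify the unique non-$A$ slot in each pure-$A$ component. Both proofs hinge on the same two non-adjacency constraints and the $|F|\le n$ budget, but the paper's degree comparison is considerably shorter and symmetric in $A/C$ versus $B/D$, while your version is an asymmetric case analysis whose chief virtue is that it yields the exact component-level picture (all pure-$A$ components of size two, all else singleton) as a byproduct of the tightness conditions rather than as a final step. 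One small stylistic note: your setup defines $F'=F\cap E(G[A\cup B\cup C\cup D\cup X])$ and only concludes $F=F'$ at the very end via $|F\setminus F'|\le n-|F'|=0$; the paper avoids ever distinguishing $F$ from its restriction by proving up front that all of $F$ lies in $G[A\cup B]$.
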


\begin{proof}
We first argue that every vertex in $A\cup B$ is incident to at least one edge in $F$. Targeting a contradiction, suppose that there exists a vertex $u\in A\cup B$ that is not incident to any edge in $F$.  Because $|A\cup B\cup C\cup D|=6n$, $|F|\leq n$ and $G[A\cup B\cup C\cup D\cup X]/F$ is a clique (where the last two properties follow from the supposition that $F$ is a solution), it holds that $G[A\cup B\cup C\cup D\cup X]/F$ is a clique on at least $5n+|X|$ vertices. Hence, the degree of every vertex in $G[A\cup B\cup C\cup D\cup X]/F$, and in particular of $u$, should be $5n-1+|X|$ in $G[A\cup B\cup C\cup D\cup X]/F$. However, because no vertex in $A$ is adjacent to any vertex in $D$ and no vertex in $B$ is adjacent to any vertex in $C$, the degree of any vertex in $A\cup B$, and in particular of $u$, is at most $|A\cup B|-1 + |C\cup D|/2 + |X|=4n-1+|X|$ in $G[A\cup B\cup C\cup D\cup X]$. Because $u$ is not incident to any edge in $F$, its degree in $G[A\cup B\cup C\cup D\cup X]/F$ is at most its degree in $G[A\cup B\cup C\cup D\cup X]$. This is a contradiction, thus we get that indeed every vertex in $A\cup B$ is incident to at least one edge in $F$. From this, because $|F|\leq n$ and $|A\cup B|=2n$, we derive that $F$ is a perfect matching in $G[A\cup B]$.

It remains to argue that every edge in $F$ has one endpoint in $A$ and the other in $B$. Targeting a contradiction, suppose that this is false. Because $F$ is a perfect matching in $G[A\cup B]$, this means that there exist two vertices $a,a'\in A$ such that $\{a,a'\}\in F$. By the definition of {\sc Noisy Structured Clique Contraction}, neither $a$ nor $a'$ is adjacent to any vertex in $D$. Moreover, note that $D\subseteq V(G[A\cup B\cup C\cup D\cup X]/F)$. In particular, the vertex of $G[A\cup B\cup C\cup D\cup X]/F$ yielded by the contraction of $\{a,a'\}$ is not adjacent to any vertex of $D$ in $G[A\cup B\cup C\cup D\cup X]/F$. However, this is a contradiction because $G[A\cup B\cup C\cup D\cup X]/F$ is a clique.
\end{proof}

We now prove a lower bound for {\sc Structured Clique Contraction}. Because it is a special case of {\sc Clique Contraction}, this will directly yield the correctness of Theorem~\ref{res:CliqueContraction}.

\begin{lemma}\label{res:StructCliqueContraction}
Unless the ETH is false, there does not exist an algorithm that solves {\sc Structured Clique Contraction} in time $n^{o(n)}$ where $n=|V(G)|$.
\end{lemma}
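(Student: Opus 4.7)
The plan is to reduce \textsc{Cross Matching} to \textsc{Structured Clique Contraction} in a way that preserves the number of vertices up to a constant factor, and then invoke Lemma~\ref{res:CrossMatching}. Given an instance $(G,A,B)$ of \textsc{Cross Matching} with $|A|=|B|=n$, I would construct a graph $G'$ on $6n$ vertices as follows. Keep $V(G)=A\cup B$ and all edges of $G$, and add two fresh sets $C$ and $D$ of $2n$ vertices each. Make $C$ a clique, make $D$ a clique, and add all edges between $C$ and $D$; add all edges between $A$ and $C$, and all edges between $B$ and $D$; add no edges between $A$ and $D$, and no edges between $B$ and $C$. The resulting tuple $(G',A,B,C,D,\emptyset,n)$ is then a valid instance of \textsc{Structured Clique Contraction}, and the partition satisfies the required non-adjacency constraints by construction.

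The correctness argument is then essentially a direct application of Lemma~\ref{lem:propNoisyStructured}. In the forward direction, if $M$ is a perfect cross matching in $G$ with $G/M$ a clique, I would set $F:=M$. Each edge of $F$ has one endpoint in $A$ and the other in $B$, so contracting $F$ in $G'$ merges each $a\in A$ with its matched $b\in B$ into a single vertex that inherits adjacency to all of $C$ (from the $A$-side) and all of $D$ (from the $B$-side). Combined with the fact that $C\cup D$ already induces a clique, all pairs of vertices outside $A\cup B$ are adjacent in $G'/F$, all pairs with one endpoint in the contracted $A\cup B$ part and one in $C\cup D$ are adjacent, and pairs within the contracted $A\cup B$ are adjacent because $G/M$ is a clique. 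Hence $G'/F$ is a clique, and $|F|=n$. Conversely, if $F$ is a solution to the constructed \textsc{Structured Clique Contraction} instance, then Lemma~\ref{lem:propNoisyStructured} tells us that $F$ is a perfect matching of size $n$ between $A$ and $B$; in particular $F\subseteq E(G)$. Since $G'/F$ is a clique, its restriction to the $n$ vertices obtained by contracting $A\cup B$ is also a clique, which means $G/F$ is a clique. Therefore $F$ is a valid cross matching.

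Finally I would translate running times. The construction is polynomial in $n$, and $|V(G')|=6n$. Suppose for contradiction that \textsc{Structured Clique Contraction} admits an algorithm running in time $N^{o(N)}$ on $N$-vertex inputs. Composing with the reduction yields an algorithm for \textsc{Cross Matching} running in time $(6n)^{o(6n)}=n^{o(n)}$, contradicting Lemma~\ref{res:CrossMatching}. Since \textsc{Structured Clique Contraction} is a special case of \textsc{Clique Contraction}, this simultaneously establishes Theorem~\ref{res:CliqueContraction}.

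I do not expect any real obstacle here: the heavy lifting has already been done in Lemma~\ref{lem:propNoisyStructured}, which forces every solution to behave like a cross matching, and in Lemma~\ref{res:CrossMatching}, which supplies the lower bound. The only delicate point is designing the ``padding'' $(C,D)$ so that (i) the partition constraints of \textsc{Structured Clique Contraction} are satisfied, (ii) no spurious solution is introduced by letting $F$ use edges incident to $C\cup D$, and (iii) any genuine cross matching still contracts the whole graph to a clique; the symmetric choice above achieves all three.
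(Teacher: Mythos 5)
Your construction is exactly the paper's: the fresh set $C\cup D$ forming a clique of size $4n$, with $A$ joined completely to $C$, $B$ joined completely to $D$, and no edges from $A$ to $D$ or $B$ to $C$, and the correctness argument invoking Lemma~\ref{lem:propNoisyStructured} to force any solution to be a perfect cross matching. This matches the paper's proof of Lemma~\ref{res:StructCliqueContraction} in both construction and reasoning.
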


\begin{proof}
Targeting a contradiction, suppose that there exists an algorithm, denoted by {\sf CliConAlg}, that solves {\sc Structured Clique Contraction} in time $n^{o(n)}$ where $n$ is the number of vertices in the input graph. We will show that this implies the existence of an algorithm, denoted by {\sf MatchingAlg}, that solves {\sc Cross Matching} in time $n^{o(n)}$ where $n$ is the size of the set $A$ in the input, thereby contradicting Lemma~\ref{res:CrossMatching} and hence completing the proof.

\begin{figure}[t]
 \center{\includegraphics[scale=0.6]{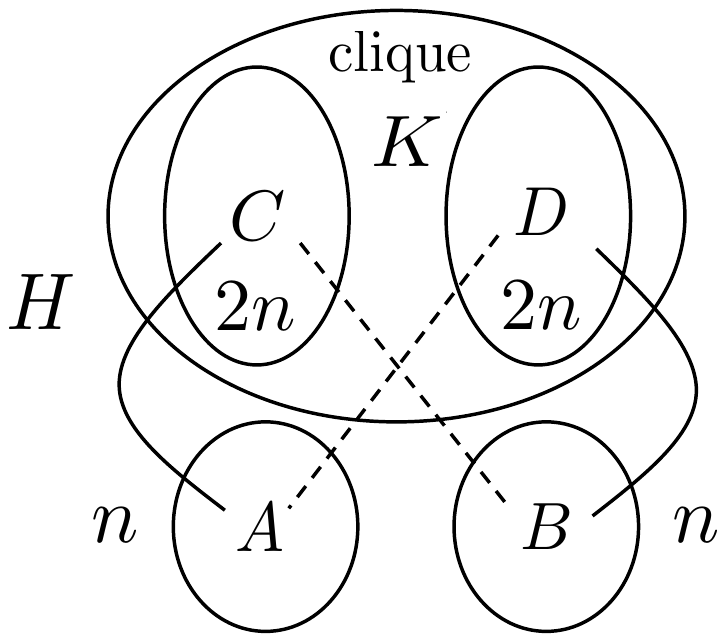}}
\caption{\label{fig:structuredCC} The construction of an instance of {\sc Structured Clique Contraction} in the proof of Lemma~\ref{res:StructCliqueContraction} where dashed lines represent non-edges.}
\end{figure}

We define the execution of {\sf MatchingAlg} as follows. Given an instance $(G,A,B)$ of {\sc Cross Matching}, {\sf MatchingAlg} constructs an instance $(H,A,B,C,D,n)$ of {\sc Structured Clique Contraction} as follows (see Fig.~\ref{fig:structuredCC}):
\begin{itemize}
\item Let $n=|A|$, and $K$ be a clique on $4n$ new vertices. Let $(C,D)$ be a partition of $V(K)$ such that $|C|=|D|$.
\item $V(H)=V(G)\cup V(K)$.
\item $E(H)=E(G)\cup E(K)\cup \{\{a,c\}: a\in A, c\in C\}\cup \{\{b,d\}: b\in B, d\in D\}$.
\end{itemize}
Then, {\sf MatchingAlg} calls {\sf CliConAlg} with $(H,A,B,C,D,n)$ as input, and returns the answer.

First, note that by construction, $|V(H)|=6n$. Thus, because {\sf CliConAlg} runs in time $|V(H)|^{o(|V(H)|)} \leq n^{o(n)}$, it follows that {\sf MatchingAlg} runs in time $n^{o(n)}$.

For the correctness of the algorithm, first suppose that $(G,A,B)$ is a \yes -instance of {\sc Cross Matching}. This means that there exists a perfect matching $M$ in $G$ such that every edge in $M$ has one endpoint in $A$ and the other in $B$, and $G/M$ is a clique. By the definition of $E(H)$, $M\subseteq E(H)$. We will show that  $H/M$ is a clique. As $|M|=n$, this will mean that $(H,A,B,C,D,n)$ is a \yes -instance of {\sc Structured Clique Contraction}, which will mean, in turn, that the call to {\sf CliConAlg} with $(H,A,B,C,D,n)$ as input returns \yes, and hence {\sf MatchingAlg} returns \yes.

Note that $V(H/M)=V(K)\cup V(G/M)$. To show that $H/M$ is a clique, we consider two arbitrary vertices $u,v\in V(H/M)$, and show that they are adjacent in $H/M$. If $u,v\in V(K)$, then because $K$ is a clique, it is clear that $\{u,v\}\in E(H/M)$. Moreover, if $u,v\in G/M$, then because $G/M$ is a clique, it is clear that $\{u,v\}\in E(H/M)$. Thus, one of the vertices $u$ and $v$ belongs to $V(G/M)$ and the other belongs to $V(K)$. We suppose w.l.o.g.~that  $u\notin V(K)$. Because $M$ is a perfect matching in $G$ such that every edge in $M$ has one endpoint in $A$ and the other in $B$, it follows that $u$ resulted from the contraction of the edge between some $a\in A$ and some $b\in B$. If $v\in C$, then $\{a,v\}\in E(H)$, and otherwise $v\in D$ and so $\{b,v\}\in E(H)$. Thus, by the definition of contraction, we conclude that $\{u,v\}\in E(H/M)$. This completes the proof of the forward direction.

Now, suppose that {\sf MatchingAlg} returns \yes, which means that the call to {\sf CliConAlg} with $(H,A,B,C,D,n)$ returns \yes.  Thus, $(H,A,B,C,D,n)$ is a \yes -instance, which means that there exists a subset $F\subseteq E(H)$ of size at most $n$ such that $H/F$ is a clique. We will show that $F$ is a perfect matching in $G$ such that every edge in $F$ has one endpoint in $A$ and the other in $B$. Because $H/F$ is a clique, this will imply that $G/F$ is a clique and thus that $(G,A,B)$ is a \yes -instance of {\sc Cross Matching}. To achieve this, notice that by Lemma \ref{lem:propNoisyStructured}, $F$ is a matching of size $n$ in $H$ such that each edge in $F$ has one endpoint in $A$ and the other in $B$. Because $G=H[A\cup B]$, we have that $F$ is a perfect matching in $G$. Thus, the proof of the reverse direction is complete.
\end{proof}

\begin{corollary}\label{res:HadwigerNumber}
Unless the ETH is false, there does not exist an algorithm that solves {\sc Hadwiger Number} in time $n^{o(n)}$ where $n=|V(G)|$.
\end{corollary}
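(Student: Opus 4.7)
The plan is to give a polynomial-time reduction from {\sc Structured Clique Contraction} (the special case of {\sc Clique Contraction} used in the proof of Lemma~\ref{res:StructCliqueContraction}) to {\sc Hadwiger Number} that preserves the number of vertices up to a constant factor. Let $(H,A,B,C,D,n)$ be an instance of {\sc Structured Clique Contraction} built as in that proof, so $|V(H)|=6n$, the set $C\cup D$ induces a clique of size $4n$, every vertex of $C\cup D$ has degree exactly $5n-1$ in $H$, and every vertex of $A\cup B$ has degree at most $4n-1$ in $H$ (because no vertex of $A$ is adjacent to any vertex of $D$ and no vertex of $B$ is adjacent to any vertex of $C$). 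I will show that $(H,A,B,C,D,n)$ is a \yes-instance if and only if the Hadwiger number of $H$ is at least $5n$; the corollary then follows immediately from Lemma~\ref{res:StructCliqueContraction}.

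The forward direction is essentially free: if $F$ is a solution, Lemma~\ref{lem:propNoisyStructured} guarantees that $F$ is a perfect matching between $A$ and $B$ of size exactly $n$, so $H/F$ is a clique on $5n$ vertices and hence $K_{5n}$ is a minor of $H$. For the reverse direction, suppose $h(H)\geq 5n$, witnessed by pairwise disjoint connected branch sets $V_1,\ldots,V_{5n}\subseteq V(H)$ with at least one edge of $H$ between each pair. Let $k$ be the number of $V_i$ with $|V_i|\geq 2$. Summing sizes yields $5n+k\leq \sum_i |V_i|\leq 6n$, so $k\leq n$. On the other hand, any singleton branch set $\{v\}$ must satisfy that $v$ has degree at least $5n-1$ in $H$, so by the degree bound above $v\in C\cup D$; since $|C\cup D|=4n$, there are at most $4n$ singletons, i.e.\ $5n-k\leq 4n$, giving $k\geq n$. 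Therefore $k=n$ exactly: every vertex of $C\cup D$ is a singleton branch set, and the $n$ non-singleton branch sets together contain all $2n$ vertices of $A\cup B$ and only those, with each non-singleton of size exactly two.

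The key structural step, and the main obstacle, is to pin down the shape of each non-singleton using the non-adjacencies between $A$ and $D$ and between $B$ and $C$. Fix a non-singleton $V_i=\{x,y\}\subseteq A\cup B$; connectivity forces $\{x,y\}\in E(H)$. Because $V_i$ must receive an edge from every singleton lying in $C$, and no vertex of $B$ is adjacent to any vertex of $C$, at least one of $x,y$ must lie in $A$; symmetrically, at least one must lie in $B$. Hence each non-singleton is an edge of $H[A\cup B]$ with one endpoint in $A$ and the other in $B$, and these $n$ edges form a perfect matching $M$ between $A$ and $B$. The pairwise-edge condition on the branch sets translates exactly to the statement that $H/M$ is a clique on $5n$ vertices, so $(H,A,B,C,D,n)$ is a \yes-instance. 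Since $|V(H)|=6n$, an $n^{o(n)}$-time algorithm for {\sc Hadwiger Number} applied to $(H,5n)$ would yield an $n^{o(n)}$-time algorithm for {\sc Structured Clique Contraction}, contradicting Lemma~\ref{res:StructCliqueContraction} and thereby the ETH.
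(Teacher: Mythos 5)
Your proof is correct, but it takes a genuinely different route from the paper's. The paper reduces from arbitrary \textsc{Clique Contraction} instances: given $(G,t)$, it first returns \no\ if $G$ is disconnected, and otherwise it queries whether $h(G)\geq |V(G)|-t$. The backward direction of that equivalence hinges on a clean general observation: in a connected graph, any $K_h$-minor model can be realized using \emph{only} edge contractions, because every vertex deletion in a minor sequence can be replaced by a contraction of some incident edge (which exists by connectivity) and all edge deletions can simply be dropped, since the target $K_h$ is complete. Your proof instead reduces from \textsc{Structured Clique Contraction} and directly dissects the branch-set structure of any $K_{5n}$-minor: using the degree bound $\deg_H(v)\leq 4n-1$ for $v\in A\cup B$ coming from the forbidden $A$--$D$ and $B$--$C$ adjacencies, you force all singletons into $C\cup D$, then get $k=n$ by a counting squeeze, and then use the forbidden adjacencies again to pin each size-two branch set to an $A$--$B$ edge. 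In effect you re-derive, for branch sets, the same structural conclusion that Lemma~\ref{lem:propNoisyStructured} gives for edge sets $F$. What the paper's route buys is generality and brevity---it needs no information about $(A,B,C,D)$ and the connectivity observation is a one-liner. What your route buys is that the branch-decomposition analysis is entirely self-contained and exposes exactly how the structured instance controls the minor model; it also avoids the (mild) subtlety about the ordering of operations when replacing vertex deletions. One small presentational remark: your argument in fact goes through for \emph{all} instances of \textsc{Structured Clique Contraction} (the only facts you use are $|V(H)|=6n$, the degree bound for $A\cup B$, and Lemma~\ref{lem:propNoisyStructured}), so the restriction to ``instances built as in that proof'' is unnecessary; stating it for general instances makes the appeal to Lemma~\ref{res:StructCliqueContraction} at the end fully literal rather than implicitly routing through \textsc{Cross Matching}.
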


\begin{proof}
Targeting a contradiction, suppose that there exists an algorithm, denoted by {\sf HadwigerAlg}, that solves {\sc Hadwiger Number} in time $n^{o(n)}$ where $n$ is the number of vertices in the input graph. We will show that this implies the existence of an algorithm, denoted by {\sf CliConAlg}, that solves {\sc Clique Contraction} in time $n^{o(n)}$ where $n$ is the number of vertices in the input graph, thereby contradicting Theorem \ref{res:CliqueContraction} and hence completing the proof.

We define the execution of {\sf CliConAlg} as follows. Given an instance $(G,t)$ of {\sc Clique Contraction}, if $G$ is not connected, then {\sf CliConAlg}  returns \no, and otherwise it returns \yes\ if and only if {\sf HadwigerAlg} returns \yes\ when called with $(G,|V(G)|-t)$ as input. Because the call to {\sf HadwigerAlg} with input $(G,|V(G)|-t)$ runs in time $n^{o(n)}$ where $n=|V(G)|$, we have that {\sf CliConAlg} runs in time $n^{o(n)}$ as well.

For the correctness of the algorithm, first observe that if $G$ is not connected, then no sequence of edge contractions can yield a clique, and hence it is correct to return \no. Thus, now assume that $G$ is connected. First, suppose that $(G,t)$ is a \yes -instance of {\sc Clique Contraction}. This means that there exists a sequence of at most $t$ edge contractions that transforms $G$ into a clique. In particular, this clique must have at least $|V(G)|-t$ vertices, and therefore the Hadwiger number of $G$ is at least as large as $|V(G)|-t$. By the correctness of {\sf HadwigerAlg}, its call with $(G,|V(G)|-t)$ returns \yes, and therefore {\sf CliConAlg} returns \yes.

Now, suppose that {\sf CliConAlg} returns \yes, which means that the call to {\sf HadwigerAlg} with $(G,|V(G)|-t)$ returns \yes. By the correctness of {\sf HadwigerAlg}, the clique $K_h$ for $h=|V(G)|-t$ is a minor of $G$. This means that there is a sequence of vertex deletions, edge deletions and edge contractions that transforms $G$ into $K_h$. In particular, this sequence can contain at most $t$ vertex deletions and edge contractions in total. Furthermore, by replacing each vertex deletion for a vertex $v$ by an edge contraction for some edge $e$ incident to $v$ (which exists because $G$ is connected) and dropping all edge deletions, we obtain another sequence that transforms $G$ into $K_h$. Because this sequence contains only edge contractions, and at most $t$ of them, we conclude that $(G,t)$ is a \yes -instance of {\sc Clique Contraction}.
\end{proof}

\bibliography{references,subiso,book_pc,hom}

\appendix

%!tex root=mainClique.tex

\section{Preliminaries}\label{sec:prelims}

For a vector $R$ with $L$ entries and $i\in\{1,\ldots,L\}$, let $R[i]$ be the value of the $i$-th entry of $R$. Unless specified otherwise, bases of logarithms are assumed to be $2$. %Given a function $f: A\rightarrow B$ and a subset $A'\subseteq A$, let $f|_{A'}$ denote the restriction of $f$ to $A'$.

Given a graph $G$, let $V(G)$ and $E(G)$ denote its vertex set and edge set, respectively. Given a subset $U\subseteq V(G)$, let $G[U]$ denote the subgraph of $G$ induced by $U$, that is, $V(G[U])=U$ and $E(G[U])=\{\{u,v\}\in E(G): u,v\in U\}$. Given a subset $F\subseteq E(G)$, let $V(F)$ denote the set of vertices that are incident in $G$ to at least one edge in $F$, and let $G[F]=G[V(F)]$. We say that $G$ contains a graph $H$ as an induced subgraph if there exists $U\subseteq V(G)$ such that $G[U]$ and $H$ are identical up to relabelling vertices (more precisely, isomorphic). The set of neighbors of a vertex $u\in V(G)$ is denoted by $N_G(u)$, that is, $N_G(u)=\{v\in V(G): \{u,v\}\in E(G)\}$. When $G$ is clear from context, we drop it from subscripts of notations. 
A {\em matching} $M$ in $G$ is subset of $E(G)$ such that no two edges in $M$ share an endpoint. In case every vertex in $V(G)$ is an endpoint of an edge in $M$, that is, $|M|=|V(G)|/2$, it is said that $M$ is {\em perfect}. A function $c: V(G)\rightarrow \mathbb{N}$ is a {\em proper coloring} of $G$ if for every edge $\{u,v\}\in E(G)$, $c(u)\neq c(v)$. The {\em complement} of $G$, denoted by $\overline{G}$, is the graph with vertex set $V(G)$ and edge set $\{\{u,v\}\notin E(G): u,v\in V(G), u\neq v\}$.

Given an edge $e=\{u,v\}\in E(G)$, the {\em contraction} of $e$ in $G$ is the operation that replaces $u$ and $v$ by a new vertex that is adjacent to all vertices previously adjacent to $u$ or $v$ (or both), where the resulting graph is denoted by $G/e$. In other words, $V(G/e)=(V(G)\setminus\{u,v\})\cup\{x\}$ for some new vertex $x$, and $E(G/e)=\{\{s,t\}\in E(G): s,t\notin\{u,v\}\}\cup \{\{s,x\}: s\in N(u)\cup N(v)\}$. More generally, given a subset $F\subseteq E(G)$, the {\em contraction} of $F$ in $G$ is the operation that replaces each connected component $C$ of $G[F]$ by a new vertex $x_C$ that is adjacent to all vertices previously adjacent to at least one vertex in $C$, where the resulting graph is denoted by $G/F$.
A graph $H$ is said to be a {\em minor} of a graph $G$ if $H$ can be obtained from $G$ by a series of vertex deletions, edge deletions and edge contractions. For any $h\in\mathbb{N}$, the clique on $h$ vertices is denoted by $K_h$, and the cycle on $h$ vertices is denoted by $C_h$. The {\em Hadwiger number} of a graph $G$ is the largest $h\in\mathbb{N}$ such that $K_h$ is a minor of $G$.

To obtain (essentially) tight conditional lower bounds for the running times of algorithms, we rely on the {\em Exponential-Time Hypothesis (ETH)}~\cite{DBLP:conf/coco/ImpagliazzoP99,DBLP:journals/jcss/ImpagliazzoPZ01}. To formalize its statement, we remind that given a formula $\varphi$ in conjuctive normal form (CNF) with $n$ variables and $m$ clauses, the task of {\sc CNF-SAT} is to decide whether there is a truth assignment to the variables that satisfies $\varphi$. In the {\sc $p$-CNF-SAT} problem, each clause is restricted to have at most $p$ literals. Then, ETH asserts that {\sc 3-CNF-SAT} cannot be solved in time $2^{o(n)}$. 
%!tex root=mainClique.tex

\section{Details Omitted from Section \ref{sec:listSI}}\label{app:detHomToISO}

\begin{figure}[t]
 \center{\includegraphics[scale=0.6]{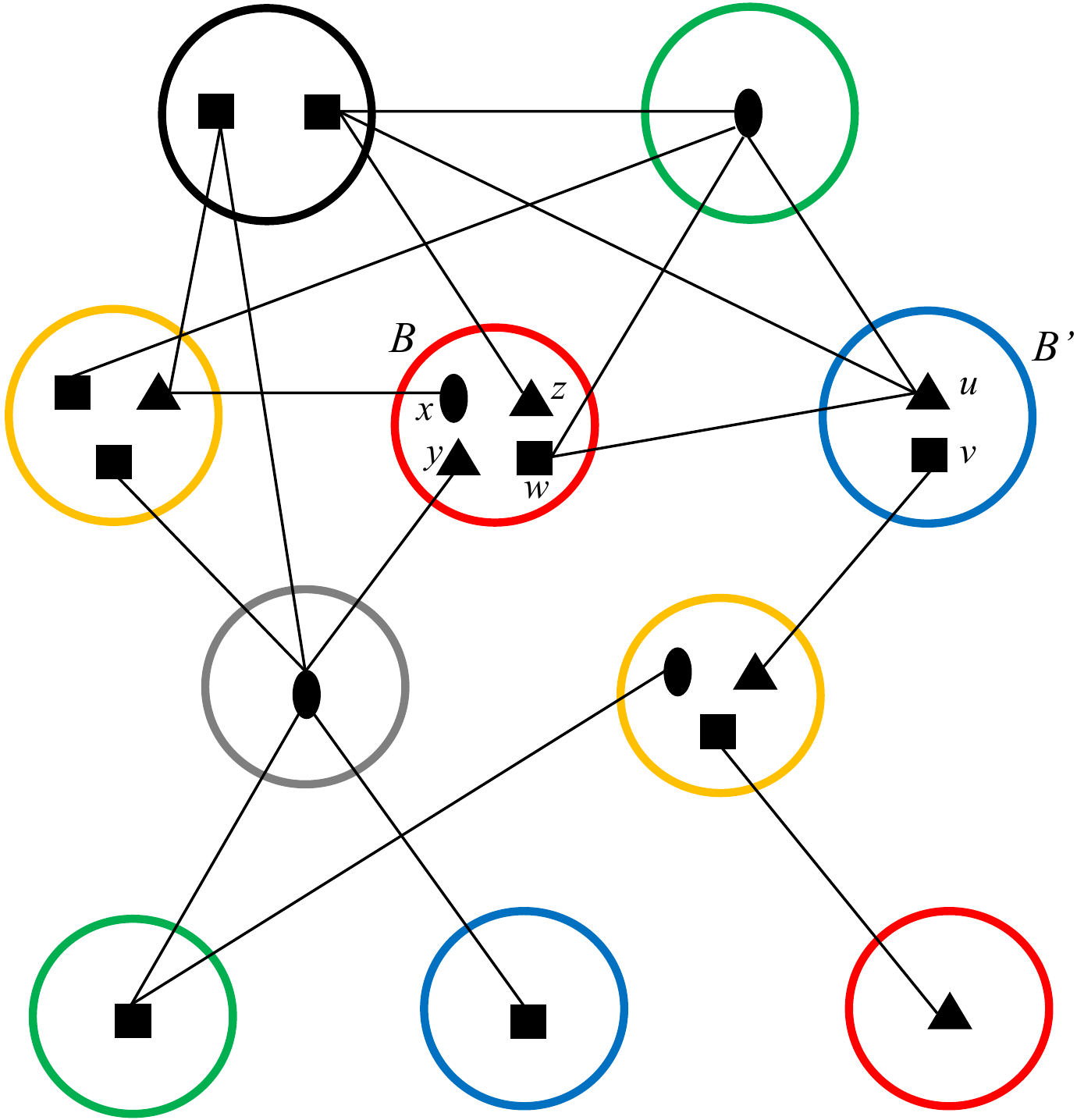}}
\caption{\label{fig:reduction} The reduction in Definition \ref{def:reduceLSH}. The vertices  of $G$ are depicted by black shapes, where each distinct shape represents a different color (say, square is $1$, rectangle is $2$ and oval is $3$), and the vertices of $\widetilde{G}$ are depicted by circles enclosing the vertex sets identifies with them, where the color of a vertex is the color of its circle (say, black is $1$, green is $2$, yellow is $3$, red is $4$, blue is $5$ and grey is $6$). Edges (of both graphs) are depicted by black lines. (The graph $\widetilde{H}$ is not shown). Then, the function $\phi_B$ is defined as follows: $\phi_B(1)=z, \phi_B(2)=\phi_B(5)=w, \phi_B(3)=x,\phi_B(4)=0,$ and $\phi_B(6)=y$. Moreover, the function $\phi_{B'}$ is defined as follows: $\phi_{B'}(1)=\phi_{B'}(2)=\phi_{B'}(4)=u, \phi_{B'}(3)=v,$ and $\phi_{B'}(5)=\phi_{B'}(6)=0$. With respect to $B$ and $B'$, the labeling $\ell$ is defined as follows: $\ell(B)=\{(R,4): R[1]\neq 0,R[2]=R[5]\neq 0, R[3]\neq 0, R[4]=0,R[6]\neq 0\}$, and $\ell(B')=\{(R,5): R[1]=R[2]=R[4]\neq 0, R[3]\neq 0, R[5]=R[6]=0\}$.}
\end{figure}

We first present the proof of Lemma \ref{res:PropColLSIH}.

\begin{proof}[Proof of Lemma \ref{res:PropColLSIH}]
Targeting a contradiction, suppose that there exists an algorithm, denoted by {\sf LSHAlg}, that solves {\sc Prop-Col LSH} in time $n^{o(n)}$ where $n = \max(|V(G)|,|V(H)|)$ for input graphs $G$ and $H$. We will show that this implies the existence of an algorithm, denoted by {\sf ColAlg}, that solves {\sc 3-Coloring} on graphs of maximum degree $4$ in time $2^{o(n)}$ where $n$ is the number of vertices of the input graph, which contradicts the ETH and hence completes the~proof.

The execution of {\sf ColAlg} is as follows. Given an instance $G$ of {\sc 3-Coloring} on graphs of maximum degree $4$, {\sf ColAlg} constructs the instance $\reduce(G)=(\widetilde{G},\widetilde{H},\ell)$ of {\sc LSH} in Definition \ref{def:reduceLSH} with $r=\lceil\log_{\gamma(4)}(n/\log n)\rceil$ where $n=|V(G)|$. By Lemma \ref{lem:reduceCorrecr}, $\reduce(G)=(\widetilde{G},\widetilde{H},\ell)$ is computable in  time polynomial in the sizes of $G,\widetilde{G}$ and $\widetilde{H}$, and has the following~properties: 
\begin{itemize}
\item $G$ is a \yes -instance of {\sc 3-Coloring} if and only if $(\widetilde{G},\widetilde{H},\ell)$ is a \yes -instance of {\sc LSH}.
\item $|V(\widetilde{G})|\leq n/r=\OO(n/\log n)$, and $|V(\widetilde{H})|\leq \gamma(4)^r=\OO(n/\log n)$.
\end{itemize}
Then, {\sf ColAlg} calls the polynomial-time algorithm in Lemma \ref{lem:reduceModification} with $(\widetilde{G},\widetilde{H},\ell)$ to construct an equivalent instance $(\widetilde{G},\widetilde{H}',c_{\widetilde{G}},c_{\widetilde{H}'},\ell)$ of {\sc Prop-Col LSH}, where $\widetilde{H}'$ is a subgraph of $\widetilde{H}$.
Lastly, {\sf ColAlg} calls {\sf LSHAlg} with $(\widetilde{G},\widetilde{H}',c_{\widetilde{G}},c_{\widetilde{H}'},\ell)$ as input, and returns its answer.

Since the instance $G$ of {\sc 3-Coloring} was argued above to be equivalent to the instance $(\widetilde{G},\widetilde{H}',c_{\widetilde{G}},c_{\widetilde{H}'},\ell)$ of {\sc Prop-Col LSH}, the correctness of {\sf ColAlg} directly follows. For the running time, denote $M=\max(|V(\widetilde{G})|,|V(\widetilde{H})|)$, and notice that  $M\leq \OO(n/\log n)$. Thus, because {\sf LSHAlg} runs in time $M^{o(M)} \leq (n/\log n)^{o(n/\log n)}\leq 2^{o(n)}$, it follows that {\sf ColAlg} runs in time $2^{o(n)}$. This completes the proof.
\end{proof}

In the rest of this appendix, we provide the details omitted from Section \ref{sec:listSI} regarding the transition {\sc Prop-Col LSH} to {\sc Prop-Col LSI}. We begin by adapting the Turing reduction of \cite{DBLP:journals/jacm/CyganFGKMPS17} from {\sc LSH} to {\sc LSI}.
 
 \begin{lemma}\label{lem:reduceLSHtoLSI}
 There is an $2^{\OO(n)}$-time algorithm that, given an instance $(G, H, c_G, c_H,\ell)$ of {\sc Prop-Col LSH}, returns $2^{\OO(n)}$ instances of {\sc Prop-Col LSI} having input graphs on at most $n$ vertices for $n := \max(|V(G)|,|V(H)|)$, such that $(G, H, c_G, c_H,\ell)$ is a \yes -instance of {\sc Prop-Col LSH} if and only if at least one of the returned instances is a \yes -instance of {\sc Prop-Col LSI}.
 \end{lemma}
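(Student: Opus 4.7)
The plan is to mimic the standard Turing reduction from {\sc LSH} to {\sc LSI} of Cygan et al.~\cite{DBLP:journals/jacm/CyganFGKMPS17}, and then simply check that the colorings and label lists carry over correctly so that the produced instances satisfy the extra Prop-Col requirements. For each candidate instance, the input graph will have at most $n := \max(|V(G)|,|V(H)|)$ vertices, and the number of candidate instances will be bounded by the number of ``multiplicity profiles'' describing how many vertices of $G$ may map to each vertex of $H$, which is at most $\binom{2n-1}{n-1} = 2^{\OO(n)}$.

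Concretely, I would enumerate all functions $\mu : V(H) \to \{0,1,\ldots,|V(G)|\}$ satisfying $\sum_{v \in V(H)} \mu(v) = |V(G)|$. For each such $\mu$, I would build $H_\mu$ by replacing each $v \in V(H)$ with an independent set $V_v$ of $\mu(v)$ fresh vertices, and, for every edge $\{v,v'\} \in E(H)$, adding the complete bipartite graph between $V_v$ and $V_{v'}$. I would then extend the coloring by $c_{H_\mu}(x) := c_H(v)$ for every $x \in V_v$, and the labeling by $\ell_\mu(u) := \bigcup_{v \in \ell(u)} V_v$ for every $u \in V(G)$. Since $|V(H_\mu)| = |V(G)| \leq n$, the size constraint is met. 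The output is then the collection of all instances $(G, H_\mu, c_G, c_{H_\mu}, \ell_\mu)$.

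For correctness, in the forward direction a homomorphism $\varphi : V(G) \to V(H)$ witnessing that $(G,H,c_G,c_H,\ell)$ is a \yes-instance induces the profile $\mu(v) := |\varphi^{-1}(v)|$, and any bijection $\varphi'$ that sends each fibre $\varphi^{-1}(v)$ onto $V_v$ solves the $\mu$-instance, because $\varphi$ preserves edges and all cross-pairs $V_v \times V_{v'}$ are edges of $H_\mu$ whenever $\{v,v'\} \in E(H)$. Conversely, any bijection $\varphi' : V(G) \to V(H_\mu)$ solving the $\mu$-instance collapses to a homomorphism $\varphi(u) := v$ where $\varphi'(u) \in V_v$; crucially, each $V_v$ being independent forbids an edge of $G$ from being mapped inside a single $V_v$, so $\{u,u'\} \in E(G)$ forces $\{\varphi(u),\varphi(u')\} \in E(H)$.

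The only conceptual thing to verify is that the Prop-Col conditions survive the transformation, and this is routine: $c_{H_\mu}$ is a proper coloring because every edge of $H_\mu$ crosses two sets $V_v, V_{v'}$ with $\{v,v'\} \in E(H)$ and $c_H$ is already proper; and the color-matching condition on the labeling holds because $x \in \ell_\mu(u)$ means $x \in V_v$ for some $v \in \ell(u)$, whence $c_{H_\mu}(x) = c_H(v) = c_G(u)$ by hypothesis. There is therefore no genuine obstacle---the only point is to check that the Cygan et al.~LSH-to-LSI reduction respects the added Prop-Col structure, essentially because copies of a single vertex inherit its color and form an independent set.
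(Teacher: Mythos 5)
Your proposal is correct and matches the paper's proof essentially verbatim: the paper likewise enumerates multiplicity vectors $P$ with $\sum_v P[v]=|V(G)|$ (your $\mu$), replaces each $v\in V(H)$ by $P[v]$ independent copies connected according to $E(H)$, inherits colors and label lists by copy, and argues forward/backward equivalence via fibre maps exactly as you do. The counting bound and the Prop-Col verification are also the same, so there is nothing to add.
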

 
 \begin{proof}
 Given an instance $(G, H, c_G, c_H, \ell)$ of  {\sc Prop-Col LSH}, the algorithm works as follows. Without loss of generality, suppose that $V(H)=\{1,2,\ldots,|V(H)|\}$. Let ${\cal P}=\{P \in \mathbb{N}_0^{|V(H)|}: \sum_{i=1}^{|V(H)|}P[i] = |V(G)|\}$. That is, $\cal P$ contains every vector with $|V(H)|$ entries that are non-negative integers whose sum is $|V(G)|$. Then, for each $P\in{\cal P}$, the algorithm returns one instance $(G,H_{P},c_{G},c_{H_P},\ell_P)$ of {\sc Prop-Col LSI} that is constructed as follows.
     \begin{itemize}
         \item The graph $H_P$ is constructed from $H$ by replacing each vertex $v\in V(H)$ with $P[v]$ copies of it, denoted $v_1,v_2 \ldots v_{P[v]}$.  (Note that $P[v]$ can be equal to $0$). Then, we connect two vertices $v_i$ to $u_j$ in $H_P$ if and only if $v$ is connected to $u$ in $H$. That is, $V(H_P)=\{v_i: v\in V(H), i\in\{1,2,\ldots,P[v]\}\}$ and $E(H_P)=\{\{u_i,v_j\}: \{u,v\}\in E(H), u_i,u_j\in V(H_P)\}$.
         \item For every vertex $u_i\in V(H_P)$, let $c_{H_P}(u_i) = c_H(u)$.
         \item For every vertex $u\in V(G)$, let $\ell_P(u)=\{v_i\in V(H_P): v\in \ell(u)\}$. 
     \end{itemize}
This completes the description of the algorithm.     
     
First, we consider some $P\in{\cal P}$ and assert that  $(G,H_{P},c_{G},c_{H_P},\ell_P)$ is indeed an instance of {\sc Prop-Col LSI}. By the construction of $V(H_P)$ and since $\sum_{i=1}^{|V(H)|}P[i] = |V(G)|$, we have that $|V(G)|=|V(H_P)|$. Clearly, as $(G, H, c_G, c_H, \ell)$ is an instance of  {\sc Prop-Col LSH}, we have that $c_G$ is a proper coloring of $G$. Now, consider an edge $\{u_i,v_j\}\in E(H_P)$. Then, $\{u,v\}\in E(H)$, and since $c_H$ is a proper coloring of $H$ (as $(G, H, c_G, c_H, \ell)$ is an instance of  {\sc Prop-Col LSH}), this means that $c_H(u)\neq c_H(v)$. By definition, $c_{H_P}(u_i)=c_{H}(u)$ and $c_{H_P}(v_i)=c_{H}(v)$, and therefore $c_{H_P}(u_i)\neq c_{H_P}(v_j)$. Thus, $c_{H_P}$ is a proper coloring of $H_P$. Lastly, consider some vertices $u\in V(G)$ and $v_i\in \ell_P(u)$. By the definition of $\ell_P$, we have that $v\in\ell(P)$. Therefore, as $(G, H, c_G, c_H, \ell)$ is an instance of  {\sc Prop-Col LSH}, $c_G(u)=c_H(v)$.  Thus, because $c_{H_P}(v_i)=c_H(v)$, we have that $c_G(u)=c_{H_P}(v_i)$.
    
Now, we consider the number of instances returned by the algorithm along with its running time. Towards this, first note that $|{\cal P}|=\binom{|V(G)|+|V(H)|-1}{|V(H)|-1} \leq 4^{n}$. As the number of returned instances equals $|{\cal P}|$, it is upper bounded by $2^{\OO(n)}$ as required. Because each instance is computed in polynomial time, we also get that the running time of the algorithm is bounded by $2^{\OO(n)}$.

Finally, we consider the correctness of the algorithm. In one direction, suppose that at least one of the returned instances is a \yes -instance of {\sc Prop-Col LSI}. Then, there exists $P\in{\cal P}$ such that $(G,H_{P},c_{G},c_{H_P},\ell_P)$ is a \yes -instance of {\sc Prop-Col LSI}. Thus, there exists a bijective function $\varphi_P: V(G)\rightarrow V(H_P)$ such that {\em (i)} for every $\{u,v\}\in E(G)$, $\{\varphi_P(u),\varphi_P(v)\}\in E(H_P)$, and {\em (ii)} for every $u\in V(G)$, $\varphi_P(u)\in \ell_P(u)$. We define a function $\varphi: V(G)\rightarrow V(H)$ as follows: for every $u\in V(G)$, let $\varphi(u)=v$ where $v\in V(H)$ is the vertex for which there exists $i\in\{1,2,\ldots,P[v]\}$ such that $\varphi_P(u)=v_i$. We now verify that $\varphi$ is a solution to the instance $(G, H,c_G, c_H, \ell)$ of  {\sc Prop-Col LSH}.
Firstly, by item {\em (i)} above, for every $\{u,v\}\in E(G)$, we have that $\{x_i,y_i\}\in E(H_P)$ where $x_i=\varphi_P(u)$ and $y_i=\varphi_P(v)$; by the definition of $H_P$, this means that $\{x,y\}\in E(H)$, and as $x=\varphi(u)$ and $y=\varphi(v)$ (by the definition of $\varphi$), we get that $\{\varphi(u),\varphi(v)\}\in E(H)$. Secondly, by item {\em (ii)} above, for every $u\in V(G)$, $v_i\in \ell_P(u)$ where $v_i=\varphi_P(u)$; by the definition of $\ell_P$, we have that $v\in\ell_P(u)$, and by the definition of $\varphi$, we have that $v=\varphi(u)$, therefore $\varphi(u)\in\ell(u)$. Thus, we conclude that $(G, H,c_G, c_H, \ell)$ is a \yes -instance of  {\sc Prop-Col LSH}.

In the other direction, suppose that $(G, H,c_G, c_H, \ell)$ is a \yes -instance of  {\sc Prop-Col LSH}. Then, there exists a function $\varphi: V(G)\rightarrow V(H)$ such that {\em (i)} for every $\{u,v\}\in E(G)$, $\{\varphi(u),\varphi(v)\}\in E(H)$, and {\em (ii)} for every $u\in V(G)$, $\varphi(u)\in \ell(u)$. Let $P$ be the vector with $|V(H)|$ entries where for each $i\in\{1,2,\ldots,|V(H)|\}$, $P[i]=|\varphi^{-1}(i)|$. Then, $\sum_{i=1}^{|V(H)|}P[i]=\sum_{i=1}^{|V(H)|}|\varphi^{-1}(i)|=|V(G)|$, and therefore $P\in{\cal P}$. Choose some arbitrary order $<$ on $V(G)$. Now, we define a function $\varphi_P: V(G)\rightarrow V(H_P)$ as follows: for every $u\in V(G)$, let $\varphi_P(u)=v_i$ where $v=\varphi(u)$ and $i=|\{w\in V(G): w\leq u, v=\varphi(w)\}|$. It should be clear that $\varphi_P$ is a bijection. Moreover, analogously to the previous direction, we assert that {\em (i)} for every $\{u,v\}\in E(G)$, $\{\varphi_P(u),\varphi_P(v)\}\in E(H_P)$, and {\em (ii)} for every $u\in V(G)$, $\varphi_P(u)\in \ell_P(u)$.  Thus, $(G,H_{P},c_{G},c_{H_P},\ell_P)$ is a \yes -instance of {\sc Prop-Col LSI}, which means that at least one of the returned instances is a \yes -instance of {\sc Prop-Col LSI}.
 \end{proof}

We are ready to complete the proof of Lemma \ref{res:PropColLSI}.

\begin{proof}[Proof of Lemma \ref{res:PropColLSI}]
Targeting a contradiction, suppose that there exists an algorithm, denoted by {\sf LSIAlg}, that solves {\sc Prop-Col LSI} in time $n^{o(n)}$ where where $n = \max(|V(G)|,|V(H)|)$ for input graphs $G$ and $H$. We will show that this implies the existence of an algorithm, denoted by {\sf LSHAlg}, that solves {\sc Prop-Col LSH} in time $n^{o(n)}$ where $n = \max(|V(G)|,|V(H)|)$ for input graphs $G$ and $H$, which contradicts Lemma \ref{res:PropColLSIH} and hence completes the proof.

The execution of {\sf LSHAlg} is as follows. Given an instance $(G,H,c_H,c_G,\ell)$ of {\sc Prop-Col LSH}, {\sf LSHAlg} calls the algorithm in Lemma \ref{lem:reduceLSHtoLSI} so that in time $2^{\OO(n)}$ it obtains $2^{\OO(n)}$ instances of {\sc Prop-Col LSI} having input graphs on at most $n$ vertices for $n := \max(|V(G)|,|V(H)|)$, such that $(G, H, c_G, c_H,\ell)$ is a \yes -instance of {\sc Prop-Col LSH} if and only if at least one of the returned instances is a \yes -instance of {\sc Prop-Col LSI}. Then, it calls {\sf LSIAlg} on each of the returned instances, and returns \yes\ if and only if at least one of these calls returns \yes. It should be clear that {\sf LSHAlg} runs in time $n^{o(n)}$ and that it is correct.
\end{proof}
%!tex root=mainClique.tex

\section{Lower Bounds for Contraction to Graph Classes Problems}\label{sec:contractionToClasses}

In this section, we prove lower bounds for several cases of the {\sc $\cal F$-Contraction} problem, defined as follows. Here, $\cal F$ is a (possibly infinite) family of graphs.

\defproblem{{\sc $\cal F$-Contraction}}{A graph $G$ and $t\in\mathbb{N}$.}{Does there exist a subset $F\subseteq E(G)$ of size at most $t$ such that $G/F\in {\cal F}$?}

Notice that {\sc Clique Contraction} is the case of {\sc $\cal F$-Contraction} where $\cal F$ is the family of cliques. In this section, we consider the cases of {\sc $\cal F$-Contraction} where $\cal F$ is the family of chordal graphs, interval graphs, proper interval graphs, threshold graphs, trivially perfect graphs, split graphs, complete split graphs and perfect graphs, also called  {\sc Chordal Contraction}, {\sc Interval Contraction}, {\sc Proper Interval Contraction}, {\sc Threshold Contraction}, {\sc Trivially Perfect Contraction}, {\sc Split Contraction}, {\sc Complete Split Contraction} and {\sc Perfect Contraction}, respectively. Before we define these classes formally, it will be more enlightening to first define only the class of chordal graphs as well as somewhat artificial classes of graphs that will help us prove lower bounds for many of the classes above in a unified manner.

\begin{definition}[{\bf Chordal Graphs}]
A  graph is {\em chordal} if it does not contain $C_\ell$ for all $\ell\geq 4$ as an induced subgraph.
\end{definition}

Our first class of graphs is defined as follows (see Fig.~\ref{fig:twoCliques}).

\begin{figure}[t]
 \center{\includegraphics[scale=0.6]{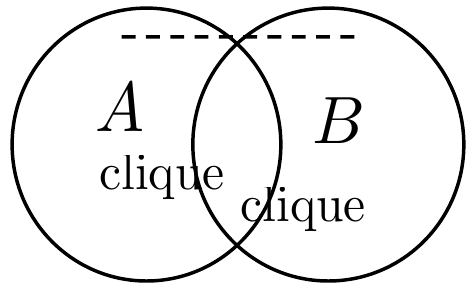}}
\caption{\label{fig:twoCliques} A two-cliques graph (see Definition \ref{def:twoCliues}).}
\end{figure}

\begin{definition}[{\bf Two-Cliques Graphs}]\label{def:twoCliues}
A {\em two-cliques graph} is a graph $G$ such that there exist $A,B\subseteq V(G)$ such that $A\cup B=V(G)$, $G[A]$ and $G[B]$ are cliques, and there do not exist vertices $a\in A\setminus B$ and $b\in B\setminus A$ such that $\{a,b\}\in E(G)$. 
The {\em two-cliques class} is the class of all two-cliques graphs.
\end{definition}

It should be clear that the two-cliques class is a subclass of the class of chordal graphs. Now, we further define a family of classes of graphs as follows.

\begin{definition}[{\bf Non-Trivial Chordal Class}]
We say that a class of graphs $\cal F$ is {\em non-trivial chordal} if it is a subclass of the class of chordal graphs, and a superclass of the two-cliques class.
\end{definition}

Clearly, the class of cliques is not a non-trivial chordal class, and the class of chordal graphs is a non-trivial chordal class. The rest of this section is divided as follows. First, in Section \ref{section:nonTrivChordal}, we prove a lower bound for any non-trivial chordal class. Then, in Section \ref{section:otherClasses}, we prove a lower bound for some graph classes that are not non-trivial chordal.

\subsection{Non-Trivial Chordal Graph Classes}\label{section:nonTrivChordal}

The main objective of this subsection is to prove the following theorem. Afterwards, we will derive lower bounds for several known graph classes as corollaries.

\begin{theorem}\label{thm:nonTrivChordal}
Let $\cal F$ be any non-trivial chordal graph class. Unless the ETH is false, there does not exist an algorithm that solves {\sc $\cal F$-Contraction} in time $n^{o(n)}$ where $n=|V(G)|$.
\end{theorem}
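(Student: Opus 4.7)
The plan is to reduce from {\sc Cross Matching} (Lemma~\ref{res:CrossMatching}). Given an instance $(L,A,B)$ of {\sc Cross Matching} with $|A|=|B|=n$, I construct an instance $(H,t)$ of {\sc $\mathcal{F}$-Contraction} on $O(n)$ vertices with budget $t=n$, such that $(L,A,B)$ is a \yes-instance if and only if $(H,t)$ is. An $n^{o(n)}$-time algorithm for {\sc $\mathcal{F}$-Contraction} would then contradict Lemma~\ref{res:CrossMatching}.

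For the construction, I would start from the graph built in Lemma~\ref{res:StructCliqueContraction}: retain $L$, add a fresh clique $K$ on $4n$ vertices partitioned as $(C,D)$ with $|C|=|D|=2n$, connect every $a\in A$ to every $c\in C$ and every $b\in B$ to every $d\in D$, and forbid the remaining cross-edges (no $A$-$D$ or $B$-$C$ edges). On top of this I would add a noise set $N$ tailored to $\mathcal{F}$. Since $\mathcal{F}$ contains the two-cliques class by assumption, a natural choice is to let $N$ be a fresh clique joined fully to $C$ (and to nothing else). The resulting instance is simultaneously an instance of {\sc Noisy Structured Clique Contraction} with parameters $(A,B,C,D,N,n)$. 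For the forward direction: given a cross matching $M$ of $L$, contracting $M$ in $H$ turns $(A\cup B)/M \cup C\cup D$ into a clique, while $N\cup C$ remains a clique with no edges to the rest; hence $H/M$ is a two-cliques graph and so belongs to $\mathcal{F}$.

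The backward direction is the crux. Assume $H/F\in\mathcal{F}$ for some $F\subseteq E(H)$ with $|F|\leq n$; since $\mathcal{F}$ is contained in the class of chordal graphs, $H/F$ is chordal. I aim to show that $F$ is in fact a solution to the associated {\sc Noisy Structured Clique Contraction} instance, i.e.~that $H[A\cup B\cup C\cup D\cup X]/F$ is a clique, whereupon Lemma~\ref{lem:propNoisyStructured} (applied as a black box, exactly as advertised in the introduction) yields that $F$ is a perfect matching between $A$ and $B$, giving a cross matching in $L$. The starting observation is that for every edge $\{a,b\}\in E(L)$ with $a\in A,b\in B$ and every choice of $c\in C, d\in D$, the quadruple $\{a,c,d,b\}$ induces a $C_4$ in $H$, since $\{a,d\}$ and $\{b,c\}$ are non-edges by construction. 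Chordality of $H/F$ forces each such $C_4$ to be destroyed, and a counting/exchange argument over the $\Theta(n^2)$ such $C_4$'s, with a budget of only $n$ contractions, should force the edges of $F$ to identify each $A$-vertex with exactly one $B$-vertex.

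The main obstacle will be ruling out creative alternative uses of the budget---in particular, contracting edges inside $K$, contracting long paths through $N$, or using $N$ to fabricate chords cheaply---and, more subtly, arguing that the contracted core is in fact a clique rather than merely chordal. This is exactly what the noise $N$ is engineered for: for every non-edge among the intended contracted core vertices, one would like to exhibit an induced cycle of length $\geq 4$ through $N$ witnessing that non-edge, so that chordality of $H/F$ is incompatible with the non-edge persisting after contraction. Combined with the forbidden $A$-$D$ and $B$-$C$ adjacencies, this forces the contracted core to be a clique, at which point Lemma~\ref{lem:propNoisyStructured} finishes the reduction. Corollaries for specific non-trivial chordal classes (chordal, interval, proper interval, threshold, trivially perfect, etc.) follow immediately by noting that each such class sits between the two-cliques class and the chordal class.
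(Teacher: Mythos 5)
Your overall framing is right: reduce from the cross-matching machinery, add noise, use Lemma~\ref{lem:propNoisyStructured} as a black box once you have shown that $H[A\cup B\cup C\cup D\cup X]/F$ is a clique, and obtain the corollaries from the two-cliques class being a subclass of each $\mathcal{F}$. (Reducing from {\sc Cross Matching} rather than from {\sc Structured Clique Contraction} is an inessential difference.) The gap is in the choice of noise and, consequently, in the argument that the contracted core is a clique.

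You attach a single clique $N$ joined only to $C$. That noise is essentially inert for the backward direction. The paper instead attaches \emph{two} cliques $K_1,K_2$, each joined to \emph{all} of $A\cup B\cup C\cup D$ but not to each other. This is what makes the backward direction a one-liner via Proposition~\ref{prop:chordal}: if $u,v$ are non-adjacent in the contracted core, then (after discarding the $\leq n$ noise vertices involved in $F$) there remain contracted vertices $k_1,k_2$, one from each $K_i$, that lie in $N_{H/F}(u)\cap N_{H/F}(v)$ and are non-adjacent to each other --- contradicting chordality of $H/F$. The full join to the core is exactly what guarantees $k_1,k_2$ are common neighbours of \emph{every} pair $u,v$, and the non-adjacency of $K_1$ and $K_2$ is what makes the contradiction.

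Your proposed replacement argument does not go through. You want, for every persisting non-edge $uv$ in the contracted core, an induced cycle of length $\geq 4$ through $N$ witnessing it. But $N$ is attached only to $C$, so a cycle through $N$ must enter and leave $N$ via two vertices of $C$, and any two vertices of $C$ are adjacent --- a chord. Hence there is \emph{no} induced cycle through $N$ that also touches $A\cup B$, and your noise cannot witness non-edges between vertices of $A\cup B$. The alternative you sketch --- the $C_4$'s $a$--$c$--$d$--$b$ with a counting argument over $\Theta(n^2)$ cycles --- is not carried out, and it is not clear it can be: those cycles only exist when $\{a,b\}\in E(L)$, so they constrain the cross-edges of $L$ but say nothing about forcing non-edges of the core to be filled; moreover a single contraction of an edge $\{a,c_0\}$ with $c_0\in C$ kills all such $C_4$'s through $a$ at once (the new vertex inherits $c_0$'s adjacency to $D$), so a naive count does not rule out spending the budget on $A$--$C$ contractions. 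You would need a genuinely new argument, and the simplest fix is to adopt the paper's noise: two cliques of size $2n$ each, fully joined to $A\cup B\cup C\cup D$ and not to each other, and then invoke Proposition~\ref{prop:chordal} exactly as above.
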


For the proof of this theorem, the following well-known property of chordal graphs will come in handy. This property is a direct consequence of the alternative characterization of the class of chordal graphs as the class of graphs that admit clique-tree decompositions, see~\cite{Golumbic:2004:AGT:984029}.

\begin{proposition}\label{prop:chordal}
Let $G$ be a chordal graph, and let $u$ and $v$ be two non-adjacent vertices in $G$. Then, $G[N(u)\cap N(v)]$ is a clique.
\end{proposition}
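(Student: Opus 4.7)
The plan is to argue by contradiction, exploiting the definition of chordality directly via a short induced cycle. Suppose that $G[N(u) \cap N(v)]$ is not a clique. Then there exist two distinct vertices $x, y \in N(u) \cap N(v)$ with $\{x,y\} \notin E(G)$. By definition of the common neighborhood, we have $\{u,x\}, \{x,v\}, \{v,y\}, \{y,u\} \in E(G)$, and so the sequence $u, x, v, y, u$ traces a cycle of length $4$ in $G$.

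Next I would observe that the only two potential chords of this $4$-cycle are $\{u,v\}$ and $\{x,y\}$. By hypothesis, $u$ and $v$ are non-adjacent, and by our choice of $x,y$, these two are non-adjacent as well. Hence the $4$-cycle has no chord, i.e.\ the set $\{u,x,v,y\}$ induces a copy of $C_4$ in $G$. This contradicts chordality, which forbids any induced $C_\ell$ for $\ell \geq 4$.

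The argument therefore concludes that every pair of vertices in $N(u) \cap N(v)$ must be adjacent, so $G[N(u) \cap N(v)]$ is a clique. There is no real obstacle here; the whole proof is a two-line application of the definition of a chordal graph to a $C_4$ generated by two independent common neighbors. In particular, we do not need to invoke the more powerful clique-tree characterization mentioned just before the proposition; the forbidden-induced-subgraph definition suffices.
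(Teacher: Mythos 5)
Your proof is correct, and it is a genuinely different (and more elementary) route than the one the paper gestures at. The paper does not actually write out a proof: it simply asserts that the proposition is ``a direct consequence of the alternative characterization of the class of chordal graphs as the class of graphs that admit clique-tree decompositions'' and cites Golumbic's book. Your argument instead works straight from the paper's stated definition of chordality (no induced $C_\ell$ for $\ell\geq 4$): two nonadjacent common neighbours $x,y$ of the nonadjacent pair $u,v$ would give the four-cycle $u$--$x$--$v$--$y$--$u$, whose only possible chords $\{u,v\}$ and $\{x,y\}$ are both absent by assumption, so $\{u,x,v,y\}$ induces a $C_4$ --- contradiction. This buys you a self-contained two-line proof requiring nothing beyond the forbidden-subgraph definition, at the cost of not illuminating the tree-decomposition picture. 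The paper's citation to clique trees is shorter on the page but imports a heavier characterization theorem; for the single fact needed here, your direct argument is arguably preferable.
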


We are now ready to prove Theorem \ref{thm:nonTrivChordal}.

\begin{proof}[Proof of Theorem \ref{thm:nonTrivChordal}.]
Targeting a contradiction, suppose that there exists an algorithm, denoted by {\sf NonTrivChordAlg}, that solves {\sc $\cal F$-Contraction} in time $n^{o(n)}$ where $n$ is the number of vertices in the input graph. We will show that this implies the existence of an algorithm, denoted by {\sf CliConAlg}, that solves {\sc Structured Clique Contraction} in time $n^{o(n)}$ where $n$ is the number of vertices in the input graph, thereby contradicting Lemma~\ref{res:StructCliqueContraction} and hence completing the proof.

\begin{figure}[t]
 \center{\includegraphics[scale=0.6]{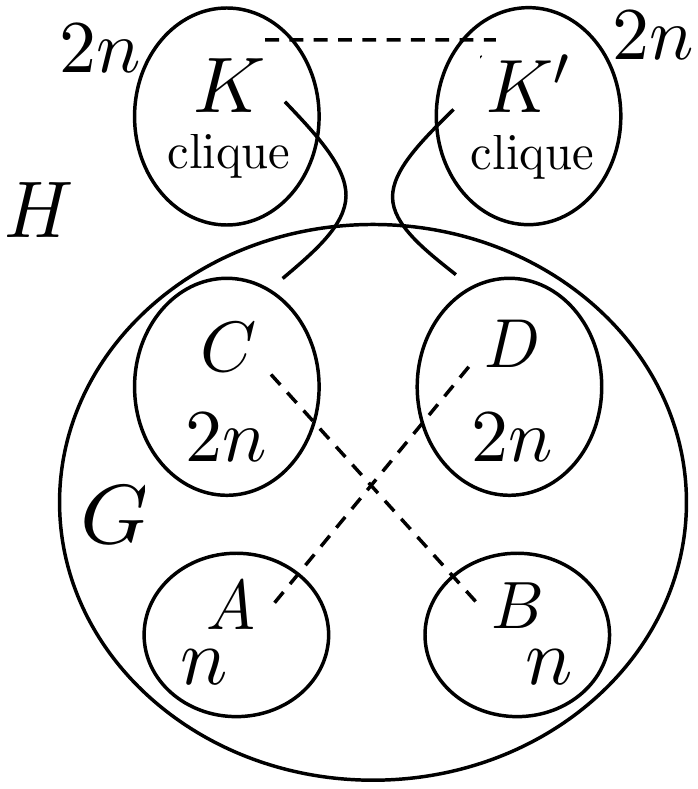}}
\caption{\label{fig:nonTrivChord} The construction of an instance of {\sc $\cal F$-Contraction} in the proof of Theorem~\ref{thm:nonTrivChordal} where dashed lines represent non-edges.}
\end{figure}

We define the execution of {\sf CliConAlg} as follows. Given an instance $(G,A,B,C,D,n)$ of {\sc Structured Clique Contraction}, {\sf CliConAlg} constructs an instance $(H,n)$ of {\sc $\cal F$-Contraction} as follows (see Fig.~\ref{fig:nonTrivChord}):
\begin{itemize}
\item Let $n=|A|$. Moreover, let $K$ and $K'$ be two cliques, each on $2n$ new vertices.
\item $V(H)=V(G)\cup V(K)\cup V(K')$.
\item $E(H)=E(G)\cup E(K)\cup E(K')\cup \{\{u,v\}: u\in V(G), v\in V(K)\cup V(K')\}$.
\end{itemize}
Then, {\sf CliConAlg} calls {\sf NonTrivChordAlg} with $(H,n)$ as input, and returns the answer of this call.

First, note that by construction, $|V(H)|=10n$. Thus, because {\sf NonTrivChordAlg} runs in time $|V(H)|^{o(|V(H)|)} \leq n^{o(n)}$, it follows that {\sf CliConAlg} runs in time $n^{o(n)}$.

For the correctness of the algorithm, first suppose that $(G,A,B,C,D,n)$ is a \yes -instance of {\sc Structured Clique Contraction}. This means that there exists a subset $F\subseteq E(G)$ of size at most $n$ such that $G/F$ is a clique. By the definition of $H$, we directly derive that $H/F$ is a two-cliques graphs, and therefore it belongs to $\cal F$.  Thus, $(H,n)$ is a \yes -instance of {\sc $\cal F$-Contraction}, which means that the call to {\sf NonTrivChordAlg} with $(H,n)$ as input returns \yes, and hence {\sf CliConAlg} returns \yes. 

Now, suppose that {\sf CliConAlg} returns \yes, which means that the call to {\sf NonTrivChordAlg} with $(H,n)$ returns \yes.  Thus, $(H,n)$ is a \yes -instance of {\sc $\cal F$-Contraction}, which means that there exists a subset $F\subseteq E(H)$ of size at most $n$ such that $H/F\in {\cal F}$. In particular, $H/F$ is a chordal graph. Based on Proposition~\ref{prop:chordal}, we will first show that $H[A\cup B\cup C\cup D\cup X]/F$ is a clique, where $X=\{u\in V(K)\cup V(K'):$ there exists a vertex $v\in A\cup B\cup C\cup D$ such that $u$ and $v$ belong to the same connected component of $H[F]\}$.

Targeting a contradiction, suppose that $H[A\cup B\cup C\cup D\cup X]/F$ is not a clique, and therefore there exist two non-adjacent vertices $u$ and $v$ in this graph. By the definition of $X$, $H[A\cup B\cup C\cup D\cup X]/F$ is equal to the subgraph of $H/F$ induced by the set of vertices derived from connected components that contain at least one vertex from $A\cup B\cup C\cup D$. In particular, $u$ and $v$ are also non-adjacent vertices in $H/F$. By Proposition~\ref{prop:chordal}, this implies that $(H/F)[N_{H/F}(u)\cap N_{H/F}(v)]$ is a clique. Let ${\cal C}_1$ (resp.~${\cal C}_2$) be the set of connected components of $H[F]$ that contain at least one vertex from $V(K_1)$ (resp.~$V(K_2)$). Because $|F|\leq n$ and $|V(K_1)|=|V(K_2)|=2n$, there exists at least one component $C_1\in{\cal C}_1$ (resp.~$C_2\in {\cal C}_2$) that does not contain any vertex from $A\cup B\cup C\cup D$. Let $c_1$ and $c_2$ be the vertices of $H/F$ yielded by the replacement of $C_1$ and $C_2$, respectively. As all vertices in $V(K_1)\cup V(K_2)$ are adjacent to all vertices in $A\cup B\cup C\cup D$, we have that $c_1,c_2\in N_{H/F}(u)\cap N_{H/F}(v)$. However, there do not exist a vertex in $V(K_1)$ and a vertex in $V(K_2)$ that are adjacent in $H$, and for every vertex in $V(K_1)\cup V(K_2)$, its neighborhood outside this set is contained in $A\cup B\cup C\cup D$. Thus, $c_1$ and $c_2$ must be non-adjacent in $H/F$. However, this is a contradiction to the argument that $(H/F)[N_{H/F}(u)\cap N_{H/F}(v)]$ is a clique. From this, we derive that $H[A\cup B\cup C\cup D\cup X]/F$ is indeed a clique.

Now, notice that $(H,A,B,C,D,N,n)$ where $N=V(K_1)\cup V(K_2)$ is an instance of {\sc Noisy Structured Clique Contraction}. Furthermore, since $|F|\leq n$ and we have already shown that $H[A\cup B\cup C\cup D\cup X]/F$ is a clique, we have that $F$ is a solution to this instance. Therefore, by Lemma~\ref{lem:propNoisyStructured}, $F$ is a matching of size $n$ in $H$ such that each edge in $F$ has one endpoint in $A$ and the other in $B$. In particular, $F\subseteq E(G)$ and hence $X=\emptyset$. Because $G=H[A\cup B\cup C\cup D]$, we thus derive that $G/F$ is a clique. Thus, we conclude that $(G,A,B,C,D,n)$ is a \yes -instance of {\sc Structured Clique Contraction}. This completes the proof of the reverse direction.
\end{proof}

Now, we give definitions for several classes of graphs for which lower bounds will follow from Theorem \ref{prop:chordal}. First, a graph is an {\em interval graph} if there exists a set of intervals on the real line such that the vertices of the graph are in bijection with these intervals,  and there exists edge between two vertices if and only if their intervals intersect. A graph is a  {\em proper interval graph} if, in the former definition, we also add the constraint that all intervals must have the same length. A graph is a {\em threshold graph} if it can be constructed from a one-vertex graph by repeated applications of the following two operations: addition of a single isolated vertex to the graph; addition of a single vertex that is connected to all other vertices. A graph is {\em trivially perfect} if in each of its induced subgraphs, the maximum size of an independent set equals the number of maximal cliques.

It is well-known that every graph that is a (proper) interval graph, or a threshold graph, or a trivially perfect graph, is also a chordal graph (see \cite{Golumbic:2004:AGT:984029}). Moreover, it is immediate to verify that the two-cliques class is a subclass of the classes of (proper) interval graphs, threshold graphs and trivially perfect graphs. Thus, these classes are non-trivial chordal graphs classes, and therefore Theorem \ref{thm:nonTrivChordal} directly implies lower bounds for them as state below. 

\begin{corollary}\label{cor:graphClasses}
Unless the ETH is false, none of the following problems admits an algorithm that solves it in time $n^{o(n)}$ where $n=|V(G)|$: {\sc Chordal Contraction}, {\sc Interval Contraction}, {\sc Proper Interval Contraction}, {\sc Threshold Contraction} and {\sc Trivially Perfect Contraction}.
\end{corollary}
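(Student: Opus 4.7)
My plan is to derive Corollary \ref{cor:graphClasses} as a direct application of Theorem \ref{thm:nonTrivChordal}. Since Theorem \ref{thm:nonTrivChordal} applies uniformly to every non-trivial chordal class, the proof reduces to verifying, for each of the five graph classes named in the statement, the two sandwich inclusions: containment in the class of chordal graphs (the upper bound), and containment of the two-cliques class (the lower bound). The chordal case itself is trivial (it is literally the containing class and obviously contains two-cliques), so only the four remaining classes demand real work.

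The upper inclusion is classical in each case and I would simply cite \cite{Golumbic:2004:AGT:984029}: intervals on the real line cannot realize an induced $C_\ell$ for $\ell \ge 4$, so (proper) interval graphs are chordal; trivially perfect graphs are exactly the $\{P_4, C_4\}$-free graphs and threshold graphs exactly the $\{2K_2, P_4, C_4\}$-free graphs, both of which rule out induced cycles of length at least four. For the lower inclusion, I would argue directly from Definition \ref{def:twoCliues}: in any two-cliques graph $G$ with witnessing sets $A, B$, every vertex of $A \cap B$ is adjacent to all of $A \cup B$ and is thus universal in $G$, while $A \setminus B$ and $B \setminus A$ induce two cliques joined by no edge. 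From this structural picture, (i) chordality is immediate, since any induced cycle would have to traverse the empty cut between $A \setminus B$ and $B \setminus A$ more than twice; (ii) trivial perfectness follows because every connected induced subgraph either lies inside a single one of the cliques $A$ or $B$ (where it has a universal vertex) or contains a vertex of $A \cap B$ (which is universal); and (iii) proper interval (hence interval) membership follows from claw-freeness---the neighborhood of any vertex is a union of at most two cliques and therefore contains no three pairwise non-adjacent vertices---combined with chordality, since a chordal claw-free graph is well-known to be proper interval.

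The one case I expect to require care is threshold graphs, and this is the main obstacle. A literal reading of Definition \ref{def:twoCliues} permits $A$ and $B$ to be disjoint, in which case the graph $2K_2$ itself qualifies as a two-cliques graph, yet $2K_2$ is a forbidden induced subgraph of every threshold graph. I would resolve this either by rereading the definition under the tighter convention suggested by Figure \ref{fig:twoCliques} that $A \cap B$ is nonempty and one of $A \setminus B, B \setminus A$ is empty---under this convention a two-cliques graph is constructible from a clique by appending isolated and then universal vertices, hence is threshold---or, failing that, by revisiting the reduction in the proof of Theorem \ref{thm:nonTrivChordal} to check that the $H/F$ actually produced on yes-instances meets the stronger structural requirement. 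With all five sandwich inclusions in hand, the corollary follows by invoking Theorem \ref{thm:nonTrivChordal} separately for each class.
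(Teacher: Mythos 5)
Your plan is exactly the paper's: verify that each class sits between the two-cliques class and the chordal graphs, then invoke Theorem~\ref{thm:nonTrivChordal} once for each. The inclusions for chordal, interval, proper interval, and trivially perfect graphs all go through as you say, and match the paper's (terse) justification.

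Your suspicion about threshold graphs is not a quibble to be resolved by rereading the definition --- it is a genuine error in the paper. The two-cliques class as written allows $A\cap B=\emptyset$, so $2K_2$ is a two-cliques graph, and $2K_2$ is one of the three forbidden induced subgraphs characterizing threshold graphs. Worse, neither of your proposed fixes would save the proof of Theorem~\ref{thm:nonTrivChordal}: under your ``one of $A\setminus B, B\setminus A$ is empty'' reading, $A\cup B=V(G)$ forces $G$ to be a single clique, collapsing the two-cliques class to something too weak to be useful; and if you inspect what the reduction actually produces on yes-instances, $H/F$ consists of the contracted clique on $V(G)$ together with the two $2n$-vertex cliques $V(K)$, $V(K')$ with no edges between them, so picking one edge from each of $V(K)$ and $V(K')$ yields an induced $2K_2$ in $H/F$. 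Thus $H/F$ is not a threshold graph and the forward direction of the reduction fails. The corollary is still true for threshold graphs, but it does not follow from Theorem~\ref{thm:nonTrivChordal}; it follows from the construction used for {\sc Split Contraction} (Theorem~\ref{thm:SplitContraction}), which attaches a large independent set $S$ joined completely to $V(G)$. There $H/F$ is a complete split graph on yes-instances, and a complete split graph is $\{2K_2, P_4, C_4\}$-free, hence threshold; in the reverse direction, threshold $\subseteq$ split means the backward analysis of Theorem~\ref{thm:SplitContraction} applies verbatim.
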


\subsection{Other Graph Classes}\label{section:otherClasses}

In Section \ref{sec:hadwiger}, we have already proved a lower bound for a class of graphs that is not non-trivial chordal, namely, the class of cliques. In this section, we show that our approach can yield lower bounds for other classes of graphs that are not non-trivially chordal. For illustrative purposes, we consider the classes of {\sc Split Graphs}, {\sc Complete Split Graphs} and {\sc Perfect Graphs}.

A graph $G$ is a {\em split graph} if there exists a partition $(I,K)$ of $V(G)$ such that $G[I]$ is edgeless and $G[K]$ is a clique. In case $\{\{i,k\}: i\in I, k\in K\}$, we further say that $G$ is a {\em complete split graph}. Notice  that the two-cliques class is not a subclass of the class of split graphs, and hence the class of (complete) split graphs is not non-trivially chordal.

For the class of (complete) split graphs, we prove the following statement.

\begin{theorem}\label{thm:SplitContraction}
Unless the ETH is false, there does not exist an algorithm that solves {\sc Split Contraction} (or {\sc Complete Split Contraction}) in time $n^{o(n)}$ where $n=|V(G)|$.
\end{theorem}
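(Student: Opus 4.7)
\textbf{Proof plan for Theorem \ref{thm:SplitContraction}.} The plan is to reduce from {\sc Structured Clique Contraction} (using Lemma~\ref{res:StructCliqueContraction}) and simultaneously handle both {\sc Split Contraction} and {\sc Complete Split Contraction} with a single construction. Given an instance $(G,A,B,C,D,n)$ of {\sc Structured Clique Contraction}, I construct a graph $H$ as follows: add an independent set $I$ of $5n$ new vertices, and make every vertex of $I$ adjacent to every vertex of the core $A\cup B\cup C\cup D$ (but not to other vertices of $I$ or to anything else). The output instance is $(H,n)$. The instance has $|V(H)|=11n$ vertices, so the assumed $n^{o(n)}$-time algorithm for {\sc Split Contraction} (resp. {\sc Complete Split Contraction}) would yield an $n^{o(n)}$-time algorithm for {\sc Structured Clique Contraction}, contradicting Lemma~\ref{res:StructCliqueContraction}.

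The forward direction is straightforward: if $F\subseteq E(G)$ is a solution with $|F|\leq n$ and $G/F$ is a clique, then in $H/F$ the contracted core is a clique, $I$ is still independent, and every vertex of $I$ is still adjacent to every vertex of the contracted core. Hence $H/F$ is a complete split graph, and in particular a split graph, so $(H,n)$ is a \yes-instance of both target problems.

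For the reverse direction, suppose $F\subseteq E(H)$ has $|F|\leq n$ and $H/F$ is a split graph (the weaker of the two conditions) with partition $(I',K')$ into an independent set and a clique. The goal is to verify that $(H,A,B,C,D,I,n)$ is a \yes-instance of {\sc Noisy Structured Clique Contraction} and invoke Lemma~\ref{lem:propNoisyStructured} as a black box, which will force $F$ to be a perfect matching between $A$ and $B$ contained in $E(G)$, so that $G/F$ is a clique. Since $I$ is independent in $H$, no edge of $F$ lies inside $I$, so every nontrivial connected component of $H[F]$ contains at least one core vertex. Furthermore, since $|F|\leq n$ touches at most $2n$ vertices of $I$, at least $3n$ vertices of $I$ remain untouched; because they form an independent set in $H/F$, at most one of them lies in $K'$, leaving at least $3n-1\geq 2$ of them in $I'$. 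Any uncontracted core vertex is adjacent in $H/F$ to all of these untouched $I$-vertices, so placing such a core vertex in $I'$ would introduce an edge inside $I'$; hence all uncontracted core vertices lie in $K'$. The same argument applies to every contracted vertex of $H/F$ that represents a component containing a core vertex (it inherits adjacency to all untouched $I$-vertices), so those vertices also lie in $K'$. This shows that $H[A\cup B\cup C\cup D\cup X]/F$, where $X$ is the ``noise-touched'' set from the definition of {\sc Noisy Structured Clique Contraction}, is an induced subgraph of the clique $K'$ and is therefore itself a clique. Applying Lemma~\ref{lem:propNoisyStructured} then yields that $F$ is an $A$-$B$ matching of size $n$; in particular $F\subseteq E(G)$ and $X=\emptyset$, so $G/F$ is a clique.

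\textbf{Main obstacle.} The delicate step is the reverse direction, specifically ruling out that vertices of the ``core'' end up on the independent-set side $I'$ of the split partition, and in parallel ruling out that contracted components could avoid $K'$ by mixing in many $I$-vertices. The key quantitative point is to choose $|I|$ large enough (any $|I|\geq 2n+2$ works, but $5n$ gives ample slack) that after $F$ touches at most $2n$ vertices of $I$, at least two untouched $I$-vertices still remain on the $I'$ side; the independence of $I$ in $H$ then forces every core-containing component of $H[F]$ to land in $K'$. Once this structural claim is established, the rest follows mechanically by invoking Lemma~\ref{lem:propNoisyStructured} in exactly the same way as in the proof of Theorem~\ref{thm:nonTrivChordal}, and the same construction handles {\sc Complete Split Contraction} as well because the forward direction already produces a complete split graph while the reverse direction only uses the weaker split property.
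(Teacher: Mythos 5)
Your construction and argument are essentially identical to the paper's: both add an independent set adjacent to all core vertices (the paper uses size $n+2$, you use $5n$, either works), argue that at least one untouched independent-set vertex must land on the independent side of the split partition and thereby force the contracted core into the clique side, and then invoke Lemma~\ref{lem:propNoisyStructured} as a black box. The proof is correct, with only a harmless overcount ($F$ touches at most $n$ vertices of $I$, not $2n$, since $I$ is independent).
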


\begin{proof}
Targeting a contradiction, suppose that there exists an algorithm, denoted by {\sf SplitAlg}, that solves {\sc Split Contraction} (or {\sc Complete Split Contraction}) in time $n^{o(n)}$ where $n$ is the number of vertices in the input graph. We will show that this implies the existence of an algorithm, denoted by {\sf CliConAlg}, that solves {\sc Structured Clique Contraction} in time $n^{o(n)}$ where $n$ is the number of vertices in the input graph, thereby contradicting Lemma~\ref{res:StructCliqueContraction} and hence completing the proof.

\begin{figure}[t]
 \center{\includegraphics[scale=0.6]{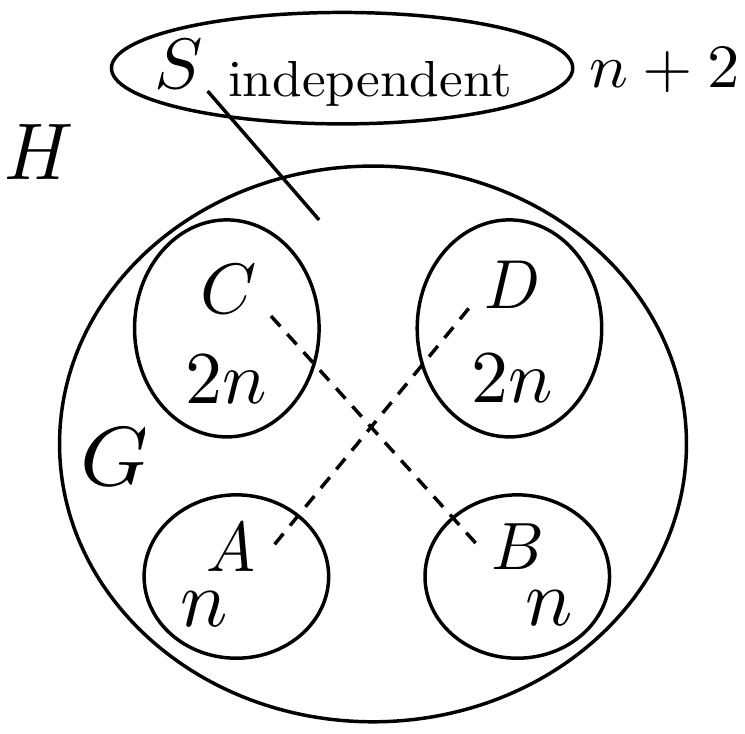}}
\caption{\label{fig:split} The construction of an instance of {\sc Split Contraction} in the proof of Theorem~\ref{thm:SplitContraction} where dashed lines represent non-edges.}
\end{figure}

We define the execution of {\sf CliConAlg} as follows. Given an instance $(G,A,B,C,D,n)$ of {\sc Structured Clique Contraction}, {\sf CliConAlg} constructs an instance $(H,n)$ of {\sc Split Contraction} (or {\sc Complete Split Contraction}) as follows (see Fig.~\ref{fig:split}):
\begin{itemize}
\item $V(H)=V(G)\cup S$ where $S$ is a set of $n+2$ new vertices.
\item $E(H)=E(G)\cup \{\{u,v\}: u\in V(G), v\in S\}$.
\end{itemize}
Then, {\sf CliConAlg} calls {\sf SplitAlg} with $(H,n)$ as input, and returns the answer of this call.

First, note that by construction, $|V(H)|=7n+2$. Thus, because {\sf SplitAlg} runs in time $|V(H)|^{o(|V(H)|)} \leq n^{o(n)}$, it follows that {\sf CliConAlg} runs in time $n^{o(n)}$.

For the correctness of the algorithm, first suppose that $(G,A,B,C,D,n)$ is a \yes -instance of {\sc Structured Clique Contraction}. This means that there exists a subset $F\subseteq E(G)$ of size at most $n$ such that $G/F$ is a clique. By the definition of $H$, we derive that $H/F$ is a complete split graph: $(S,V(G/F))$ is a partition of $V(H/F)$ where $S$ induces an independent set, $V(G/F)$ induces a clique, and every vertex in $S$ is adjacent to every vertex in $V(G/F)$.  Thus, $(H,n)$ is a \yes -instance of {\sc Complete Split Contraction} (as well as of {\sc Split Contraction}), which means that the call to {\sf SplitAlg} with $(H,n)$ returns \yes, and hence {\sf CliConAlg} returns \yes. 

Now, suppose that {\sf CliConAlg} returns \yes, which means that the call to {\sf SplitAlg} with $(H,n)$ returns \yes.  Thus, $(H,n)$ is a \yes -instance of {\sc Split Contraction} (even if {\sf SplitAlg} solves {\sc Complete Split Contraction}), which means that there exists a subset $F\subseteq E(H)$ of size at most $n$ such that $H/F$ is a split graph. Let $(I,K)$ be a partition of $V(H/F)$ into an independent set and a set of vertices that induce a clique.  Because $|S|=n+2$ and $H[S]$ is an independent set, there exist at least two vertices $s_1,s_2\in S$ that are not incident to any edge in $F$. As these vertices are not adjacent to one another in $H$, and because they are adjacent to all vertices in $V(G)$ (and hence to all vertices in $V(H/F)\setminus S$), it follows that $s_1,s_2\in I$ and $V(H/F)\setminus S\subseteq K$. In particular, $(H/F)[V(H/F)\setminus S]$ is a clique. Let $X=\{u\in S:$ there exists a vertex $v\in V(G)$ such that $u$ and $v$ belong to the same connected component of $G[F]\}$. Then, we have that $H[V(G)\cup X]/F$ is a clique.

Now, notice that $(H,A,B,C,D,S,n)$ is an instance of {\sc Noisy Structured Clique Contraction}. Furthermore, since $|F|\leq n$ and we have already shown that $H[A\cup B\cup C\cup D\cup X]/F$ is a clique, we have that $F$ is a solution to this instance. Therefore, by Lemma~\ref{lem:propNoisyStructured}, $F$ is a matching of size $n$ in $H$ such that each edge in $F$ has one endpoint in $A$ and the other in $B$. In particular, $F\subseteq E(G)$ and hence $X=\emptyset$. Because $G=H[A\cup B\cup C\cup D]$, we thus derive that $G/F$ is a clique. Thus, we conclude that $(G,A,B,C,D,n)$ is a \yes -instance of {\sc Structured Clique Contraction}. This completes the proof of the reverse direction.
\end{proof}

A graph $G$ is a {\em perfect graph} if the chromatic number of every induced subgraph of $G$ equals the size of the largest clique of that subgraph. Here, the chromatic number of a graph is the minimum number of colors required to color its vertices so that every pair of adjacent vertices are assigned different colors. For the class of perfect graphs, we prove the following statement.

\begin{theorem}\label{thm:PerfectContraction}
Unless the ETH is false, there does not exist an algorithm that solves {\sc Perfect Contraction} in time $n^{o(n)}$ where $n=|V(G)|$.
\end{theorem}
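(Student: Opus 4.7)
The proof reduces from {\sc Structured Clique Contraction}, following the template of Theorems~\ref{thm:nonTrivChordal} and~\ref{thm:SplitContraction}. From an instance $(G,A,B,C,D,n)$ of SCC we build, in polynomial time, an instance $(H,n)$ of {\sc Perfect Contraction} by attaching to $G$ a carefully designed noise gadget $N$ of size $\OO(n)$, and argue that $H/F$ is perfect for some $F\subseteq E(H)$ with $|F|\leq n$ if and only if $G/F$ is a clique. Since $|V(H)|=\OO(n)$, an $n^{o(n)}$ algorithm for {\sc Perfect Contraction} would yield an $n^{o(n)}$ algorithm for SCC, contradicting Lemma~\ref{res:StructCliqueContraction}.

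For the forward direction, we choose $N$ so that $H/F$ lands in a well-understood perfect subclass whenever $G/F$ is a clique. A natural starting point is the two-cliques construction used in the {\sc Chordal Contraction} proof---two cliques $K_1,K_2$ of size $\Theta(n)$ each, no edges between them, both universal to $V(G)$---enriched with a small collection of auxiliary pivot vertices. With this choice, the graph $H/F$ in the forward direction is chordal (in fact, a two-cliques graph up to the pivots), and hence perfect.

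The reverse direction is the crux of the proof and is where the argument genuinely departs from Theorems~\ref{thm:nonTrivChordal} and~\ref{thm:SplitContraction}. The characterization of perfect graphs via the strong perfect graph theorem---forbidden induced odd holes and odd antiholes---does not furnish a local structural witness analogous to Proposition~\ref{prop:chordal}, so we cannot directly conclude that the core of $H/F$ is a clique; consequently Lemma~\ref{lem:propNoisyStructured} cannot be applied as a black box, as was remarked in the Technical Details paragraph of the introduction. Instead, we calibrate the pivot vertices so that, whenever $G/F$ contains a non-adjacent pair $(u,v)$, one can exhibit an induced $C_5$ in $H/F$ passing through $u$, $v$, two ``pure'' representatives $c_1\in K_1$ and $c_2\in K_2$ (whose existence as vertices of $H/F$ untouched by $F$ is guaranteed because $|K_i|>|F|$), and a pivot vertex that supplies the asymmetric adjacencies needed to close the five-cycle. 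The existence of such an induced $C_5$ contradicts perfectness, so $G/F$ cannot contain a non-adjacent pair; Lemma~\ref{lem:propNoisyStructured} then forces $F$ to be a perfect matching between $A$ and $B$, so $G/F$ is indeed a clique.

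The main obstacle is the simultaneous calibration of the pivots. They must be rich enough that an induced $C_5$ is available for every non-adjacent pair $(u,v)$ that may arise in $G/F$---including pairs where one or both endpoints are contracted vertices coming from $A\cup B$---yet restrained enough that no induced odd hole or odd antihole is spuriously created in the forward direction, where the many new edges inside the clique $G/F$ must chord away every candidate bad cycle. Verifying both properties reduces to a finite case analysis of small induced subgraphs of $H/F$ that mix core vertices, pure representatives of $K_1$ and $K_2$, and pivots, exploiting the non-adjacencies $A\text{--}D$ and $B\text{--}C$ built into the definition of {\sc Structured Clique Contraction} to locate the asymmetric pivot adjacencies needed to realize $C_5$ in the reverse direction.
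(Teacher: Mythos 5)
Your reverse-direction argument has a concrete flaw. You propose to take two cliques $K_1,K_2$ that are both universal to $V(G)$ (the two-cliques construction of Theorem~\ref{thm:nonTrivChordal}), and then exhibit an induced $C_5$ ``passing through $u$, $v$, two pure representatives $c_1\in K_1$ and $c_2\in K_2$, and a pivot vertex.'' But if $c_1$ and $c_2$ are untouched by $F$ and both are universal to $V(G)$, then in $H/F$ both $c_1$ and $c_2$ are adjacent to \emph{both} $u$ and $v$ (since $u,v$ result from contracting sets of core vertices). Together with $c_1\not\sim c_2$ and $u\not\sim v$, the set $\{u,v,c_1,c_2\}$ already induces a $C_4$: it contains exactly the four edges $u{-}c_1{-}v{-}c_2{-}u$. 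Adding any fifth vertex $p$ gives at least $4+\deg_{\{u,v,c_1,c_2\}}(p)$ edges on five vertices, which cannot equal $5$ when the subgraph is to be a cycle on the same five vertices (an induced $C_5$ has exactly five edges and no vertex of degree $0$ or $4$ inside). So no such induced $C_5$ exists, and no choice of pivots can repair this while keeping $K_1,K_2$ universal to the core: the $C_4$ among $\{u,v,c_1,c_2\}$ is forced. A $C_4$ is of course perfect, so you obtain no contradiction, and the reverse direction collapses. This is exactly the point where perfect genuinely differs from chordal; Proposition~\ref{prop:chordal} would let you stop at the $C_4$, but for perfectness you must manufacture an odd hole, and universality of the noise to the core makes that impossible.

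The paper avoids this by a different gadget: rather than a noise clique universal to $V(G)$, it adds a clique $K=\{u':u\in V(G)\}$ of \emph{tagged copies}, where each $u'$ is adjacent to exactly one core vertex, namely $u$, plus an independent set $I$ of $n+1$ vertices universal to $V(G)$ but \emph{not} adjacent to $K$. Non-universality of $K$ to the core is the crucial asymmetry: for non-adjacent $u,v\in V(G/F)$ coming from contracting the matching edges at $x,y\in A\cup B$, the five vertices $i,u,x',y',v$ (with $i\in I$) do induce a $C_5$, because $i\not\sim x',y'$ and $u\not\sim y'$, $v\not\sim x'$. The paper also has to run a preliminary $C_5$ argument to first pin down that $F$ lives inside $A\cup A'\cup B\cup B'$ and is a perfect matching on $A\cup B$, \emph{before} the clique conclusion; it does not and cannot invoke Lemma~\ref{lem:propNoisyStructured} as a black box here, contrary to your plan, because the hypothesis of that lemma (that the core-plus-touched-noise contracts to a clique) is itself what the perfectness gadget must first be used to extract.
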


\begin{proof}
Targeting a contradiction, suppose that there exists an algorithm, denoted by {\sf PerfectAlg}, that solves {\sc PerfectContraction}  in time $n^{o(n)}$ where $n$ is the number of vertices in the input graph. We will show that this implies the existence of an algorithm, denoted by {\sf CliConAlg}, that solves {\sc Structured Clique Contraction} in time $n^{o(n)}$ where $n$ is the number of vertices in the input graph, thereby contradicting Lemma~\ref{res:StructCliqueContraction} and hence completing the proof.

\begin{figure}[t]
 \center{\includegraphics[scale=0.6]{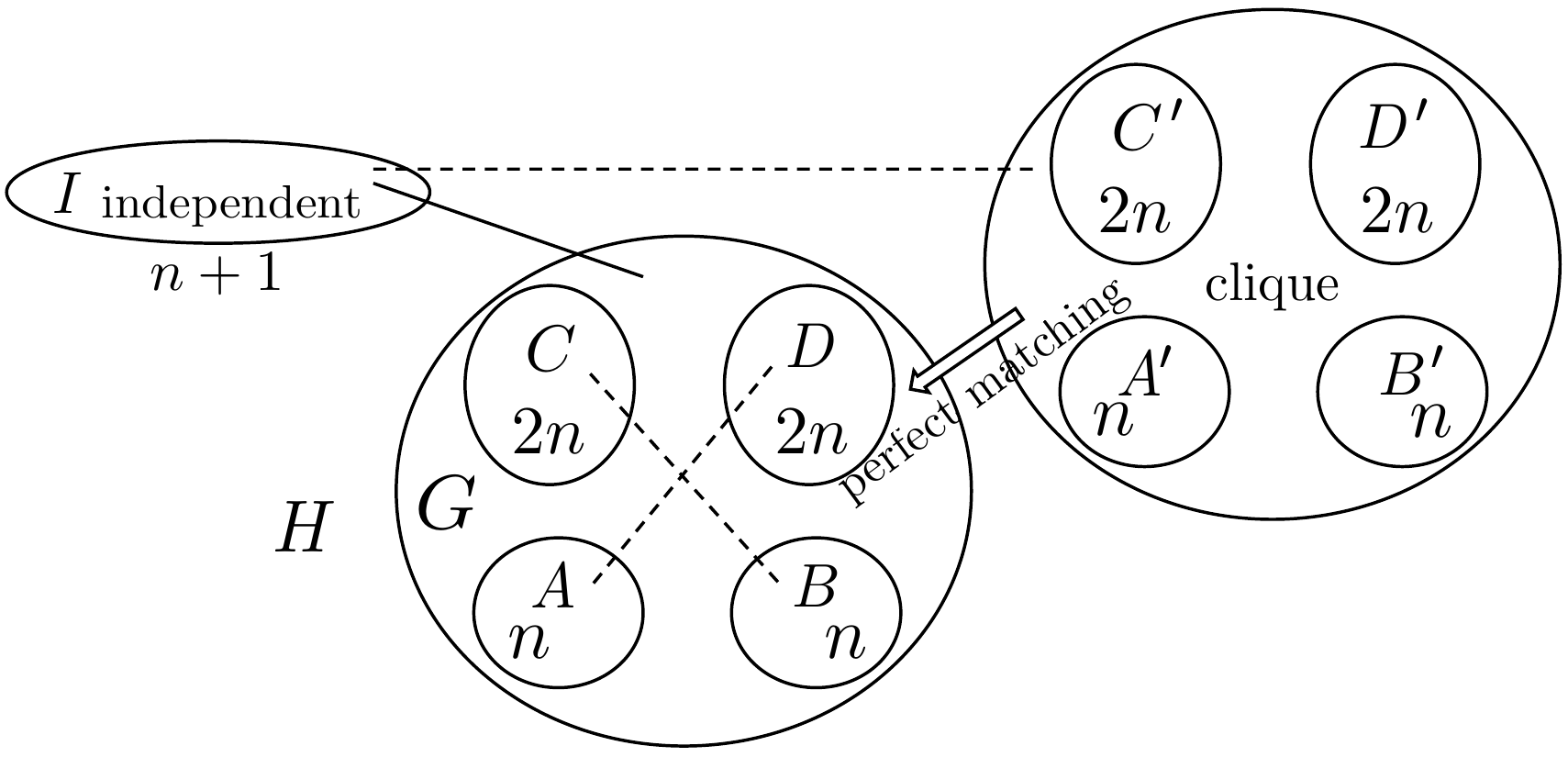}}
\caption{\label{fig:perfect} The construction of an instance of {\sc Perfect Contraction} in the proof of Theorem~\ref{thm:PerfectContraction} where dashed lines represent non-edges.}
\end{figure}

We define the execution of {\sf CliConAlg} as follows. Given an instance $(G,A,B,C,D,n)$ of {\sc Structured Clique Contraction}, {\sf CliConAlg} constructs an instance $(H,n)$ of {\sc Perfect Contraction}  as follows (see Fig.~\ref{fig:perfect}):
\begin{itemize}
\item Let $K=\{u': u\in V(G)\}$ where each element $u'$ is a new vertex referred to as the {\em tagged copy of $u$}. Additionally, let $I$ be a set of $n+1$ new vertices.
\item $V(H)=V(G)\cup K\cup I$.
\item $E(H)=E(G)\cup  \{\{u,u'\}: u\in V(G)\}\cup \{\{u',v'\}: u',v'\in K\}\cup\{\{u,i\}: u\in V(G), i\in I\}$.
\end{itemize}
Then, {\sf CliConAlg} calls {\sf PerfectAlg} with $(H,n)$ as input, and returns the answer of this call.

First, note that by construction, $|V(H)| \leq 13n+1$. Thus, because {\sf PerfectAlg} runs in time $|V(H)|^{o(|V(H)|)} \leq n^{o(n)}$, it follows that {\sf CliConAlg} runs in time $n^{o(n)}$.

In what follows, given a subset $U\subseteq V(G)$, we denote $U'=\{u'\in K: u\in U\}$. For the correctness of the algorithm, first suppose that $(G,A,B,C,D,n)$ is a \yes -instance of {\sc Structured Clique Contraction}. This means that there exists a subset $F\subseteq E(G)$ of size at most $n$ such that $G/F$ is a clique. Now, we will show that $H/F$ is a perfect graph. To this end, consider some induced subgraph $S$ of $H/F$. In case the maximum size of a clique in $S$ is $2$, then $S$ can contain at most four non-leaf vertices: at most two vertices from $K$ and at most two vertices from outside $K\cup I$ (because $H[V(G)]/F$ is a clique); then, it is trivial to color $S$ with number of colors equal to its maximum clique size---in fact, it is straightforward to verify that any graph on at most four vertices is perfect. Thus, in what follows, suppose that the maximum size of a clique in $S$ is at least $3$. Now, consider a clique $\widehat{C}$ of maximum size in $S$, and observe that it must either consist only of vertices in $K$ or of no vertex in $K$ (in which case it can contain at most one vertex from $I$). In the first case, color each vertex in $u'\in V(\widehat{C})$ by a distinct color, and note that all vertices in $V(S)\setminus V(\widehat{C})$ can be colored using the same set of colors so that a vertex and its tagged copy are assigned distinct colors. The second case is analogous. In either case, we obtain that the chromatic number of $S$ equals its maximum clique size. Thus, $(H,n)$ is a \yes -instance of {\sc Perfect Contraction}, which means that the call to {\sf PerfectAlg} with $(H,n)$ returns \yes, and hence {\sf CliConAlg} returns \yes. 

Now, suppose that {\sf CliConAlg} returns \yes, which means that the call to {\sf PerfectAlg} with $(H,n)$ returns \yes.  Thus, $(H,n)$ is a \yes -instance of {\sc Perfect Contraction}, which means that there exists a subset $F\subseteq E(H)$ of size at most $n$ such that $H/F$ is a perfect graph. We first argue that there does not exist a vertex $a\in A\cup B$ such that neither $a$ nor $a'$ is incident to at least one edge in $F$. Targeting a contradiction, suppose that there exists $a\in A\cup B$ such that neither $a$ not $a'$ is incident to at least one edge in $F$. Assume that $a\in A$ as the other case is symmetric. Because $|F|\leq n$ and $|D|=2n$, there either exists a vertex $d\in D$ such that neither $d$ nor $d'$ is incident to at least one edge in $F$, or $F$ is a perfect matching in either $G[D]$ or $G[D']$, where in the latter case we let $d$ denote some arbitrarily chosen vertex from $D$. Additionally, since $I$ is an independent set of size $n+1$, there exists a vertex $i\in I$ that is not incident to any edge in $F$. Now, consider the cycle $i-a-a'-d'-d-i$ (on five vertices)  in $H$. This cycle is an induced cycle in $H$, because no vertex in $A$ is adjacent to any vertex in $D$, and by the construction of $H$, $i$ is not adjacent to $a'$ and $d'$, $a$ is not adjacent to $d'$ and $a'$ is not adjacent to $d$. Furthermore, as $i,a$ and $a'$ are not incident to any edge in $F$, and if any of $d$ and $d'$ is incident to an edge in $F$, then $F$ is a perfect matching in either $G[D]$ or $G[D']$, we obtain that $i-a-a'-\widehat{d}-\widehat{d}'-i$ is an induced cycle (on five vertices)  in $H/F$, where $\widehat{d}$ and $\widehat{d}'$ are the vertices yielded by the replacement of the connected components of $H[F]$ that contain $d$ and $d'$, respectively, if such components exist (otherwise, $\widehat{d}=d$ and $\widehat{d}'=d'$). However, an induced cycle on five vertices has chromatic number $3$ and maximum clique size $2$, thus we derive a contradiction to the supposition that $H/F$ is perfect.

So far, we derived that there does not exist a vertex $a\in A\cup B$ such that neither $a$ nor $a'$ is incident to at least one edge in $F$. As $|F|\leq n$ and $|A|=|A'|=|B|=|B'|=n$, this means that every edge in $F$ has both endpoints in $A\cup A'\cup B\cup B'$ and that for each $u\in A\cup B$, exactly one vertex among $u$ and $u'$ is incident to an edge in $F$. Now, we will show that each vertex $a\in A\cup B$ is incident to at least one edge in $F$. Targeting a contradiction, suppose that there exists a vertex $a\in A\cup B$ that is not incident to any edge in $F$. Assume that $a\in A$, as the other case is symmetric. Denote $i,d,d',\widehat{d}$ and $\widehat{d}'$ as before, and again consider the induced cycle $i-a-a'-d'-d-i$ in $H$. Unlike before, now $a'$ belongs to some connected component of $H[F]$, yet we know that this connected component consists only of $a'$ and some vertex in $B'$. Let $\widehat{a}'$ be the vertex yielded by the replacement of this component. As no vertex in $B'$ is adjacent to any vertex in $I\cup D$, we again have that  $i-a-\widehat{a}'-\widehat{d}'-\widehat{d}-i$ is an induced cycle in $H/F$, which gives rise to a contradiction. Thus, as $|F|\leq n$ and $|A|=|B|=n$, we know that $F$ is a perfect matching in $G[A\cup B]$.

Next, we will show that $G/F$ is a clique. This will imply that $(G,A,B,C,D,n)$ is a \yes -instance of {\sc Structured Clique Contraction} and thereby complete the proof. Targeting a contradiction, suppose that $G/F$ is not a clique, and therefore there exist two non-adjacent vertices $u$ and $v$ in $G/F$. As $F$ is a matching in $G[A\cup B]$, we can let $x$ and $y$ be two vertices in $A\cup B$ that belonged to the connected components of $H[F]$ that yielded $u$ and $v$, respectively.  Notice that the only vertex in $A\cup B$  adjacent to $x'$ is $x$, and the analogous claim holds for $y'$ and $y$. As $F$ is a matching in $G[A\cup B]$ that does not match $x$ and $y$ (since otherwise $u$ and $v$ would not be distinct vertices), we have that neither  $u$ is adjacent to $y'$ in $H/F$ nor $v$ is adjacent to $x'$ in $H/F$. From this, by the construction of $H$ and since $F$ is a matching in $G[A\cup B]$, we immediately derive that $i-u-x'-y'-v-i$ is an induced cycle in $H/F$ where $i$ is some arbitrarily chosen vertex from $I$. However, as before, the existence of such a cycle contradicts the supposition that $H/F$ is a perfect graph. Thus, the proof of the reverse direction is complete.
\end{proof}

\end{document}